\newcommand{\thickhline}{%
    \noalign {\ifnum 0=`}\fi \hrule height 1.5pt
    \futurelet \reserved@a \@xhline
}
\date{}
\definecolor{mygreen}{RGB}{20,120,60}
\title{Beating $(1-1/e)$-Approximation for Weighted Stochastic Matching
}
\author{ Mahsa Derakhshan\thanks{Northeastern University. Email: \texttt{m.derakhshan@neu.edu}} \and Alireza Farhadi\thanks{Carnegie Mellon University. Email: \texttt{farhadi94@gmail.com}}}
\newcommand{\E}[0]{\ensuremath{\mathbb{E}}}
\newcommand{\opt}[0]{\ensuremath{\textsc{OPT}}}
\newcommand{\ubar}[0]{\ensuremath{\sigma}}
\newcommand{\gp}[0]{\ensuremath{G_{\textbf{p}}}}
\newcommand{\sol}[0]{\ensuremath{\mathcal{M}}}
\newcommand{\W}[1]{\ensuremath{\mathcal{W}(#1)}}
\newcommand{\inm}[1]{\ensuremath{Z}}
\newcommand{\match}{\mathcal{M}}
\newcommand{\wdotx}{\omega}
\newcommand{\optlp}{\opt_{\text{LP}}}
\newcommand{\apximprove}{0.0014}
\newcommand{\curly}{\mathrel{\leadsto}}
\newcommand{\PAM}{Alg1}
\newcommand{\emax}{\Delta}
\newcommand{\gm}{\varphi}
\newcommand{\odv}{\mathrm{d}}
\renewcommand{\b}[1]{\ensuremath{\bm{\mathrm{#1}}}}
\renewcommand{\epsilon}[0]{\ensuremath{\varepsilon}}
\let\originalleft\left
\let\originalright\right
\renewcommand{\left}{\mathopen{}\mathclose\bgroup\originalleft}
\renewcommand{\right}{\aftergroup\egroup\originalright}
\newtheorem{lemma}{Lemma}[section]
\newtheorem{definition}[lemma]{Definition}
\newtheorem{claim}[lemma]{Claim}
\newtheorem{fact}[lemma]{Fact}
\newtheorem{observation}[lemma]{Observation}
\def\thm@space@setup{%
  \thm@preskip= 0.2cm
  \thm@postskip=\thm@preskip 
}
\definecolor{mygreen}{RGB}{20,155,20}
\definecolor{myred}{RGB}{195,20,20}
\definecolor{linkcolor}{RGB}{0,0,230}
\definecolor{mylightgray}{RGB}{230,230,230}
\definecolor{verylightgray}{RGB}{240,240,240}
\definecolor{commentcolor}{RGB}{120,120,120}
\newcommand{\etal}[0]{\textit{et al.}}
\newcounter{myalgctr}
\newenvironment{tbox}{
\par\addvspace{0.2cm}
\begin{tcolorbox}[width=\textwidth,
                  enhanced,
                  boxsep=2pt,
                  left=1pt,
                  right=1pt,
                  top=4pt,
                  boxrule=1pt,
                  arc=0pt,
                  colback=white,
                  colframe=black,
                  unbreakable
                  ]
}{
\end{tcolorbox}
}
\newenvironment{tboxh}{
\par\addvspace{0.2cm}
\begin{tcolorbox}[width=\textwidth,
                  enhanced,
                  boxsep=2pt,
                  left=1pt,
                  right=1pt,
                  top=4pt,
                  boxrule=1pt,
                  arc=0pt,
                  colback=white,
                  colframe=black,
                  unbreakable,
                  float=t
                  ]
}{
\end{tcolorbox}
}
\newenvironment{graytbox}{
\par\addvspace{0.1cm}
\begin{tcolorbox}[width=\textwidth,
                  enhanced,
                  frame hidden,
                  boxsep=5pt,
                  left=1pt,
                  right=1pt,
                  top=2pt,
                  bottom=2pt,
                  boxrule=1pt,
                  arc=0pt,
                  colback=mylightgray,
                  colframe=black,
                  breakable
                  ]
}{
\end{tcolorbox}
}
\newcommand{\tboxhrule}[0]{\vspace{0.1cm} \hrule \vspace{0.2cm}}
\newenvironment{titledtbox}[1]{\begin{tbox}#1 \tboxhrule}{\end{tbox}}
\newenvironment{titledtboxh}[1]{\begin{tboxh}#1 \tboxhrule}{\end{tboxh}}
\newenvironment{tboxalg2e}[1]{
\refstepcounter{myalgctr}
	\begin{titledtbox}{\textbf{Algorithm \themyalgctr.} #1}
	\vspace{-0.2cm}
}
{
	\vspace{-0.3cm}
	\end{titledtbox}
}
\begin{document}
\maketitle

\thispagestyle{empty}

In the {\em stochastic weighted matching} problem, the goal is to find a large-weight matching of a graph when we are uncertain about the existence of its edges. In particular, each edge $e$ has a known weight $w_e$ but is realized independently with some probability $p_e$. The algorithm may {\em query} an edge to see whether it is realized. We consider the well-studied {\em query commit} version of the problem, in which any queried edge that happens to be realized must be included in the solution.

\smallskip
Gamlath, Kale, and Svensson [SODA'19] showed that when the input graph is bipartite, the problem admits a  $(1-1/e)$-approximation. In this paper, we give an algorithm that for an absolute constant $\delta > \apximprove$ obtains a $(1-1/e+\delta)$-approximation, therefore breaking this prevalent bound.

{

\hypersetup{
     linkcolor= black
}
\newpage
\thispagestyle{empty}
\tableofcontents{}
\clearpage
}

\setcounter{page}{1}

\section{Introduction}


Matching under uncertainty is a well-studied topic in theoretical computer science, primarily due to applications in matching markets such as kidney exchange, labor markets, online advertising, etc. (See~\cite{DBLP:conf/stoc/TangW020, DBLP:conf/focs/FeldmanMMM09, DBLP:conf/focs/BehnezhadD20, DBLP:conf/soda/Singla18} and the references within.)  In this paper, we study the max-weight \emph{stochastic matching} problem on bipartite graphs in the \emph{query commit} model. We are given a bipartite graph $G = (A\cup B, E)$ along with a weight $w_e$ and a parameter $p_e \in (0,1]$ for any edge $e\in E$. Each edge of this graph is realized (or exists) independently, with a known probability $p_e$ forming the realized subgraph  $G_p$. We are unaware of the realization of the edges but can access them via queries. Our goal is to find a large-weight matching of $G_p$ in the query commit model in which any edge joins the matching if and only if it is queried and realized. In other words, to add an edge to the matching, we have to ensure that it exists; thus, we have to query it. Moreover, if a queried edge is realized, we have to add it to our matching (i.e., commit to it). This also implies that if an edge has at least one matched end-point it cannot be queried. 


In this paper, we focus on designing an approximation algorithm for the above problem, where the approximation ratio is defined with respect to the max-weight matching found by an algorithm that knows the realization of all the edges. For this problem, a $0.5$-approximation can be achieved via the greedy matching algorithm by simply iterating over the edges in the decreasing order of their weights and querying any edge with unmatched end-points. The first non-trivial algorithm for this problem designed by Gamlath, Kale, and Svensson~\cite{DBLP:conf/soda/GamlathKS19} achieves a  $(1-1/e)$ approximation ratio, a very familiar bound that holds for many matching algorithms in various settings (e.g.,~\cite{DBLP:conf/soda/GoelM08,  BDM, DBLP:conf/stoc/TangW020, DBLP:journals/corr/abs-2002-06037, DBLP:conf/soda/GoelM08}). It is also known that $(1-1/e)$ is the best approximation ratio achievable by Gamlath et al.'s algorithm~\cite{DBLP:journals/corr/abs-2002-06037}, which remains to be the state of the art for this problem.  In this paper, we provide a novel algorithm, breaking this $(1-1/e)$ barrier.

\vspace{3mm}
\begin{graytbox}
\noindent \textbf{Main Theorem.} (See Theorem~\ref{thm:main1}) 
There exists a polynomial-time algorithm that, given any stochastic weighted bipartite graph, finds a matching with an approximation ratio of at least $1-1/e+\apximprove$ in the query commit model.
\end{graytbox}
\vspace{3mm}

After Gamlath et al.'s $(1-1/e)$-approximation algorithm, Tang, Wu, and Zhang~\cite{DBLP:conf/stoc/TangW020, DBLP:journals/corr/abs-2002-06037}\footnote{ \cite{DBLP:journals/corr/abs-2002-06037} is an extension of~\cite{DBLP:conf/stoc/TangW020} published separately by the same authors. 
} showed that this approximation ratio can also be extended to the more general \emph{oblivious matching} problem using a random greedy algorithm.
Unlike our problem, in which edges exist independently with known probabilities, in the oblivious matching problem, an oblivious adversary decides which edges exist\footnote{The adversary is called oblivious since they determine which edges exist, beforehand, oblivious to the randomization of the algorithm.}. 
A natural question here is whether there is a separation between the two models when it comes to weighted bipartite graphs. In this work, we take a step toward answering this question as we break the $(1-1/e)$ barrier for the stochastic matching problem while the best-known approximation ratio for the oblivious matching problem remains to be $(1-1/e)$.


\subsection{Our Techniques}\label{section:tecnique}

In this section, we will give a brief overview of our ideas and techniques which allow us to design an algorithm with an approximation factor larger than $(1-1/e)$. Our algorithm is LP-based. We start with an LP   which we borrow from Gamlath et al.~\cite{DBLP:conf/soda/GamlathKS19}. This LP gives us an upper-bound for the optimal solution of the problem. We then design a rounding procedure that given the optimal solution of this LP, decides which edges to query and in which order. In this section, we first describe the LP. We then briefly explain how Gamlath et al. round the solution of this LP in order to find a $(1-1/e)$-approximate matching. Finally, we discuss our rounding procedure which results in beating the $(1-1/e)$-approximation.

\paragraph{The LP.} We will use an LP designed by Gamlath et al. The LP has a variable $x_e$ for any edge, and its solution forms a fractional matching. Besides the natural matching constraints, the LP also has a special constraint for any subset of edges sharing an end-point to ensure that the probability of at least one of them being realized is not smaller than the total fraction of matching on them (Constraint~\eqref{cons:437489} below). The LP clearly has exponentially many constraints, but Gamlath et al. show that it is polynomially solvable using the ellipsoid method. Below we provide the formulation of this LP in which $E_u$ denotes the subset of edges that have $u$ as one of their end-points.

\vspace{2mm}
\newlength{\LPlhbox}
\settowidth{\LPlhbox}{(LP-Match)}%
\noindent%
\begin{minipage}{\linewidth-2cm}
 \begin{align}
\nonumber \max_{\mathbf{x}} \qquad &\b{x} \cdot \b{w}	\,& \\
\textrm{s.t.} \qquad & \sum_{e \in F} x_e \le \Pr[\text{an edge in $F$ exists}]& \forall u \in A \cup B,  \forall F \subseteq E_u  \label{cons:437489}\\
&x_e \ge 0& \forall e \in E 
\end{align}
\end{minipage}
\vspace{10mm}

Now, let us briefly explain the $(1-1/e)$-approximation algorithm  of Gamlath et al.~\cite{DBLP:conf/soda/GamlathKS19}.  In the rest of the paper, we will refer to this algorithm as \GKS (The authors' initials). 
\paragraph{Gamlath et al.'s rounding procedure.} Based on the solution of the LP, they draw a random ordering over the edges satisfying some special properties. They then iterate over the edges in this order and decide whether to query them and subsequently add them to the matching or lose them forever.  The query decisions are made dynamically and they consist of each vertex in part $B$ of the graph running a prophet-secretary based algorithm~\cite{DBLP:conf/soda/EhsaniHKS18} to decide which one of its edges should be queried. Due to making irrevocable query decisions, the algorithm may choose not to query some of the edges while both their end-points are unmatched at the end of the algorithm.
 
 To prove that \GKS finds a a $(1-1/e)$-approximate matching, they show that the expected weight of the matching edge connected to any vertex in set $u\in B$ is at least $(1-1/e)$ fraction of the total contribution of its edges to the optimal solution of the LP. That is  $$\sum_{e\in E_u}\Pr[\GKS  \text{ adds } e \text{ to the matching}]\cdot w_e \geq (1-1/e) \sum_{e\in E_u} x_ew_e.$$  
 Later, Fu et al.~\cite{DBLP:conf/icalp/FuTWWZ21} observe that replacing the prophet-secretary based approch with an OCRS algorithm makes it possible for each edge to join the matching with probability at least $(1-1/e)x_e$.

\paragraph{Our rounding procedure.} Our rounding procedure consists of two steps. We first design a rounding procedure, which, given the optimal solution of the LP, returns a $(1-1/e)$-approximate matching and prove some new desirable properties for it. We call this procedure \PAM. We then use this as a subroutine in our final algorithm which beats $(1-1/e)$-approximation. Below are the four important properties. 
\begin{enumerate}[label=(P\arabic*)] 
\item \label{prop:two} For any vertex $u\in B$, let $\ubar_u$ denote the probability with which this vertex is matched in the optimal solution of the LP. That is $\ubar_u=\sum_{e\ni u} x_e$. One of the desirable properties of \PAM is that if $\ubar_u$ is smaller than one by a constant, then  its edges will have a higher chance of joining the matching in comparison to the case of $\ubar_u=1$ (by a constant fraction). As a result of this property, if $\ubar_u$ is small enough for all the vertices in $u\in B$, \PAM itself achieves an approximation ratio larger than $(1-1/e)$.	
	\item \label{prop:three} After running \PAM, any vertex of the graph remains unmatched with a constant probability. Similarly, any edge does not join the matching with a probability of at least $\Omega(x_e)$. We will later explain why this property is helpful.
 	\item  \label{prop:four}  For any edge $e=(v,u)$, we want to ensure that conditioned on that $e$ is not queried by \PAM, there is a constant probability of both its end-points being unmatched. Property~\ref{prop:three} already implies that $u$ and $v$ remain unmatched with a large probability.  This property, however, is about the correlation between these events and is the most challenging  one to prove.
 We will use this property to show that for some instances of the problem running $\PAM$ for a second time on the remaining edges results in an improved approximation ratio.
\end{enumerate}

In order to achieve these properties, \PAM takes the optimal solution of the LP, makes several changes to it and then feeds it to a modified version of \GKS (which assumes all the edges have the same weight). One of these changes relies on a transformation function. This function takes $\b{x}$, the optimal solution of the LP and outputs $\tilde{\b{x}}$ which is also a solution of the LP. For any edge $e$, we set $\tilde{x}_e=g(x_e, \ubar)$ where $\ubar$ is a (carefully picked) constant no smaller than $\max_{u\in B} \sum_{e\ni u} x_e$, and

$$g(x, \ubar) = \frac{(e^\ubar-1)(\ubar  -x) x}{\ubar (e^\ubar- e^x)}.$$
It is not hard to see that for any $x_e$, this transformation results in $x_e \geq \tilde{x}_e$, therefore $\tilde{\b{x}}$ will still be a valid solution of the LP.    See Appendix~\ref{sec:transform} for more details about this function and the ideas behind choosing it. 

We will use properties \ref{prop:three} and \ref{prop:four} to prove that after running \PAM, any edge $e$ with a small enough $\tilde{x}_e/p_e$ (smaller than a constant) remains \emph{available} with a constant probability. We say an edge is available if it is not queried and its end-points are unmatched. Roughly speaking, for any edge $e$ satisfying this constraint, property \ref{prop:three} of \PAM implies that the edge is not queried with a constant probability and any of its end-points are also unmatched with a constant probability. The challenging part, however, is to prove that despite potential correlations, the probability of all three events happening together is still a constant. Therefore, the edge remains available with a constant probability. We use this feature of \PAM in a crucial way in order to break the $(1-1/e)$ approximation ratio. The reason being that if edges with a small enough $\tilde{x}_e/p_e$ constitute a constant fraction of the optimal solution of the LP, then we can run \PAM for a second time on the remaining available edges and use its output to enhance the matching.

\paragraph{Beating $(1-1/e)$-approximation}
Now, let us briefly explain how we use the aforementioned properties  of \PAM to design another algorithm with an improved approximation ratio. We claim that for any instance of the problem, rounding the optimal solution of the LP in at least one of the two following ways results in an improved approximation ratio. \begin{enumerate*}[label=(\roman*)]
 \item simply run \PAM once on the whole graph and once on the remaining available edges.
 \item   Prune the graph by deleting any edge with $\tilde{x}_e/p_e<\tau$   for a carefully picked threshold $\tau$ (which is a constant smaller than one), and then run \PAM on the remaining subgraph.
 \end{enumerate*}
 
As we already discussed, we prove that edges with  $\tilde{x}_e/p_e<\tau$ will remain available with a constant probability after we run \PAM. Therefore, if at least a constant fraction of the optimal solution of the LP comes from these edges, then the first rounding achieves an improved approximation ratio. On the flip side, if these edges do not have a large contribution, by deleting them we are left with a solution of the LP which is close to optimal. We now claim that simply running \PAM on the remaining subgraph gives us a matching with an improved approximation ratio. Roughly speaking, we are able to take advantage of constraint \eqref{cons:437489} of the LP, in order to show the following for any vertex $u$:  The sum of ${x}_e$ of edges connected to $u$ that satisfy  $\tilde{x}_e/p_e\geq \tau$ are upper-bounded by a small enough constant. This implies that in the remaining subgraph, $\sigma_u$ of all the vertices in bounded away from one by a constant. Invoking property~\ref{prop:two} of \PAM, implies that simply running it on this remaining subgraph results in an improved approximation ratio. Using  constraint \eqref{cons:437489} of the LP in this way is a technically challenging part of the paper which we discuss in Section~{\ref{section:approx-ratio}}.

\subsection{Further Related Work}
The stochastic matching problem in the query commit model was first considered   by  Chen et al. \cite{DBLP:conf/icalp/ChenIKMR09}, who study the problem under the additional constraint that for any vertex a limited number of its edges can be queried. In this setting, they provide a $0.25$-approximation algorithm for bipartite unweighted graphs. After a series of works~\cite{DBLP:journals/algorithmica/BansalGLMNR12,DBLP:journals/algorithmica/BavejaCNSX18, DBLP:journals/corr/abs-1110-0990}, this approximation ratio was improved to $0.5$ for unweighted graphs~\cite{DBLP:journals/ipl/Adamczyk11}, $0.39$ for bipartite weighted graphs~\cite{DBLP:journals/corr/abs-2010-08142}, and $0.269$ for general wighted graphs~\cite{DBLP:journals/corr/abs-2010-08142}. (In all  these works, the approximation ratio is defined with respect to the online benchmark.)  Costello et al.~\cite{DBLP:conf/icalp/CostelloTT12} are the first to consider the query commit model without the above-mentioned extra constraint. In this setting, they provide a $0.573$-approximation algorithm for general unweighted graphs. Subsequently, the problem was studied for general wighted graphs with the best known approximation ratio being $0.567$ due to~\cite{DBLP:conf/icalp/FuTWWZ21}.

A problem, closely connected to the one studied in this paper, is the \emph{online bipartite matching} problem first introduced by Karp, Vazirani, and Vazirani~\cite{DBLP:conf/stoc/KarpVV90}. Here, vertices in one part of the graph are present from the beginning while vertices on the other part arrive online. Whenever an online vertex arrives, its edges are revealed, and the algorithm has to decide whether to match the vertex with an unmatched neighbor, irrevocably, or lose it forever. For this problem, Karp et al. present their celebrated ranking algorithm which achieves a tight $(1-1/e)$ approximation ratio in the standard adversarial model. Later, Mahdian and Yan~\cite{DBLP:conf/stoc/MahdianY11} show that if the online vertices arrive in a random order, the ranking algorithm achieves an approximation ratio of at least $0.69$. On the other hand,  Karande et al.~\cite{DBLP:conf/stoc/KarandeMT11} provide an upper-bound of $0.727$ for the ranking algorithm in the random arrivals model. Since ranking is a random greedy algorithm, it can also be used for the problem studied in this paper achieving an approximation ratio of  $0.69$ for unweighted graphs.  This is indeed the best-known approximation ratio for the unweighted version of the bipartite stochastic matching problem in the query commit model (the unweighted version of our problem).

Our paper also relates to a line of work initiated by Blum et al.~\cite{DBLP:journals/ior/BlumDHPSS20} which concerns the query complexity of the stochastic matching problem. In this model, the goal is to find a large matching of the graph while having only a constant number of queries per vertex. Here, the matching does not have to commit to the queried edges that are realized.  After a long line of work on this problem, (See~\cite{DBLP:conf/soda/YamaguchiM18, DBLP:conf/sigecom/BehnezhadR18, DBLP:conf/stoc/BehnezhadDH20} and the references within.)  Behnezhad et al.~\cite{DBLP:conf/focs/BehnezhadD20}   prove that it is possible to  achieve a $(1-\epsilon)$ approximation ratio for general weighted graphs.

Another model which is closely related to the one considered in this paper, is the price of information model (POI). It was first introduced by Singla~\cite{DBLP:conf/soda/Singla18} and subsequently studied in~\cite{DBLP:conf/icalp/FuTWWZ21, DBLP:conf/soda/GamlathKS19}.  In POI, (similar to  the query commit model), to know the edge weights one needs to query them, however, instead of having to commit to the realized edges, they need to pay a cost for each query. Moreover, edge weights come from discrete distributions. The best known approximation ratio for matching in the POI model is a $(1-1/e)$ by Gamlath et al. ~\cite{DBLP:conf/soda/GamlathKS19}. Due to their similar nature,  the POI model and the query commit model are often studied together. In Appendix~\ref{sec:poi} we have a discussion about extending our ideas to the POI model in order to improve over the  $(1-1/e)$ approximation ratio.

\section{Preliminaries}
We consider the stochastic matching problem on weighted bipartite graphs in the query-commit model. Our input is a  bipartite graph $G=(A \cup B,E) $, a vector of weights $\b{x} = (x_e)_{e\in E}$, and vector of probabilities  $\b{p} = (p_e)_{e\in E}$. Each edge exists or is realized with probability $p_e$ forming subgraph $\gp$. We refer to $\gp$ as the realized subgraph.  We do not know which edges are realized but can access them via queries.  We can query any edges $e\in E$ to see whether it exists or not, and if it does, we must add it to the matching irrevocably. In other words, our matching must commit to the edge.  Our goal is to design a polynomial-time algorithm which finds a matching $\sol$ of $\gp$ with a large approximation ratio. The approximation ratio of the algorithm is defined as
$$\frac{\E[\W{\sol}]}{\E[\W{\opt}]},$$
where $\opt$ is the maximum weight matching of $\gp$ (i.e., the optimal solution) and  $\W{.}$ is a function outputting the weight of a given matching. The expectation here is taken over both the randomization in the algorithm and realization of edges.

\paragraph{General notations.} For any $v\in A\cup B$ we will use $E_v$ to denote the subset of edges that have $v$ as one of their end-points. Unless otherwise stated, we will refer to vertices of $B$ using the letter $u$ and vertices of $A$ using the letter $v$. Moreover, we will write our edges as ordered pairs, meaning that for any edge $(v,u)\in  E$ we always have $v \in A$ and $u \in B$.

 \paragraph{Organization of the paper.} The rest of the paper is organized as follows. In Section~\ref{section:LP} we explain the LP which we borrow from Gamlath et al. This LP gives us an upper-bound on the optimal solution of the LP. In Section~\ref{section:rounding}, we first design \PAM, a rounding procedure which given the optimal solution of the LP outputs a $(1-1/e)$-approximate matching. Later in the section, we design \FNM which uses \PAM to find a matching with an approximation ratio of $(1-1/e+\apximprove)$. In Section~\ref{section:properties}, we prove some desirable properties of our \PAM, and finally, in Section~\ref{section:approx-ratio} we use these properties to analyze the approximation ratio of our algorithm.


\section{The Linear Program}\label{section:LP}
In this section, we describe our algorithm which consists of two parts. The first part is an LP which gives an upper-bound on the optimal solution of the problem. We then design a rounding procedure that given any solution of this LP, decides which edges to query and in which order. The LP which we borrow from~\cite{DBLP:conf/soda/GamlathKS19} is as follows.

\setcounter{equation}{0}
\vspace{3mm}
\settowidth{\LPlhbox}{(LP-Match)}%
\noindent%
\parbox{\LPlhbox}{\begin{align}
               \tag{LP-Match}\label{LP-Match}
         \end{align}}%
\hspace*{\fill}%
\begin{minipage}{\linewidth-2cm}
 \begin{align}
\nonumber \max_{\mathbf{x}} \qquad &\b{x} \cdot \b{w}	\,& \\
\textrm{s.t.} \qquad & \sum_{e \in F} x_e \le \Pr[\text{an edge in $F$ exists}]& \forall u \in A \cup B,  \forall F \subseteq E_u \label{cons:krfjerkf} \\
&x_e \ge 0& \forall e \in E 
\end{align}
\end{minipage}
\vspace{5mm}

In the LP above $x_e$ can be perceived as the probability of $e$ being in the optimal matching. For a vertex $u$ and any subset $F \subseteq E_u$,  $\sum_{e \in F} x_e$ is the probability that at least one of the edges in $F$ is in the optimal solution. Hence,  $\sum_{e \in F} x_e$ should be bounded by the probability that at least one of the edges in $F$ is exists. Since each edge  $e\in F$ is realized independently with probability $p_e$, we have $\Pr[\text{an edge in $F$ exists}] = 1- \prod_{e \in F} (1-p_e)$. 

\begin{lemma}[\cite{DBLP:conf/soda/GamlathKS19}]
	The optimal solution of \ref{LP-Match} is an upper-bound for \opt. Moreover, this solution can be found in polynomial time.
\end{lemma}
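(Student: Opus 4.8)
The plan is to prove the two parts separately: first that $\optlp \geq \E[\W{\opt}]$, and second that \ref{LP-Match} can be solved in polynomial time despite having exponentially many constraints.

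For the first part, I would exhibit a feasible point of \ref{LP-Match} whose objective value equals $\E[\W{\opt}]$. The natural candidate is $x_e := \Pr[e \in \opt]$, where the probability is over the realization of $G_p$ and (if $\opt$ is not unique) any fixed tie-breaking rule. The objective value is then $\sum_e w_e \Pr[e \in \opt] = \E[\sum_{e \in \opt} w_e] = \E[\W{\opt}]$ by linearity of expectation, so it remains only to check feasibility. Nonnegativity is immediate. For Constraint~\eqref{cons:krfjerkf}, fix $u \in A \cup B$ and $F \subseteq E_u$. Since $\opt$ is a matching, at most one edge of $F$ lies in $\opt$, so the events $\{e \in \opt\}$ for $e \in F$ are disjoint and $\sum_{e \in F} x_e = \Pr[\text{some edge of } F \text{ is in } \opt]$. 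But an edge can only be in $\opt$ if it is realized, so this probability is at most $\Pr[\text{some edge of } F \text{ is realized}] = 1 - \prod_{e \in F}(1 - p_e)$, which is exactly the right-hand side. This establishes feasibility and hence $\optlp \geq \E[\W{\opt}]$.

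For the second part, I would appeal to the equivalence of separation and optimization via the ellipsoid method: it suffices to give a polynomial-time separation oracle for \ref{LP-Match}. Given a candidate $\b{x} \geq 0$, for each vertex $u \in A \cup B$ we must decide whether there exists $F \subseteq E_u$ with $\sum_{e \in F} x_e > 1 - \prod_{e \in F}(1-p_e)$, i.e., whether $\max_{F \subseteq E_u}\bigl(\sum_{e \in F} x_e + \prod_{e \in F}(1-p_e)\bigr) > 1$. The key observation is that this maximization decomposes nicely: sort the edges incident to $u$ and note that $\prod_{e \in F}(1-p_e)$ depends only on which edges are included, so one can solve it greedily or by a simple combinatorial argument (e.g., an edge $e$ should be added to $F$ precisely when doing so increases the expression, and these decisions turn out to be monotone after an appropriate ordering — this is the structural fact Gamlath et al.\ use). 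Evaluating the oracle over all $u$ takes polynomial time, and if no violated constraint is found the point is feasible; otherwise we return the violated inequality. Combined with the ellipsoid method this yields a polynomial-time algorithm for \ref{LP-Match}.

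The main obstacle is the separation-oracle analysis in the second part: one has to argue that the subset-selection problem $\max_{F \subseteq E_u}\bigl(\sum_{e \in F} x_e + \prod_{e \in F}(1-p_e)\bigr)$ is tractable, which requires the right structural insight (the product term is supermodular-ish and interacts with the linear term in a way that admits a greedy/threshold solution). The first part is essentially a one-line expectation argument once the tie-breaking is pinned down. Since this lemma is quoted directly from~\cite{DBLP:conf/soda/GamlathKS19}, I would keep the proof brief, sketching the feasibility argument in full and citing~\cite{DBLP:conf/soda/GamlathKS19} for the polynomial-time separability.
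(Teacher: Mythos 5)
Your proposal is correct and follows essentially the same route the paper takes: the paper's own treatment of this lemma consists of the feasibility argument you give in your first part (setting $x_e = \Pr[e \in \opt]$, using disjointness of the events $\{e \in \opt\}$ for $e \in F \subseteq E_u$ and the fact that an edge in $\opt$ must be realized to verify Constraint~\eqref{cons:krfjerkf}), together with a citation to Gamlath et al.\ for polynomial solvability via the ellipsoid method and a per-vertex separation oracle. Your sketch of that oracle (the optimal violating set $F$ is a prefix under a suitable ordering, e.g.\ by $x_e/p_e$) is the right structural fact, and deferring its details to the cited work is exactly what the paper does.
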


A nice property of this LP which is crucial for the $(1-1/e)$-approximation algorithm of Galamth~\etal{} is as follows. For any vertex $v$, it is possible to randomly permute its edges such that for any edge $e$ the probability of it being the first realized edge in the permutation is no smaller than $x_e$. To make this an exact equality, sometimes this permutations needs to be over a subset of $E_v$ instead of the whole set.  Below, we formally state this property.

\begin{definition}[proportional distributions] \label{def:irifrjf}
Let {\b x} be a solution of \ref{LP-Match} (not necessarily the optimal solution). We say a distribution $\mathcal{D}^{\b x}_v$ over permutations of subsets of $E_v$  is proportional to $\b{x}$ if  for any edge $e\in N_v$ the probability that $e$ is outputted by the following algorithm (Algorithm~\ref{alg:def}) is exactly equal to $x_e$.
\end{definition}

\begin{tboxalg2e}{}
\begin{algorithm}[H] \label{alg:def}
    \DontPrintSemicolon
        Let $\pi_v\sim \mathcal{D}^{\b x}_v$.\;
        \ForEach{$e$ permuted in the order of  $\pi_v$}{
        Query edge e.\;
        \If{$e$ is realized}
        {Output $e$ and end the algorithm.}
        }
\end{algorithm}
\end{tboxalg2e}

\begin{lemma}[\cite{DBLP:conf/soda/GamlathKS19}]\label{lemma:dist}
	Let {\b x} be a solution of \ref{LP-Match} (not necessarily the optimal solution). For any vertex $v$, there exists a a distribution $\mathcal{D}^{\b x}_v$ over permutations of subsets of $E_v$ that is proportional to $\b{x}$. Moreover, it is possible to draw a permutation from this distribution in polynomial time.
	
	\end{lemma}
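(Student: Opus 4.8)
The plan is to construct the distribution $\mathcal{D}^{\b x}_v$ explicitly and recursively on $|E_v|$. Fix $v$ and write $E_v = \{e_1, \dots, e_k\}$ with probabilities $p_i := p_{e_i}$ and LP-values $x_i := x_{e_i}$. The idea is to decide which edge sits in the first position of the permutation, then recurse on the remaining edges with a suitably updated solution. First I would choose the first edge $e_i$ with a probability $q_i$ proportional to $x_i/p_i$; more precisely, set $q_i = \lambda \cdot x_i / p_i$ for a normalizing scalar $\lambda \le 1$ chosen so that $\sum_i q_i \le 1$ (we allow $\sum_i q_i < 1$, in which case with the remaining probability the permutation is empty — this is exactly why Definition~\ref{def:irifrjf} permits permutations of \emph{subsets} of $E_v$). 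The choice $\lambda$ should be taken as large as possible, i.e. $\lambda = \min\{1,\ (\sum_i x_i/p_i)^{-1}\}$, so that as much mass as possible is placed on a nonempty first edge.

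With the first edge drawn as $e_i$, its probability of being output by Algorithm~\ref{alg:def} is just $q_i \cdot p_i = \lambda x_i$, so if $\lambda = 1$ we already get exactly $x_i$ and we recurse on $E_v \setminus \{e_i\}$ with the same values $x_j$ for $j \ne i$. The key point is that $(x_j)_{j \ne i}$ is still a feasible solution of \ref{LP-Match} restricted to $E_v \setminus \{e_i\}$: every constraint \eqref{cons:krfjerkf} for a subset $F \subseteq E_v \setminus \{e_i\}$ was already present in the original LP. For an edge $e_j$ with $j \ne i$, the probability it is output equals $q_i$ times (the probability it is output in the recursive instance on $E_v\setminus\{e_i\}$) summed over all first-edge choices $e_i \ne e_j$, plus the contribution from runs where $e_j$ itself is not the first edge drawn but the recursion still reaches it — organizing this bookkeeping is where I expect the real work to be. The cleanest way is to prove by induction the stronger statement: for the instance on a set $S \subseteq E_v$ with feasible values $(x_e)_{e \in S}$ satisfying $\sum_{e \in S} x_e/p_e \ge 1$ whenever we are in the "$\lambda < 1$" regime, the constructed recursive procedure outputs each $e \in S$ with probability exactly $x_e$. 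The base case $|S| = 1$ is immediate since then $x_e \le p_e$ by constraint \eqref{cons:krfjerkf} with $F = \{e\}$, so $q_e = x_e/p_e \le 1$ is a valid probability and the output probability is $q_e p_e = x_e$.

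The main obstacle, and the place to be careful, is verifying feasibility is preserved under the recursion in the regime where we \emph{do} renormalize or where an edge is consumed: after fixing $e_i$ as first and observing it is \emph{not} realized (probability $1 - p_i$), we recurse on $E_v \setminus \{e_i\}$, but the \emph{target} output probabilities for the remaining edges in that branch must be chosen so that, averaged against all the branches, edge $e_j$ ends up with total probability exactly $x_j$. This forces the recursive targets to be something like $x_j' = x_j / (\text{mass of branches that can still reach } e_j)$, and one must check this rescaled vector still satisfies all subset constraints \eqref{cons:krfjerkf} — this uses that $\Pr[\text{an edge in } F \text{ exists}]$ for $F \subseteq E_v \setminus \{e_i\}$ and the independence of realizations interact correctly with the $(1-p_i)$ factor. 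I would handle this by tracking, for each remaining edge, the exact probability that Algorithm~\ref{alg:def} has "not yet terminated and not yet skipped past $e_j$," and showing the invariant $\sum_{e \in S} x_e^{(t)}/p_e \le 1$ (or the analogous normalized form) is maintained; this is the combinatorial heart of the argument. Polynomial running time is then clear: the recursion has depth at most $|E_v|$, at each level we compute a discrete distribution over at most $|E_v|$ choices from the $x_e, p_e$ values, and sampling one step is trivial, so a full permutation is drawn in $\poly(|E_v|)$ time.
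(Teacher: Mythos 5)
This lemma is not proved in the paper at all --- it is imported from Gamlath, Kale, and Svensson --- so the only question is whether your sketch would stand on its own, and it would not. The acknowledged "bookkeeping" step is not a technicality you can defer: it is the entire content of the lemma, and your concrete first-step rule already makes it unprovable. Take $E_v=\{e_1,e_2\}$ with $p_1=1$, $x_1=0.3$, $p_2=1/2$, $x_2=1/2$; this is feasible for \eqref{cons:krfjerkf} since $x_2\le p_2$ and $x_1+x_2\le 1-(1-p_1)(1-p_2)=1$. Your rule puts $e_1$ first with probability $q_1=\lambda x_1/p_1>0$ (here $\lambda=1/1.3$), and whenever $e_1$ is first it is realized (as $p_1=1$) and Algorithm~\ref{alg:def} terminates, so $e_2$ can only ever be output from the branch where it is first, giving it output probability $q_2p_2=\frac{1}{2}\cdot\frac{1}{1.3}<x_2$. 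No choice of recursive targets in the subproblems can recover this deficit, so exactness fails; the correct distribution here must place $e_2$ first with probability $1$, which shows the first-position marginal cannot in general be proportional to $x_e/p_e$. Your proposed repair ("rescale $x_j$ by the mass of branches that can still reach $e_j$") is also circular, since that mass depends on the distribution you are in the middle of constructing, and the invariant $\sum_e x_e/p_e\le 1$ you suggest maintaining is neither implied by feasibility nor the relevant condition.

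The missing idea is structural rather than recursive. The function $f(F)=\Pr[\text{an edge in }F\text{ exists}]=1-\prod_{e\in F}(1-p_e)$ is monotone and submodular with $f(\emptyset)=0$, so the constraints \eqref{cons:krfjerkf} at a single vertex $v$ describe a polymatroid. Its vertices are exactly the greedy points: for a permutation $\pi$ of a subset $S\subseteq E_v$, the vertex assigns edge $e_i$ the marginal $f(\{e_1,\dots,e_i\})-f(\{e_1,\dots,e_{i-1}\})=p_{e_i}\prod_{j<i}(1-p_{e_j})$, which is precisely the probability that $e_i$ is the first realized edge when Algorithm~\ref{alg:def} queries along $\pi$. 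Hence any feasible $\b{x}$ restricted to $E_v$ is a convex combination of such vertices, i.e.\ a distribution over permutations of subsets that is proportional to $\b{x}$ in the sense of Definition~\ref{def:irifrjf}, and a Carath\'eodory-type decomposition of $\b{x}$ into polynomially many such vertices can be computed in polynomial time, which gives the sampling claim. If you want to salvage a constructive/recursive proof, you should derive the first-position marginals from this polymatroid structure (tight sets of $f$) rather than from the ratio $x_e/p_e$.
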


\section{The Rounding Procedure}\label{section:rounding}

Our rounding procedure consists of two parts. First is an algorithm which outputs a $(1-1/e)$ approximate matching and has some new desired properties. We will refer to this as \PAM. The second part is using \PAM in order to design  another rounding procedure with approximation ratio larger than $(1-1/e)$.

Before presenting \PAM, we first explain a a simplified version of it which outputs a $(1-1/e)$ approximate matching for unweighted graphs. (This is also a simplified version of \GKS.) Let us start by providing a definition which we will use in the rest of the paper.

\begin{definition}[Examining an edge $e$]
We examine an edge to determine whether it is realized or not. If both its end-points are unmatched, examining $e$ is equivalent to querying it. It is always guaranteed that in this case, $e$ joins the matching if it is realized. 
  
At some points in our algorithm, we also need to use the information about realization of an edge who has at least one matched end-point. In this scenario, we cannot query the edge because it cannot join the matching. Instead, we just toss a coin which with probability $p_e$ decides that the edge is realized. In such cases, it is guaranteed that the edge will not join the matching. 
\end{definition}

\SetKwFunction{SPM}{SimpleMatching}

\paragraph{The simplified algorithm.} This algorithm is stated formally as \SPM  below. In this algorithm, we first draw a permutation $\pi$ over vertices in $A$ uniformly at random. Going over $A$ in this order, for any vertex $v\in A$, we draw a permutation $\pi_v$ over its edges from $\mathcal{D}^{\b x}_v$, a distribution proportional to $\b{x}$.  (See Definition~\ref{def:irifrjf}.) We then examine edges of $A$ in this order and stop when one of them is realized. (In none of them is realized we do nothing.) Let us denote this edge by $e=(v,u)$. Vertex $v$ then sends a proposal to $u$. If $u$ is unmatched, it accepts the proposal and $e$ joins the matching. If $u$ is already matched, it rejects the proposal and $v$ remains unmatched forever. The algorithm then proceeds to repeat this for the next vertex in permutation $\pi$.

\begin{tboxalg2e}{A $(1-1/e)$-approximation algorithm for unweighted graphs.}
\begin{algorithm}[H]\label{alg:simple}
    \SetKwProg{Fn}{Function}{:}{}	
    \DontPrintSemicolon
    \Fn{\SPM{$G=(A, B, E), \b x$}}{
    Let $\match=\emptyset$ be a matching of $E$.\;
    Let $\pi$ be a permutation over vertices in $A$ chosen uniformly at random. \;
\ForEach {$v$ in the order of $\pi$}{
        Let $\mathcal{D}^{\b x}_v$ be a distribution over permutations of subsets of $E_v$ proportional to \b{x}.\;
        Let $\pi_v\sim \mathcal{D}^{\b x}_v$.\;
        \ForEach{$e=(v,u)$ permuted in the order of  $\pi_v$}{
        Examine edge $e$.\\
        \If{$e$ is realized}{ Vertex $v$ sends a proposal to $u$.\\
        If $u$ is unmatched, it accepts the proposal and $e$ joins the matching.\\
        Terminate the loop.}
        }
        }
        \Return $\match$.
        }
\end{algorithm}
\end{tboxalg2e}
\vspace{4 mm}

We claim that \SPM{$G, \b x$} matches any vertex $u\in B$  with probability at least $$(1-1/e)\sum_{e\in E_u} x_e.$$ If we consider unweighted graphs, this implies a lower-bound of $(1-1/e)$ for the approximation ratio. Observe that this algorithm matches vertex $u$ it receives at least one proposal from its neighbors. Since $\pi_v$ is drawn from $\mathcal{D}^{\b x}_v$, by Definition~\ref{def:irifrjf}, the probability of $u$ receiving a proposal from any neighbor $v$ is exactly $x_{(v,u)}$. Combining this with the fact that any vertex $v\in A$ sends its proposals independently of other vertices in $A$, we can write:

\begin{align*}\Pr[u \text{ is matched }] &= \Pr[u \text{ receives at least one proposal}]\\
& \geq  (1-1/e)\sum_{e\in E_u} x_e \hspace{45 mm} \text{ (due to Fact~\ref{fact1} stated below)}\\
\end{align*}

 \begin{fact}\label{fact1}
 Let $x_1, \dots, x_k$ be a set of independent Bernoulli random variables with  $\E[X] \ge 0$, where $X$ is their sum. We have $$\Pr\left[X > 0\right] \geq 1-e^{-\E[X]}\geq \E[X](1-1/e).$$
 \end{fact}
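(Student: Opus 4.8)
The plan is to prove the two inequalities separately, both by elementary estimates; the statement is essentially a packaging of the union-bound/concavity trick, so I do not expect a genuine obstacle here. One point I would flag at the outset: the second inequality requires $\E[X]\le 1$ (otherwise it is false, e.g.\ $\E[X]=2$), so I would state the hypothesis as $0\le\E[X]\le 1$; this is exactly the regime in which the fact is applied, since it is invoked on sums $\sum_{e\in E_u}x_e$ that are at most $1$ by the matching constraint~\eqref{cons:krfjerkf}.

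For the first inequality, write $\Pr[X>0]=1-\Pr[X=0]$. Letting $p_i:=\E[X_i]$ and using independence of the Bernoulli variables, $\Pr[X=0]=\prod_{i=1}^{k}(1-p_i)$. Applying the standard bound $1-p_i\le e^{-p_i}$ termwise and multiplying gives $\prod_{i=1}^{k}(1-p_i)\le e^{-\sum_{i}p_i}=e^{-\E[X]}$, hence $\Pr[X>0]\ge 1-e^{-\E[X]}$.

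For the second inequality, I would use concavity of $f(t):=1-e^{-t}$ on $[0,1]$. Since $f$ is concave with $f(0)=0$ and $f(1)=1-1/e$, for any $t\in[0,1]$ we have $f(t)=f\bigl(t\cdot 1+(1-t)\cdot 0\bigr)\ge t\,f(1)+(1-t)\,f(0)=(1-1/e)\,t$; that is, the graph of $f$ lies above the chord joining $(0,0)$ and $(1,1-1/e)$. Substituting $t=\E[X]\in[0,1]$ yields $1-e^{-\E[X]}\ge (1-1/e)\,\E[X]$, which chains with the first inequality to give the claim. (Alternatively, one can avoid naming concavity and simply observe that $h(t):=1-e^{-t}-(1-1/e)t$ satisfies $h(0)=h(1)=0$ and $h''(t)=-e^{-t}<0$, so $h\ge 0$ on $[0,1]$.)
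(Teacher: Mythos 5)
Your proof is correct, and since the paper states Fact~\ref{fact1} without proof, there is no authorial argument to compare against: your two steps (termwise $1-p_i\le e^{-p_i}$ with independence for the first inequality, and concavity of $1-e^{-t}$ against the chord through $(0,0)$ and $(1,1-1/e)$ for the second) are exactly the standard verification one would supply. Your flag about the hypothesis is also well taken: the second inequality genuinely requires $\E[X]\le 1$, which the paper's statement omits but which holds wherever the fact is invoked, since taking $F=E_u$ in constraint~\eqref{cons:krfjerkf} gives $\sum_{e\in E_u}x_e\le \Pr[\text{an edge in } E_u \text{ exists}]\le 1$.
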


 As mentioned before, this is also a simplified version of \GKS. To make this algorithm work for weighted graphs, they allow each vertex $u\in B$ to decide which proposal it is going to accept instead of just accepting the first one. To make this decision, for each vertex $u$ they run an algorithm similar to the one designed for the prophet secretary problem by Ehsani et al.~\cite{DBLP:conf/soda/EhsaniHKS18} which results in 
 $$\sum_{e\in E_u}\Pr[e \text{ joins the matching}].w_e \geq  (1-1/e) \sum_{e\in E_u}x_ew_e,$$ for any vertex $u\in B,$
 and thus a $(1-1/e)$ approximation algorithm. Later, Fu et al.~\cite{DBLP:conf/icalp/FuTWWZ21} observe that using an OCRS based approach (from  Lee and Singla~\cite{DBLP:conf/esa/LeeS18}) allows to turn this to a per-edge approximation. That is, they show that if each vertex $u$ runs an OCRS algorithm to decide which proposal it accepts, it is possible ensure that each edge joins the matching with probability at least $(1-1/e)x_e$.
 Let us emphasize that this is a rephrased version of these algorithms for the sake of comparison with ours. Of course, to describe their algorithm using our language, we need to change what examining an edge means, but we will not go into more details here.

\subsection{Our $(1-1/e)$-approximation Algorithm} 
\label{subsec:base}
Let us start by explaining the desirable properties that we are trying to achieve by designing a new $(1-1/e)$-approximate matching below. We call this algorithm \PAM.  We first define $$\ubar=\max_{u\in B} \sum_{e\in E_u} x_e.$$ These properties as follows. (We have briefly discussed these properties in Section~\ref{section:tecnique}.)
 
 \begin{enumerate}[label=(P\arabic*)]
\item \label{item:twotwo} The probability of any edge joining the matching is at least $(1-1/e)x_e$ and is also a  decreasing function of $\ubar$. In other words, if  $\ubar$ is smaller than one by a constant, we want \PAM to match any edge with probability at least $(1-1/e+c)x_e$, where $c$ is a constant. This implies that in this scenario, \PAM itself achieves an approximation ratio larger than $(1-1/e)$.
 	\item \label{item:threethree} The algorithm ensures that any vertex of the graph remains unmatched with a constant probability. Similarly, any edge does not join the matching with probability at least $\Omega(x_e)$. \SPM  does not satisfy this property as it may match some of the edges and vertices with probability one. This property allows us to show that for some instances of the problem, running \PAM twice gives us an improved algorithm. That is, we once run \PAM on $G$ and once on the edges that are not queried and have unmatched end-points after the first run.
 	\item \label{item:fourfour} For any edge $e=(v,u)$, we want \PAM to ensure that conditioned on any edge $e$ not being examined, there is a constant probability of both its end-points being unmatched. This, again is helpful in arguing that for instances of the problem running \PAM twice gives us an improved algorithm. Property~\ref{item:threethree} already implies that all the vertices remain unmatched with a large probability. It can also be used to show that some particular edges remain unexamined with a constant probability. This property, however, is about the correlation between these  events. 
 \end{enumerate}

Algorithm \PAM is very similar to \SPM with three  main differences. The first one which allows us to satisfy property~\ref{item:twotwo} is due to a transformation function. Given the optimal solution of the LP, we transform it to a different solution which still satisfies all the LP constraints. We then give this as an input to a modified version of \SPM. 

To pick our transformation function we start by observing that if all the edges connected to a vertex $u\in B$ have the same $x_e$ and $x_e\rightarrow 0$, then in \SPM each edge joins the matching with a probability that satisfies this property. Since the algorithm draws a uniform permutation over vertices in $A$, all of $u$'s edges have the same probability of joining the matching and that is

\begin{align*}\frac{\Pr[u \text{ joins the matching}]}{|E_u|}=\frac{1-(1-x_e)^{|E_u|}}{|E_u|}=\frac{1-(1-x_e)^{x_u/x_e}}{x_u/x_e}= \frac{(1-e^{-x_u})}{x_u}\cdot x_e. \end{align*}

Note that here $\frac{(1-e^{-x_u})}{x_u}$ is a decreasing function of $x_u$ and for $x_u=1$ it is exactly equal to $(1-1/e)$. Therefore, it satisfies property~\ref{item:twotwo}. Roughly speaking, based on this observation, we aim to modify the solution of the LP in a way that each edge $e=(v,u)$ behaves as if instead of a single edge it was a collection of edges $e'_1\dots e'_k$ with $k \rightarrow \infty$ and $\sum_{i=1}^k x_{e'}=x_e$. Of course, we cannot simply replace $e$ with $k$ parallel edges between $v$ and $u$ and hope to achieve the desired bound since these edges all being representatives of a single edge cannot behave independently. However, we show that lowering $x_e$ to $\tilde{x}_e$ defined as follows gives us a similar effect.

Let us denote the modified solution by ${\tilde{\b x}}$. Assume that for every $u \in B$, we have $\sum_{e \ni u} x_e \le \ubar$. For any edge $e\in E$ we let  $\tilde{x}_e=g(x_e, \ubar)$ where,
$$g(x, \ubar) = \frac{(e^\ubar-1)(\ubar  -x) x}{\ubar (e^\ubar- e^x)}$$
 for any $x\in [0,\ubar)$, and $$g(\ubar, \ubar) = \lim_{x\rightarrow \ubar} g(x, \ubar) = 1-e^{-\ubar}.$$ We then set $\tilde{x}_e = g(x_e, \ubar)$. When value of $\ubar$ is clear from the context we will simply use $g(x_e)$ instead of $g(x_e, \ubar)$.  In Appendix~\ref{sec:transform} we provide more details about this function and the ideas behind choosing it.

 \begin{restatable}{claim}{xless}
 For any $x\in [0,\ubar]$ and $\ubar\in (0,1]$	we have $x\geq g(x,\ubar)$.
 \end{restatable}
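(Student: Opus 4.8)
The claim is that $x \geq g(x,\ubar)$ for all $x \in [0,\ubar]$ and $\ubar \in (0,1]$. First I would dispose of the boundary cases: at $x=0$ we have $g(0,\ubar)=0$, and at $x=\ubar$ we have $g(\ubar,\ubar)=1-e^{-\ubar} \leq \ubar$ by the standard inequality $1-e^{-t}\leq t$, so the claim holds at both endpoints. For $x\in(0,\ubar)$, dividing by $x>0$, it suffices to show
\[
1 \;\geq\; \frac{(e^\ubar-1)(\ubar-x)}{\ubar(e^\ubar-e^x)},
\]
which, since $e^\ubar - e^x > 0$ and $\ubar > 0$, is equivalent to
\[
\ubar\,(e^\ubar - e^x) \;\geq\; (e^\ubar - 1)(\ubar - x),
\]
i.e.
\[
\ubar e^\ubar - \ubar e^x \;\geq\; \ubar e^\ubar - \ubar - x e^\ubar + x,
\]
i.e., after cancelling $\ubar e^\ubar$ and rearranging,
\[
\ubar + x e^\ubar \;\geq\; \ubar e^x + x.
\]

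So the whole claim reduces to the single inequality $h(x) := \ubar + x e^\ubar - \ubar e^x - x \geq 0$ for $x\in[0,\ubar]$, with $\ubar\in(0,1]$ fixed. The plan is to verify this by a convexity/endpoint argument in $x$. Note $h(0) = \ubar - \ubar = 0$ and $h(\ubar) = \ubar + \ubar e^\ubar - \ubar e^\ubar - \ubar = 0$, so $h$ vanishes at both endpoints of the interval. Now $h''(x) = -\ubar e^x < 0$, so $h$ is strictly concave on $[0,\ubar]$; a concave function that is nonnegative (indeed zero) at both endpoints of an interval is nonnegative throughout the interval. Hence $h(x)\geq 0$ on $[0,\ubar]$, which gives the claim. (One can also note $h>0$ strictly on the open interval, though only $\geq$ is needed.)

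I do not anticipate a genuine obstacle here — the inequality collapses to a one-variable concavity argument once the algebraic rearrangement is done. The only place to be careful is the sign bookkeeping when clearing denominators: one must confirm $e^\ubar - e^x > 0$ for $x<\ubar$ (true since $\exp$ is increasing) so that multiplying through does not flip the inequality, and handle $x=\ubar$ separately via the limit definition $g(\ubar,\ubar)=1-e^{-\ubar}$ since the closed-form expression is $0/0$ there. An alternative, should one prefer to avoid splitting off $x=\ubar$, is to observe that $h$ extends smoothly and the reduced inequality $\ubar + xe^\ubar \geq \ubar e^x + x$ is exactly what $x \geq g(x,\ubar)$ says even in the limit, so the concavity argument covers the whole closed interval uniformly.
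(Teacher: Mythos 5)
Your proof is correct, but it takes a different route from the paper's. The paper also divides by $x$, but then invokes the separate monotonicity claim that $g(x,\ubar)/x=\frac{(e^\ubar-1)(\ubar-x)}{\ubar(e^\ubar-e^x)}$ is decreasing in $x$ (proved by a derivative computation and needed again later in the approximation-ratio analysis), and concludes by evaluating the limit $\lim_{x\to 0} g(x,\ubar)/x=1$. You instead clear the denominator (correctly checking its sign, since $e^\ubar-e^x>0$ for $x<\ubar$) and reduce the claim to $h(x)=\ubar+xe^\ubar-\ubar e^x-x\ge 0$, which follows because $h(0)=h(\ubar)=0$ and $h''(x)=-\ubar e^x<0$, so concavity with vanishing endpoints gives nonnegativity on $[0,\ubar]$; the endpoint $x=\ubar$ is handled separately via $g(\ubar,\ubar)=1-e^{-\ubar}\le\ubar$. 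Your argument is self-contained and avoids the auxiliary monotonicity claim entirely, whereas the paper's version is essentially free once that claim is established (and the claim is reused elsewhere); both are sound, and the algebraic reduction and concavity bookkeeping in your writeup check out.
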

 The proof of this claim is deferred to Appendix \ref{sec:proofs}.

 \begin{observation}
 	The modified solution ${\tilde{\b x}}$ defined above satisfies all the constraints of \ref{LP-Match} since for any $e\in E$, we have $\tilde{x}_e= g(x_e, \ubar) \le x_e.$
 \end{observation}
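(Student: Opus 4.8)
The plan is to reduce the inequality $x \ge g(x,\ubar)$ to a one-variable concavity statement. First I would dispose of the two boundary cases. At $x=0$ we have $g(0,\ubar)=0$, so the inequality holds with equality. At $x=\ubar$, using $g(\ubar,\ubar)=1-e^{-\ubar}$, the claim reads $\ubar \ge 1-e^{-\ubar}$, which is the standard inequality $1-e^{-t}\le t$ for $t\ge 0$. So from now on fix $x\in(0,\ubar)$ and $\ubar\in(0,1]$.

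Since $x>0$, dividing the desired inequality by $x$ shows it is equivalent to
$$1 \ \ge\ \frac{(e^{\ubar}-1)(\ubar-x)}{\ubar\,(e^{\ubar}-e^{x})}.$$
Because $x<\ubar$ we have $e^{\ubar}-e^{x}>0$, and $\ubar>0$, so the denominator $\ubar(e^{\ubar}-e^{x})$ is strictly positive; multiplying through (which preserves the direction of the inequality) reduces the claim to $h(x)\ge 0$, where
$$h(t)\ :=\ \ubar\bigl(e^{\ubar}-e^{t}\bigr)\ -\ (e^{\ubar}-1)(\ubar-t), \qquad t\in[0,\ubar].$$

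Next I would check the endpoints of this interval: $h(0)=\ubar(e^{\ubar}-1)-(e^{\ubar}-1)\ubar=0$ and $h(\ubar)=\ubar\cdot 0-(e^{\ubar}-1)\cdot 0=0$. Differentiating twice gives $h''(t)=-\ubar e^{t}<0$ on $[0,\ubar]$, so $h$ is strictly concave there. A strictly concave function that vanishes at both endpoints of an interval is positive in its interior, hence $h(x)>0$ for $x\in(0,\ubar)$, which yields the desired inequality (in fact strictly) on the open interval; combined with the two boundary cases this gives $x\ge g(x,\ubar)$ throughout $[0,\ubar]$.

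I do not expect a genuine obstacle here. The only points requiring care are the sign bookkeeping when clearing the denominator (verifying $e^{\ubar}-e^{x}>0$ and $\ubar>0$ so no inequality flips), and a sanity check that the limiting definition $g(\ubar,\ubar)=\lim_{x\to\ubar}g(x,\ubar)=1-e^{-\ubar}$ is consistent with the $x=\ubar$ boundary analysis (a single application of L'Hôpital in $x$ confirms this). Note also that the concavity argument only uses $\ubar>0$, so the restriction $\ubar\le 1$ is not actually needed for this particular claim.
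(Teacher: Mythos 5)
Your proof is correct, but it takes a different route from the paper's. The paper proves the claim (stated there as the claim that $x \ge g(x,\ubar)$, which immediately gives the observation) by dividing through by $x$ and arguing about the ratio $g(x,\ubar)/x = \frac{(e^{\ubar}-1)(\ubar-x)}{\ubar(e^{\ubar}-e^{x})}$: it invokes Claim~\ref{clm:gdev} (a separate derivative computation showing this ratio is strictly decreasing in $x$) and then observes that the ratio is maximized as $x \to 0$, where its limit equals $1$. You instead clear the positive denominator and reduce everything to the single-variable inequality $h(t)=\ubar(e^{\ubar}-e^{t})-(e^{\ubar}-1)(\ubar-t)\ge 0$ on $[0,\ubar]$, which follows from $h(0)=h(\ubar)=0$ and $h''(t)=-\ubar e^{t}<0$; your endpoint checks and the L'H\^opital consistency check at $x=\ubar$ are all accurate. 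Your argument is more self-contained and elementary: it needs no monotonicity lemma and no limit at $0$, and, as you note, it only uses $\ubar>0$, so the hypothesis $\ubar\le 1$ is not needed for this particular inequality. The paper's route, on the other hand, gets the claim essentially for free from Claim~\ref{clm:gdev}, which the paper needs anyway later (in the analysis of heavy edges, where the monotonicity of $g(x,1)/x$ is used to define the threshold $\gm(\tau)$), so it avoids introducing an auxiliary function. Both are valid; neither has a gap.
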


It is possible to show that this modification ensures that the simple algorithm matches each edge with probability $(1-1/e)x_e$. This is a corollary of Lemma~\ref{lem:prm1} which we prove later. Indeed, the lemma implies the following lower-bound for the probability of each edge $e$ being matched by \SPM{$G, {\tilde{\b x}}$}.
$$\frac{(1-e^{-\ubar})x_e}{\ubar},$$ where $\ubar=\max_{u\in B}x_u$. Note that this is exactly what we were hoping to get.  That is the probability of the edges being matched if they all had the same $x_e$ and $x_e\rightarrow 0$, and it satisfies property \ref{item:twotwo}.  

Now, let us give some intuition about  the second difference between algorithms \PAM and \SPM. We need this to  satisfy property~\ref{item:threethree}. Recall that due to this property we need any edge $e$ to remain unmated with probability at least a constant fraction of $x_e$.  For large values of $x_e$, the transformation function takes care of this property since $x_e-\tilde{x}_e$ will be large as well. However, consider an edge $e=(v,u)$ with a very small $x_e$ which results in  $\tilde{x}_e \approx x_e$. In this case, $v$ sends a proposal to $u$ with probability almost equal to $x_e$. If $e$ is the only edge connected to $u$, this proposal will always be accepted and $e$ joins the matching with probability almost equal to $x_e$ which contradicts property~\ref{item:threethree}. On the other hand, if $u$ has many other edges, the probability of rejecting $v$'s proposal increases. Thus, our idea for satisfying this property is adding dummy edges to the graph to ensure that all the proposals have a constant probability of being rejected. Below we explain what these dummy edges are and how we use them.

\begin{definition}[dummy edges]\label{def:dummy}		
Given a graph $G$ and an LP solution $\b{x}$, let $\ubar\in [0,1]$ be a constant number satisfying $\sum_{e\ni u} x_e \leq \ubar$ for any $u\in B$.  
Dummy edges are edges with   $w_e=0$ and $p_e=1$ which we add to our graph in order to ensure $\sum_{e\ni u} x_e = \ubar$ for all $u\in B$. To achieve this for any vertex $u\in B$ we add a dummy edges $e$ with  $x_e=\ubar -\sum_{e\ni u} x_e$  between $u$ and a degree-one dummy vertex which we add to $A$. 
\end{definition}

It is clear that after adding these  dummy edges, $\b{x}$ is still an optimal solution of the LP. Since the dummy edges weigh zero, they do not change the weight of the matching.  Moreover, due to having $p_e=1$ they do not violate constraint~\eqref{cons:krfjerkf} of the LP either.

\begin{observation}
After adding dummy edges to the graph as described in Definition~\ref{def:dummy}, $\b{x}$ is still an optimal solution of the \ref{LP-Match}.   
\end{observation}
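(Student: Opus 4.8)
The plan is to split the claim into two routine checks: that the augmented vector, call it $\b x$ still, is \emph{feasible} for \ref{LP-Match} on the augmented graph, and that it is \emph{optimal} there. Feasibility is where the only mild subtlety lies, because the edge sets $E_u$ for $u \in B$ grow when the dummy edges are attached, so one must confirm that the new constraints in family~\eqref{cons:krfjerkf} are not violated; optimality then follows by a one-line comparison of the two LPs using the fact that dummy edges have zero weight.

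For feasibility I would argue vertex by vertex. Nonnegativity is immediate, since the dummy edge $d_u$ attached to $u \in B$ carries $x_{d_u} = \ubar - \sum_{e \ni u} x_e \ge 0$ by the hypothesis $\sum_{e \ni u} x_e \le \ubar$ that defines which $\ubar$ we are allowed to use. For the constraints~\eqref{cons:krfjerkf}: every original vertex $v \in A$ keeps exactly its old edge set (each dummy edge is incident to a fresh degree-one dummy vertex in $A$, not to any original vertex of $A$), so those constraints are inherited unchanged from the original feasibility of $\b x$; each new dummy vertex $v' \in A$ has a single incident edge $d_u$ with $x_{d_u} \le \ubar \le 1 = \Pr[d_u \text{ exists}]$ because $p_{d_u}=1$; and for $u \in B$ and $F \subseteq E_u$, if $d_u \notin F$ the constraint is literally one of the original constraints, while if $d_u \in F$ then $\Pr[\text{an edge in } F \text{ exists}] = 1$ (as $p_{d_u}=1$) and $\sum_{e \in F} x_e \le \sum_{e \in E_u} x_e = \ubar \le 1$, so the constraint again holds.

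For optimality, I would note that the dummy edges contribute $0$ to $\b x \cdot \b w$, so the objective value of $\b x$ is unchanged by the augmentation; it therefore still equals the original optimum. Conversely, given any feasible $\b y$ for the augmented LP, its restriction $\b y|_E$ to the original edges is feasible for the original LP (every original constraint is still present with the same right-hand side, and the new dummy vertices interact with nothing), and $\b y \cdot \b w = (\b y|_E)\cdot(\b w|_E)$ since dummy weights vanish; hence the augmented optimum is at most the original optimum, which $\b x$ attains. This shows $\b x$ is optimal for the augmented \ref{LP-Match}. I do not expect a genuine obstacle here — the only point worth stating with care is that the dummy vertices are brand-new and of degree one, which is exactly what lets the original constraint system embed into the augmented one without side effects.
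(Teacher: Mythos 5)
Your proposal is correct and follows essentially the same route as the paper, which justifies the observation in two sentences (zero-weight dummy edges leave the objective unchanged, and $p_e=1$ together with $\ubar\le 1$ keeps constraint~\eqref{cons:krfjerkf} satisfied); you simply spell out the vertex-by-vertex feasibility check and make explicit the converse direction (that the augmented LP's optimum cannot exceed the original one), which the paper leaves implicit.
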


We will talk about the third difference between \SPM and \PAM after stating   \PAM formally. 

\begin{tboxalg2e}{Our $(1-1/e)$-approximation algorithm with specific properties.}\label{algone}
\begin{algorithm}[H]
    \DontPrintSemicolon
    \SetKwProg{Fn}{Function}{:}{}
\Fn{\PAM{$A,B, E, \b x, \ubar$}}{
	For any vertex $u\in B$ ensure $\sum_{e\ni u} x_e=\ubar$ via adding dummy edges. (See Definition~\ref{def:dummy}.) \label{line1}\;
	${\tilde{\b x}} \gets g({\b x}, \ubar)$. \label{line2}\;
	Let $\match=\emptyset$ be a matching of $E$.\;
	Let $\pi$ be a permutation over vertices in $A$ chosen uniformly at random. \;
	\ForEach {$v$ in the order of $\pi$}{
	Let $\pi_v\sim \mathcal{D}^{\tilde{\b x}}_v$ as described by \DIST{$v, \b{x}, \tilde{\b{x}}$}.\label{line:perm}\;
	\ForEach{$e=(v,u)$ permuted in the order of  $\pi_v$}{
        Examine edge $e$.\\
        \If{$e$ is realized}{ Vertex $v$ sends a proposal to $u$.\\
        	If $u$ is unmatched, it accepts the proposal and $e$ joins $\match$.\\
        	Terminate the loop.}
        }
}
\Return $M$.}
\end{algorithm}
\end{tboxalg2e}
\vspace{3mm}
As one can see, this algorithm has two additional lines compared to \SPM, Lines~\ref{line1} and Lines~\ref{line2}. As input, this algorithm gets a bipartite graph $G=(A,B,E)$, a solution of \ref{LP-Match}, $\b{x}$, and a parameter $\ubar$ which should satisfy $\sum_{e\ni u} x_e \leq \ubar$. In Lines~\ref{line1} we add dummy edges to vertices in $B$ to ensure that they all satisfy $\sum_{e\ni u} x_e = \ubar$, and in  Lines~\ref{line2} we transform  $\b{x}$ to another solution of the LP, $\tilde{\b x}$. 

The only other difference between \PAM and \SPM is in the way we find distribution $\mathcal{D}^{\tilde{\b x}}_v$ in Line~\ref{line:perm}. Instead of simply using Lemma~\ref{lemma:dist} to find an arbitrary distribution $\mathcal{D}^{\tilde{\b x}}_v$  proportional to $\tilde{\b{x}}$, we need to construct one with particular features. The goal here is to satisfy property~\ref{item:fourfour}. Recall that due to this property, we want to ensure that conditioned on an edge $e=(v,u)$ being unexamined, its end-points are unmatched with a constant probability. Unfortunately, some proportional distributions result in $\Pr[ v \text{ is unmatched } \mid e \text{ is not examined}\ ]\approx 0$. Below we describe our special $\mathcal{D}^{\tilde{\b x}}_v$ by designing a polynomial time algorithm \DIST for drawing a permutation from it. 
%

\vspace{3mm}
\begin{tboxalg2e}{Drawing a permutation from $\mathcal{D}^{\tilde{\b x}}_v$}
\begin{algorithm}[H]
    \DontPrintSemicolon
\SetKwProg{Fn}{Function}{:}{}
\Fn{\DIST{$v, \b{x}, \tilde{\b{x}}$}}{
Let $\mathcal{D}^{\b x}_v$ be a distribution over permutations of subsets of $E_v$ from Definition~\ref{def:irifrjf}.\;
   Let $\pi_v\sim  \mathcal{D}^{\b x}_v$\;
   Let $\pi'_v$ be an empty permutation.\;
    \ForEach{$e$ in the order of $\pi_v$}{
    Let $c$ be random number drawn uniformly from $[0,1]$.\;
    \If{$c\leq p_e(1-\frac{\tilde{x}_e}{x_e})$}{
    End the loop.\;
    }
      \If{$p_e(1-\frac{\tilde{x}_e}{x_e})< c\leq p_e(1-\frac{\tilde{x}_e}{x_e})+\frac{\tilde{x}_e}{x_e}$}{
Add  $e$ to the end of permutation $\pi'_v$.\;
}
        }
        \Return $\pi'_v$
 }
\end{algorithm}
\end{tboxalg2e}
 
\vspace{3mm}

We also need to prove that drawing a permutation from $\mathcal{D}^{\tilde{\b x}}_v$  using \DIST{$v, \b{x}, \tilde{\b{x}}$} does not contradict with it being proportional to $\tilde{\b{x}}$. Let $\pi_v\gets $ \DIST{$v, \b{x}, \tilde{\b{x}}$}. By Definition~\ref{def:irifrjf}, $\mathcal{D}^{\tilde{\b x}}_v$ is proportional to $\tilde{\b{x}}$, iff for any edge in $e\in E_v$ probability of $e$ being the first realized one in $\pi_v$ is equal to $\tilde{x}_e$.
Since in \PAM, the inner-loop ends when the first edge from $\pi_v$ is realized, the probability of an edge $e$ being proposed is equal to the probability of $e$ being the first realized edge in $\pi_v$. Therefore, the below lemma proves that $\mathcal{D}^{\tilde{\b x}}_v$  is proportional to $\tilde{x}_e$ as well.  We  defer the proof of the lemma to Appendix~\ref{sec:proofs}. 

\begin{restatable}{claim}{nierfj}
\label{lemma:nierfj} 
For any edge $e\in E_v$, the probability of $v$ sending a proposal to $u$ in \PAM{} is equal to $\tilde{x}_{e}$.
\end{restatable}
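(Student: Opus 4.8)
The plan is to compute the probability directly, conditioning on the auxiliary permutation that \DIST{} draws internally and then invoking the defining property of $\mathcal{D}^{\b x}_v$ (proportionality to $\b{x}$, Definition~\ref{def:irifrjf} and Lemma~\ref{lemma:dist}).

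First I would unwind the definitions. In \PAM{} the inner loop examines the edges of the permutation returned by \DIST{$v,\b{x},\tilde{\b{x}}$} in the order they appear and halts at the first realized one; hence $v$ sends a proposal to $u$ along $e=(v,u)$ exactly when $e$ is the first realized edge of that permutation, so it suffices to show this happens with probability $\tilde{x}_e$. (If $x_e=0$ then $\tilde{x}_e=g(0,\ubar)=0$ and $e$ never appears, so the statement is trivial; assume $x_e>0$.) I would then condition on the permutation $\pi_v=(f_1,\dots,f_m)\sim\mathcal{D}^{\b x}_v$ used inside \DIST{}; permutations not containing $e$ contribute $0$, so assume $e=f_k$.

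Second, I would trace \DIST{} and \PAM{} jointly along $\pi_v$. When \DIST{} processes $f_j$, the fresh uniform coin yields one of three disjoint outcomes: it stops the loop (probability $p_{f_j}(1-\tilde{x}_{f_j}/x_{f_j})$), it appends $f_j$ to the output permutation (probability $\tilde{x}_{f_j}/x_{f_j}$), or it discards $f_j$ and continues (the remaining probability $(1-p_{f_j})(1-\tilde{x}_{f_j}/x_{f_j})$); all three are legitimate probabilities since $\tilde{x}_{f_j}=g(x_{f_j},\ubar)\le x_{f_j}$ and $p_{f_j}\le 1$. For $e=f_k$ to be the first realized edge of the output permutation we need: for every $j<k$ the coin at $f_j$ does not stop the loop, and if $f_j$ was appended then $f_j$ turns out unrealized (otherwise $v$ proposes along $f_j$ first, since the output preserves the order of $\pi_v$); then at $f_k$ the coin appends $e$; and finally $e$ is realized. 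The crucial observation is that, for a fixed $j<k$, the "good" event — (appended and unrealized) or discarded — has probability $\tfrac{\tilde{x}_{f_j}}{x_{f_j}}(1-p_{f_j})+(1-p_{f_j})(1-\tfrac{\tilde{x}_{f_j}}{x_{f_j}})=1-p_{f_j}$, i.e.\ the dependence on $\tilde{x}_{f_j}$ cancels. Since the \DIST{} coins and the edge realizations are mutually independent given $\pi_v$, this yields
\[
\Pr\big[v \text{ proposes to } u \text{ along } e \,\big|\, \pi_v,\ e=f_k\big] \;=\; \Big(\prod_{j<k}(1-p_{f_j})\Big)\cdot \frac{\tilde{x}_e}{x_e}\cdot p_e .
\]

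Third, I would recognize $\big(\prod_{j<k}(1-p_{f_j})\big)p_e$ as exactly the conditional probability that Algorithm~\ref{alg:def}, run on this $\pi_v$, outputs $e$ (all of $f_1,\dots,f_{k-1}$ unrealized, then $f_k=e$ realized). By Definition~\ref{def:irifrjf} and Lemma~\ref{lemma:dist}, averaging this over $\pi_v\sim\mathcal{D}^{\b x}_v$ equals $x_e$, so $\E_{\pi_v}\big[\prod_{j<k}(1-p_{f_j})\cdot\mathbbm{1}[e\in\pi_v]\big]=x_e/p_e$. Taking expectation over $\pi_v$ in the displayed identity gives $\Pr[v\text{ proposes to }u]=\frac{\tilde{x}_e}{x_e}p_e\cdot\frac{x_e}{p_e}=\tilde{x}_e$, as claimed. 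I expect the one genuinely delicate point to be the cancellation $\tfrac{\tilde{x}_{f_j}}{x_{f_j}}(1-p_{f_j})+(1-p_{f_j})(1-\tfrac{\tilde{x}_{f_j}}{x_{f_j}})=1-p_{f_j}$: noticing that "append-then-must-miss" and "discard outright" together behave exactly like an unrealized edge is what makes the sum collapse onto the already-known quantity for $\mathcal{D}^{\b x}_v$; the rest is bookkeeping and a routine independence argument.
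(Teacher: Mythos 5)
Your proof is correct and follows essentially the same route as the paper's: condition on the permutation $\pi_v\sim\mathcal{D}^{\b x}_v$ fed into \DIST{}, observe that for each earlier edge the ``append-and-miss'' and ``discard'' outcomes together contribute exactly $1-p_{e'}$, so the conditional proposal probability is $\tfrac{\tilde{x}_e}{x_e}$ times the probability that Algorithm~\ref{alg:def} outputs $e$, and then average using proportionality of $\mathcal{D}^{\b x}_v$ to $\b{x}$. The key cancellation you highlight is precisely the one the paper's proof relies on, so no further changes are needed.
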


\subsection{Improving Over $(1-1/e)$ Approximation Ratio}

Now we are ready to discuss how we use \PAM to design an algorithm with an improved approximation ratio. We call this algorithm \FNM and formally state it in Algorithm~\ref{alg}. Let us say an edge is available after running \PAM if it remains unexamined with its end-points unmatched. Note that these edges if realized can join the matching. Let us define 
$$r=\sum_{e\in E}x_e w_e\Pr[e \text{ remains available}].$$
If we run \PAM for a second time on these remaining available edges, we find a matching of weight at least $(1-1/e)r$ which we can add  to the matching we found in the first run. Thus if $r$ is at least a constant fraction of the optimal solution of the LP, two runs of \PAM returns a matching with approximation ratio higher than $(1-1/e)$.

 A crucial lemma that we will prove about  \PAM is Lemma~\ref{lemma:ijerg}.  Roughly speaking it states that after we run \PAM once, any edge $e\in E$ will remains available with probability at least $c\cdot(1-\tilde{x}_e/p_e)$ where $c$ is a constant.  (The proof of this lemma strongly relies on properties~\ref{item:threethree} and~\ref{item:fourfour} of \PAM.) Now, let $E^{(\tau)}$ denote the subset of edges  for which $\tilde{x}_e/p_e$ is  no larger that a constant $\tau$. If the contribution of these edges to the optimal solution of the LP is larger than a constant fraction, then the algorithm mentioned above that is running \PAM twice returns a matching with approximation ratio higher than $(1-1/e)$. This is exactly what \FNM does for these graphs. 
 
 The inputs to algorithm \FNM are a graph $G$, the optimal solution of the LP, $\b{x}$, and three parameters $\tau,\ubar,\lambda\in (0,1)$.  As discussed above, given the threshold $\tau$, the algorithm constructs $E^{(\tau)}$ which is the subset of edges for which $\tilde{x}_e/p_e\leq\tau$. Then, it denotes their contribution to the optimal solution of the LP by $\omega$. If $\omega$ is larger than $\lambda$ fraction of the optimal solution, then it runs two rounds of \PAM. Once on the whole graph, and once on the remaining available edges. 
 \vspace{2mm}
\begin{tboxalg2e}{Our algorithm with an approximation ratio larger than $(1-1/e)$.}\label{alg}
\begin{algorithm}[H]
    \DontPrintSemicolon
    \SetKwProg{Fn}{Function}{:}{}
    \Fn{\FNM{$G, \b x, \tau, \ubar, \lambda$}}{
    Let $\match=\emptyset$ be a matching of $E$.\;
    ${\tilde{\b x}} \gets g({\b x}, 1)$. \;
$E^{(\tau)}\gets \{e\in E \mid \tilde{x}_e/p_e\leq \tau\}$.\;
$\omega\gets\sum_{e\in E^{(\tau)}} x_ew_e$.\;
\If{$\omega\geq \lambda(\b{x}\cdot\b{w})$}{

$\match_1\gets$  \PAM{$A, B, E, \b{x}, 1$}.\;
Let $E_2$ be the set of edges in $E$ that are not examined and do not share any end-points with edges in $\match_1$.\;
$\match_2 \gets$ \PAM{$A, B, E_2, \b{x}, 1$}.\;
$\match\gets \match_1\cup \match_2$.\;
}
\Else{
$\match\gets$ \PAM{$A, B, E \setminus E^{(\tau)}, \b{x}, \ubar$}
}
  \Return  $M$.}
\end{algorithm}
\end{tboxalg2e}

\vspace{3mm}
 
 Now, if two runs of \PAM do not achieve an approximation ratio larger than $(1-1/e)$, roughly speaking, it implies that the contribution of edges of $E^{(\tau)}$ to the optimal solution of the LP is very small. Thus, if we remove edges of $E^{(\tau)}$ from the graph and accordingly adjust $\b{x}$, it does not make a significant change in $\b{x}\cdot\b{w}$. We will prove in Claim~\ref{claim:urffddsbgr} that if all the edges have a large $\tilde{x}_e/p_e$, then due to constraint~\eqref{cons:krfjerkf} of the \ref{LP-Match}, $\max_{u\in B} \sum_{e\in E_u} x_e$ is upper-bounded by a constant $\ubar<1$. In this case, the idea is to use property~\ref{item:twotwo} of \PAM, and show that running \PAM{$A, B, E \setminus E^{(\tau)}, \b{x}, \ubar$} achieves an approximation ratio larger than $(1-1/e)$. This is exactly what Algorithm \FNM does in this scenario.

\newcommand{\maintheorem}{Given any bipartite graph $G$ and  $\b{x}$ the optimal solution of \ref{LP-Match} for this graph, it possible to set values of $\tau, \ubar$, and $\lambda$	 such that \FNM{$G,\b{x}, \tau, \ubar, \lambda$} returns an approximate matching with approximation ratio at least $1-1/e+\apximprove$. Moreover, the algorithm is polynomial time in the input size.} 

\begin{restatable}[Main Theorem]{theorem}{maint}\label{thm:main1}
\maintheorem
\end{restatable}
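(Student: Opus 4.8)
The plan is to establish the theorem by a case analysis driven by the parameter $\lambda$, proving that in \emph{both} branches of \FNM the returned matching has weight at least $(1-1/e+\delta)(\b{x}\cdot\b{w})$ for some absolute constant $\delta > 0.0014$; since $\b{x}\cdot\b{w} \ge \E[\W{\opt}]$ by Lemma~1, this yields the claimed approximation ratio. I would fix the three constants $\tau, \ubar, \lambda$ at the very end once the two case-bounds are in hand, choosing them to maximize the worst-case gain; the polynomial running time is immediate from Lemma~\ref{lemma:dist}, the polynomial solvability of \ref{LP-Match}, and the fact that \PAM and \DIST run in polynomial time, so I would dispatch that in one sentence.

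\textbf{Case 1: $\omega \ge \lambda(\b{x}\cdot\b{w})$ (two runs of \PAM).} Here I would invoke the per-edge guarantee from property~\ref{item:twotwo} (with $\ubar = 1$) to get that $\match_1$ has expected weight at least $(1-1/e)(\b{x}\cdot\b{w})$. For the second run, the key input is Lemma~\ref{lemma:ijerg}: every edge $e$ remains available after the first run with probability at least $c(1-\tilde{x}_e/p_e)$ for an absolute constant $c$. For $e \in E^{(\tau)}$ this is at least $c(1-\tau)$, a positive constant. Running \PAM on the graph $(A,B,E_2)$ with the LP solution $\b{x}$ restricted to $E_2$ — which is still feasible for \ref{LP-Match} on that subgraph — gives a matching $\match_2$ whose expected weight is at least $(1-1/e)$ times the LP value on $E_2$, and the latter is at least $c(1-\tau)\,\omega \ge c(1-\tau)\lambda(\b{x}\cdot\b{w})$ in expectation (taking expectation over the randomness of the first run and using that $\match_2 \subseteq \match$ is disjoint from $\match_1$ by construction of $E_2$). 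Summing, $\E[\W{\match}] \ge \left(1 - 1/e + (1-1/e)c(1-\tau)\lambda\right)(\b{x}\cdot\b{w})$, so the gain in this branch is $\delta_1 := (1-1/e)c(1-\tau)\lambda$.

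\textbf{Case 2: $\omega < \lambda(\b{x}\cdot\b{w})$ (one run of \PAM on $E \setminus E^{(\tau)}$).} First, deleting $E^{(\tau)}$ costs at most $\omega < \lambda(\b{x}\cdot\b{w})$ in LP value, so the restricted solution $\b{x}$ on $E\setminus E^{(\tau)}$ has value at least $(1-\lambda)(\b{x}\cdot\b{w})$. The crucial structural fact is Claim~\ref{claim:urffddsbgr}: once every surviving edge has $\tilde{x}_e/p_e \ge \tau$, constraint~\eqref{cons:krfjerkf} forces $\sum_{e \ni u} x_e \le \ubar$ for every $u \in B$, where $\ubar = \ubar(\tau) < 1$ is an explicit constant bounded away from $1$. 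This is the step where I expect the main work: I would take a vertex $u$ with $\sum_{e \ni u} x_e = s$, note that on $F = E_u \setminus E^{(\tau)}$ each edge satisfies $p_e \le \tilde{x}_e/\tau \le g(x_e,1)/\tau$, then apply constraint~\eqref{cons:krfjerkf} to $F$ — $\sum_{e\in F} x_e \le 1 - \prod_{e\in F}(1-p_e) \le 1 - \prod_{e\in F}(1 - g(x_e,1)/\tau)$ — and show the right-hand side cannot reach $1$ by a constant margin, using convexity/monotonicity of $g(\cdot,1)$ and the bound $g(x,1) \le x$; since $\sum_{e\in F}x_e \ge s - \omega$-ish contributions, one extracts $s \le \ubar$. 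With this bound, \PAM is run with parameter $\ubar < 1$, and property~\ref{item:twotwo} now gives the \emph{improved} per-edge rate $\frac{(1-e^{-\ubar})}{\ubar}x_e$ rather than $(1-1/e)x_e$; since $\frac{1-e^{-\ubar}}{\ubar}$ is strictly decreasing, this is $(1-1/e) + c'(\ubar)$ for a positive constant $c'(\ubar)$. Hence $\E[\W{\match}] \ge \left(1 - 1/e + c'(\ubar)\right)(1-\lambda)(\b{x}\cdot\b{w})$, and the gain is $\delta_2 := (1-1/e+c'(\ubar))(1-\lambda) - (1-1/e) = c'(\ubar)(1-\lambda) - (1-1/e)\lambda$.

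\textbf{Choosing constants and the obstacle.} Finally I would set $\lambda$ small enough that $\delta_2 > 0$ (this needs $\lambda < \frac{c'(\ubar)}{c'(\ubar) + 1 - 1/e}$), then pick $\tau$ to balance: decreasing $\tau$ increases $\delta_1$ through the factor $(1-\tau)$ but shrinks $\ubar(\tau)$'s gap to $1$... actually decreasing $\tau$ makes $\ubar(\tau)$ \emph{smaller}, helping $\delta_2$ too, but it also shrinks $E^{(\tau)}$ which is already handled by the case split — so the real tension is that a smaller $\tau$ helps $\delta_1$'s prefactor while a larger $\tau$ is needed for $E^{(\tau)}$ to be "rich enough" to matter, which is exactly what the $\lambda$-threshold absorbs. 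Taking $\delta = \min(\delta_1,\delta_2)$ over the optimized choice, a numerical check gives $\delta > 0.0014$. The main obstacle I anticipate is twofold: (a) proving Lemma~\ref{lemma:ijerg} — the claim that available-probability is $\Omega(1-\tilde{x}_e/p_e)$ — which requires controlling the correlation between "$e$ unexamined", "$u$ unmatched", and "$v$ unmatched", and is precisely why \DIST was engineered and why property~\ref{item:fourfour} is singled out as hardest; and (b) the quantitative version of Claim~\ref{claim:urffddsbgr}, where one must squeeze a concrete separation $\ubar(\tau) < 1$ out of constraint~\eqref{cons:krfjerkf} combined with the shape of $g(\cdot,1)$, since a loose bound there would make $c'(\ubar)$ — and hence $\delta$ — too small to exceed the target $0.0014$.
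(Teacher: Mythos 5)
Your proposal is correct and follows essentially the same route as the paper: the same case split on whether $\omega=\sum_{e\in E^{(\tau)}}x_ew_e$ exceeds $\lambda\cdot\optlp$, with the two-run branch analyzed exactly as in Lemma~\ref{lem:tworounds} (the availability bound with $c=\left(\frac{1-1/e}{2}\right)^2$ giving a gain of $\frac{(1-1/e)^3}{4}\lambda(1-\tau)$), and the heavy-edge branch handled as in Lemma~\ref{lem:oneroundheavy} by using constraint~\eqref{cons:krfjerkf} together with the cap on $x_e$ forced by $g(x_e,1)/x_e\ge\tau$ to get $\sum_{e\in H_u}x_e\le\rho(\tau)<1$ and then the improved per-edge rate $\frac{1-e^{-\rho(\tau)}}{\rho(\tau)}$ from Lemma~\ref{lem:prm1}, followed by the same numerical balancing of $\tau$ and $\lambda$. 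The only slip is in your informal tuning discussion (increasing $\tau$, not decreasing it, shrinks $\gm(\tau)$ and hence $\rho(\tau)$, helping the heavy-edge branch), but this is harmless since the constants are ultimately fixed by the numerical optimization, as in the paper's choice $\tau=0.8723$, $\lambda\approx0.184$.
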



\section{Properties of BaseMatching}\label{section:properties}
In this section, we focus on proving some desirable properties  of \PAM (not only the ones discussed before), which we will later need in our analysis.  

In the lemma stated below, we  provide upper and lower bounds for the probability of any edge joining the matching in  \PAM{$G, \b{x}, \sigma$}. This lemma directly implies the first two properties, \ref{item:twotwo}, and \ref{item:threethree}. Later, we also prove property~\ref{item:fourfour} in Lemma~\ref{lemma:ijerg}.
\begin{lemma}
\label{lem:prm1}

For any edge $e$, we have the following.
\begin{align*}
\frac{(1-e^{-\ubar})x_e}{\ubar} \le \Pr[e \in \match_1] \le \frac{x_e(1+e^{-\ubar})}{2}, \end{align*}
where $\match_1=$\PAM{$G, \b{x}, \sigma$}.

\end{lemma}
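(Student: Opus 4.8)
\textbf{Proof plan for Lemma~\ref{lem:prm1}.}

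The plan is to compute $\Pr[e \in \match_1]$ for a fixed edge $e = (v,u)$ by conditioning on the random permutation $\pi$ over vertices in $A$, specifically on the position of $v$ within $\pi$. By Claim~\ref{lemma:nierfj}, each vertex $v' \in A$ sends a proposal to $u$ along a given edge $e' = (v',u)$ with probability exactly $\tilde{x}_{e'}$, and — crucially — these proposal events are independent across the vertices $v' \in A$ (since each $v'$ runs \DIST{} on its own edge set using independent randomness, and the inner loop of \PAM{} only looks at $v'$'s own edges). The edge $e$ joins $\match_1$ precisely when (i) $v$ sends a proposal to $u$, and (ii) at the moment $v$ is processed, $u$ has not yet accepted any earlier proposal, i.e. none of the vertices $v'$ appearing before $v$ in $\pi$ proposed to $u$. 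So the first step is to write, conditioning on $v$ being in position such that the set of $A$-vertices before it is $S$,
\[
\Pr[e \in \match_1 \mid S \text{ precedes } v] = \tilde{x}_e \cdot \prod_{v' \in S}\bigl(1 - \tilde{x}_{(v',u)}\bigr).
\]
Then I would average over $\pi$: since $\pi$ is uniform, each $v'$ lies before $v$ independently with probability $1/2$ in the relevant sense — more precisely, averaging $\prod_{v'\in S}(1-\tilde{x}_{(v',u)})$ over a uniformly random ordering gives $\prod_{v' \ne v}\bigl(1 - \tfrac{1}{2}\tilde{x}_{(v',u)}\bigr)$ by the standard "random-order = independent coin per element" identity (each earlier vertex contributes factor $\frac12\cdot(1-\tilde x_{(v',u)}) + \frac12\cdot 1$). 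Hence
\[
\Pr[e \in \match_1] = \tilde{x}_e \prod_{v' \ne v}\Bigl(1 - \tfrac{1}{2}\tilde{x}_{(v',u)}\Bigr).
\]

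The second step is to bound this product. Let $\tilde\sigma_u = \sum_{e' \ni u}\tilde{x}_{e'}$; note that after Line~\ref{line1} of \PAM{} we have $\sigma_u = \sum_{e'\ni u} x_{e'} = \ubar$ for every $u \in B$, and the transformation sends this to $\tilde\sigma_u = \sum_{e'\ni u} g(x_{e'},\ubar)$. For the \emph{upper bound}, I use $1 - \tfrac12 t \le e^{-t/2}$ and also the crude bound $1-\tfrac12 t \le 1$, combined with $\tilde x_e \le x_e$ (Claim~\ref{xless}), to get $\Pr[e\in\match_1] \le \tilde x_e \le x_e$; to sharpen this to $\tfrac{x_e(1+e^{-\ubar})}{2}$ I would split off the factor coming from $v$'s "sibling mass": actually the cleanest route is to write $\prod_{v'\ne v}(1-\tfrac12 \tilde x_{(v',u)}) \le \tfrac12 + \tfrac12\prod_{v'}(1-\tilde x_{(v',u)})$-type convexity, or more directly bound the product by $e^{-\tilde\sigma_u/2 + \tilde x_e/2}$ and then use concavity/monotonicity of $g$ to show $\tilde x_e \cdot e^{-(\tilde\sigma_u - \tilde x_e)/2} \le \tfrac{x_e(1+e^{-\ubar})}{2}$; I expect the key analytic input here to be properties of $g$ established in Appendix~\ref{sec:transform}, in particular that $g(x,\ubar)$ together with $\tilde\sigma_u$ behaves like the "$k\to\infty$ parallel-edge" limit. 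For the \emph{lower bound}, I use $1 - \tfrac12 t \ge e^{-t}$ for $t \in [0,1]$? — no, that is false; rather $1-\tfrac12 t \ge e^{-t/2}\cdot(\text{something})$; the correct elementary inequality is $1 - \tfrac12 t \ge e^{-t/(2-t)} \ge e^{-t}$ for $t\in[0,1]$... I will instead invoke $\prod_{v'\ne v}(1-\tfrac12\tilde x_{(v',u)}) \ge \prod_{v'\ne v}(1 - \tilde x_{(v',u)})^{1/2}$... The simplest rigorous path: $1 - \tfrac12 t \ge (1-t)^{1/2}$ for $t\in[0,1]$ (square both sides: $1 - t + \tfrac14 t^2 \ge 1-t$, true), so the product is $\ge \bigl(\prod_{v'\ne v}(1-\tilde x_{(v',u)})\bigr)^{1/2} \ge (1-\tilde\sigma_u)^{1/2}$-ish; then relate $\tilde\sigma_u$ to $\ubar$ via $g$ to land on $\tfrac{(1-e^{-\ubar})x_e}{\ubar}$.

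The main obstacle I anticipate is exactly this last reconciliation: showing that the product $\tilde x_e \prod_{v'\ne v}(1 - \tfrac12\tilde x_{(v',u)})$ is bounded below by $\tfrac{(1-e^{-\ubar})}{\ubar}x_e$ uniformly over \emph{all} distributions of the sibling masses $\{\tilde x_{(v',u)}\}$ summing to $\tilde\sigma_u$. This is a constrained optimization — the product $\prod(1-\tfrac12 t_i)$ subject to $\sum t_i$ fixed is minimized when mass is concentrated, maximized when spread out — so I would argue the worst case is $t_i \to 0$ with many siblings, where $\prod(1-\tfrac12 t_i) \to e^{-\tilde\sigma_u/2}$, and then the entire inequality reduces to a one-variable calculus fact about $g$: that $\tfrac{g(x,\ubar)}{x}\cdot e^{-(\tilde\sigma_u - g(x,\ubar))/2}$, or whatever the exact exponent is, meets the target. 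I expect the precise form of $g$ — with its $\tfrac{(e^\ubar-1)(\ubar-x)x}{\ubar(e^\ubar - e^x)}$ shape — to have been reverse-engineered in Appendix~\ref{sec:transform} to make this identity tight, so the proof will cite those computations rather than redo them. I would also need to handle the edge case $x_e = \ubar$ (where $g(\ubar,\ubar) = 1-e^{-\ubar}$ by the limit definition) separately or by continuity.
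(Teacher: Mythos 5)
Your conditional formula $\Pr[e \in \match_1 \mid S \text{ precedes } v] = \tilde{x}_e \prod_{v' \in S}(1 - \tilde{x}_{(v',u)})$ is correct, but the averaging step that follows is not. Under a uniform permutation the events $\{\pi(v') < \pi(v)\}$ for different $v'$ are \emph{not} independent (they are positively correlated), so there is no ``independent coin per element'' identity: the correct computation (via independent uniform arrival times $\theta$ for the vertices of $A$) gives $\Pr[e\in\match_1] = \tilde{x}_e \int_0^1 \prod_{v' \ne v}\bigl(1-\theta\,\tilde{x}_{(v',u)}\bigr)\,d\theta$, not $\tilde{x}_e\prod_{v'\ne v}\bigl(1-\tfrac12 \tilde{x}_{(v',u)}\bigr)$. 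For two siblings with equal mass $t$ the former is $1-t+t^2/3$ while yours is $1-t+t^2/4$; in general your expression is a strict underestimate. This is fatal for your upper-bound direction: bounding your formula from above does not bound the true probability from above.

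Even with the correct integral formula, the actual core of the lemma is missing. The bounds must hold uniformly over all ways the sibling mass at $u$ can be distributed, and you explicitly leave this ``reconciliation'' open, hoping Appendix~\ref{sec:transform} supplies it; it does not (that appendix only derives $g$ and the $k\to\infty$ limit computation). Moreover, you set up the optimization with $\tilde\sigma_u=\sum_{e'\ni u}\tilde x_{e'}$ held fixed, but the real constraint imposed by the dummy edges is $\sum_{e'\ni u} x_{e'}=\ubar$ on the \emph{original} values; since $g$ is applied edge-by-edge and $g(x)/x$ is decreasing, $\tilde\sigma_u$ itself changes with how the mass is split, and within your fixed-$\tilde\sigma_u$ framework the spread-out configuration \emph{maximizes} the product, contradicting your identification of it as the worst case. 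The paper resolves exactly this point with an exchange argument on the original $x$-values: replacing any sibling edge $e'$ by two edges of mass $x_{e'}/2$ each (which raises the total transformed mass) can only decrease $\Pr[e\in\match_1]$, so the minimum is attained with infinitely many infinitesimal siblings and the maximum with a single sibling of mass $\ubar-x_e$; the two stated bounds then follow by direct computation with the closed form of $g$ at these two extremes (the exchange inequality itself is checked numerically/by plotting). Without an argument of this kind, your plan does not establish either inequality.
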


\begin{proof}
Before starting the proof, let us mention that for ease of notation, throughout the proof, we will refer to $x_e$ of an edge by its contribution to the solution of the LP (instead of $x_ew_e$).
Note that in algorithm $\match_1=$\PAM{$G, \b{x}, \ubar$}, we use dummy edges to ensure that for any vertex $u\in B$, we have $x_u \gets \sum_{e \ni u} x_e =\ubar.$  Suppose that $e=(v,u)$.   We claim that $\Pr[e \in \match_1]$ is minimized when  except $e$, $u$ has a large number of edges (approaching $\infty$) with the total contribution of $\ubar-x_e$. On the other hand, it is maximized  when $u$ is only incident to one other edge with the contribution of $\ubar-x_e$. To prove our claim, we take an edge $e'=(v', u)\neq e$ and show that replacing it with two edges $e_1'$ and $e_2'$ with $x_{e_1'}=x_{e_2'}= x_e/2$ increases $\Pr[e \in \match_1]$.

Let $\pi$ be the permutation over vertices in $A$ chosen uniformly at random in the algorithm constructing $\match_1$. For any edge $(v'', u)\in E'$, let $s_{v''}$ be a Bernoulli random variable which is equal to one if $v''$ is the first vertex from set $B/\{v'\}$ to send a proposal to $u$. This implies that edge $e$ has a chance of joining the matching only if $s_v=1$. In addition to that, $v'$ should not send a proposal before $v$.  By Claim~\ref{lemma:nierfj}, we know that the probability of any edge $e'$ being proposed by the algorithm is equal to $\tilde{x}_{e'}$.  Thus, we have 

\begin{align*}
 \Pr[e \in \match_1] &=  \Pr\left[v' \text{ does not send a proposal to } u \text{ before } v \mid s_v=1\right]\Pr[s_v=1] \\
 & = \left(1- \Pr\left[\pi(v') < \pi(v) \mid s_v=1\right]\Pr\left[v' \text{ sends a proposal to } u \mid s_v=1\right]\right)\Pr[s_v=1] \\
 & \hspace{-6.3mm}\stackrel{(Claim~\ref{lemma:nierfj})}{=} (1- 0.5\cdot \tilde{x}_{e'})\Pr[s_v=1].
\end{align*}
Now if we replace $e'$ with $e_1'=(v'_1, u)$ and $e_2' = (v'_2, u)$ we get

\begin{align*}
 \Pr[e \in \match_1] &=  \Pr\left[\text{neither } v_1'  \text{ nor }  v_2' \text{ send a proposal to } u \text{ before } v \mid s_v=1\right]\Pr[s_v=1] \\
 & = (1- 0.5\cdot \tilde{x}_{e_1'} - 0.5\cdot \tilde{x}_{e_2'} + 0.25\cdot \tilde{x}_{e_1'}\tilde{x}_{e_2'})\Pr[s_v=1].
\end{align*}
To prove our claim, we need to show 
\begin{align*}
(1- 0.5\cdot \tilde{x}_{e_1'} - 0.5\cdot \tilde{x}_{e_2'} + 0.25\cdot \tilde{x}_{e_1'}\tilde{x}_{e_2'})\Pr[s_v=1]- (1- 0.5\cdot \tilde{x}_{e'})\Pr[s_v=1] \leq 0.	
\end{align*}
Since $\Pr[s_v=1]$ is a non-negative number, we can factor  $0.5\cdot\Pr[s_v=1]$ out and get

\begin{align*}
- \tilde{x}_{e_1'} -  \tilde{x}_{e_2'} + 0.5\cdot \tilde{x}_{e_1'}\tilde{x}_{e_2'} +  \tilde{x}_{e'} \leq 0.	
\end{align*}
Now, plugging in the values of $\tilde{x}_{e'}$, $\tilde{x}_{e_1'}$, and $\tilde{x}_{e_2'}$, we need to prove that the following term is not positive for any $\ubar\in [0,1]$ and $x_e\in [0,\ubar]$.

\begin{align*}
	-2\frac{(e^\ubar-1)(\ubar  -x_e/2) x_e/2}{\ubar (e^\ubar- e^{x_e/2}) }+ 0.5\left(\frac{(e^\ubar-1)(\ubar  -x_e/2) x_e/2}{\ubar (e^\ubar- e^{x_e/2}) }\right)^2 +\frac{(e^\ubar-1)(\ubar  -x_e) x_e}{\ubar (e^\ubar- e^{x_e})}.
\end{align*}
After factoring out $\frac{(e^\ubar-1) x_e}{8\ubar}$ (which is always positive) from all terms, it suffices to prove: 

\begin{align*}
	-\frac{8(\ubar  -x_e/2)}{ (e^\ubar- e^{x_e/2}) }+ \frac{(e^\ubar-1)(\ubar  -x_e/2)^2 x_e}{\ubar (e^\ubar- e^{x_e/2})^2 } +\frac{8(\ubar  -x_e)}{ (e^\ubar- e^{x_e})}\leq 0.
\end{align*}
We do so by plotting this function and observing that it is not positive for any $\ubar\in [0,1]$ and $x_e\in [0,\ubar]$. 
This proves our claim that $\Pr[e \in \match_1]$ is minimized when except $e$, $u$ has a large number of edges (approaching $\infty$) with the total contribution of $\ubar-x_e$. On the other hand, it is maximized  when $u$ is only incident to one other edge with the contribution of $\ubar-x_e$. Now, we are ready to prove the desired lower and upper bounds for $\Pr[e \in \match_1]$. We start by proving the lower-bound.\\
\\
\textbf{Lower-bound:}
\\
Let $k$ denote the number of vertices adjacent to $u$ other than $v$ and define $\delta=\ubar-x_e$. We already proved that $\Pr[e \in \match_1]$ is minimized when $k\rightarrow \infty$ and $x_{e'} = \delta/k$  for any $e'\neq e$. Thus,
\begin{align*}\Pr[e \in \match_1] \\&= \Pr[e \text{ is proposed}]\sum_{i=0}^k\Pr[\text{There are } i \text{ edges before } e \text{ in  } \pi \wedge \text{none of them are proposed}]
\\&\geq g(x_e)\times \lim_{k\rightarrow \infty}\left(\sum_{i=0}^{k}\frac{1}{k+1} \left(1-g\left(\delta/k\right)\right)^i\right) 
\\& = \frac{g(x_e)}{k}\times \lim_{k \rightarrow \infty} \left( \sum_{i=0}^{k} \left(1-\left( \delta/k \right)\right)^i \right)
\\&= g(x_e)\times  \lim_{ k \rightarrow \infty}\left(\sum_{i=0}^{k} \frac{e^{-i\delta/k}}{k}\right) , \end{align*}
where the third equality is due to $\lim_{x\rightarrow 0}g(x)= x$.
Since $k\rightarrow \infty$, we can replace the above summation with an integration as follows:
\begin{align}
\label{eq:integrate}
\sum_{i=0}^{k} \frac{e^{-i\delta/k}}{k}=\int_{0}^{1} e^{-y\delta} dy= \frac{1-e^{-\delta}}{\delta} = \frac{1-e^{-\ubar + x_e}}{\ubar - x_e}.
\end{align}
This gives us:
\begin{align*} \Pr[e \in \match_1]& \geq \frac{(e^\ubar-1)(\ubar  -x_e) x_e}{\ubar (e^\ubar- e^{x_e})}\times \frac{1-e^{-\ubar + x_e}}{\ubar - x_e}
\\ &= \frac{x_e}{\ubar}\times \frac{(e^\ubar-1)(1-e^{-\ubar + x_e})}{e^\ubar-e^{x_e}}
\\ &=\frac{x_e}{\ubar}\times \frac{e^\ubar  - e^{x_e}-1 +  e^{-\ubar + x_e}}{e^\ubar-e^{x_e}}	
\\ &=\frac{x_e}{\ubar}\times \frac{(e^\ubar - e^{x_e}) (1-e^{-\ubar})}{e^\ubar-e^{x_e}}	
\\ &=\frac{x_e(1-e^{-\ubar})}{\ubar}
\end{align*}
 This completes the proof of our desired lower-bound. \\
 \\
 \textbf{Upper-bound:}\\
 Now, we proceed to prove the following upper-bound:
$$\Pr[e \in \match_1] \le \frac{1}{2}(1 + e^\ubar) x_e.$$ 
Recall that we have already proved that this probability is maximized  when $u$ is only incident to one other edge $e'$ with  $x_{e'}=\ubar-x_e$. Hence, we have 
\begin{align*}\Pr\left[e \in \match_1\right] &= \Pr[e \text{ is proposed}]\left(\Pr[\pi(e) < \pi(e') ] + \Pr[\pi(e) > \pi(e')  \wedge e' \text{ is not proposed}]\right)
\\ & = g(x_e) \left( 0.5 + 0.5(1-g(x_{e'})\right)
\\& = g(x_e) \left( 1-0.5g(x_{e'})\right)
\end{align*}
Note that $g(x)/x=\frac{(e^\ubar-1)(\ubar  -x)}{\ubar (e^\ubar- e^x)}$ is a decreasing function of $x$. As such, 
\begin{align*} g(x_e)/x_e\leq  \lim_{x\rightarrow 0}(g(x)/x) =  \lim_{x\rightarrow 0} \frac{(e^\ubar-1)(\ubar  -x)}{\ubar (e^\ubar- e^x)} = \frac{(e^\ubar-1)\ubar  }{\ubar (e^\ubar- 1)}= 1.\end{align*}
Moreover, combining this with the fact that $\lim_{x\rightarrow \ubar} (\ubar-x)/(e^\ubar-e^x) = e^{-\ubar},$ we get
$$g(x_{e'})/x_{e'}\geq \lim_{x\rightarrow \ubar}(g(x)/x) = \lim_{x\rightarrow \ubar} \frac{(e^\ubar-1)(\ubar  -x)}{\ubar (e^\ubar- e^x)} = \frac{(e^\ubar-1)e^{-\ubar}}{\ubar} = \frac{1 - e^{-\ubar}}{\ubar}.$$
As a result, we have 
\begin{align*}\Pr\left[e \in \match_1\right] &= g(x_e) \left( 1-0.5g(x_{e'})\right)
\\&=x_e \frac{g(x_e)}{x_e}\left( 1-0.5x_{e'}\frac{g(\ubar-x_{e'})}{\ubar-x_{e'}}\right)
\\& \leq x_e (1-0.5x_{e'}\frac{1 - e^{-\ubar}}{\ubar})
\\& \leq x_e (1-0.5(\ubar - x_e)\frac{1 - e^{-\ubar}}{\ubar})
\\& \leq x_e (1-0.5(1 - e^{-\ubar}))
\\& \leq x_e (0.5+0.5e^{-\ubar})
\\& \leq \frac{x_e}{2}(1+e^{-\ubar}).
\end{align*}
This proves our desired upper-bound and completes the proof of this lemma.
\end{proof}

We will now prove the most important property of this algorithm. We show that if we run \PAM once, then every edge $e \in E$ remains available with  probability $c \cdot (1- \tilde{x}_e/p_e)$ where $c$ is a constant. After an execution of \PAM, an edge remains available if it is not examined and both of its endpoints are still unmatched. Recall that \PAM guarantees that every edge gets proposed exactly with probability $\tilde{x}_e$. Additionally, the proposal of an edge happens anytime it gets examined and it gets realized. Since the realization of an edge happens independently with probability $p_e$, we can then say that \PAM examines every edge exactly with probability $\tilde{x}_e/p_e$. Thus, an edge $e$ will not get examined in one run of \PAM with probability $(1- \tilde{x}_e/p_e)$.

\begin{fact}
\label{fact:examine}
\PAM{$G, \b{x}, 1$} examines every edge $e \in E$ with probability $\tilde{x}_e/p_e$.
\end{fact}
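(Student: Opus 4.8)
\textbf{Proof proposal for Fact~\ref{fact:examine}.}

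The plan is to reduce the claim to the already-established fact that \PAM{} proposes every edge with probability exactly $\tilde{x}_e$ (Claim~\ref{lemma:nierfj}), by observing that being examined and being proposed differ only by the independent coin flip for realization. First I would recall the structure of the inner loop of \PAM{}: the loop over $v$'s edges (in the order of $\pi_v$, drawn via \DIST{}) examines edges one at a time, and terminates the instant one of them is realized — at which point that edge is proposed. So for a fixed edge $e=(v,u)$, the event ``$e$ is proposed'' is precisely the event ``$e$ is examined'' $\cap$ ``$e$ is realized''. Since the examination of $e$ depends only on $\pi$, on $\pi_v$, and on the realizations of the edges that precede $e$ in $\pi_v$ — and crucially \emph{not} on the realization of $e$ itself — and since $e$ is realized independently with probability $p_e$, these two events are independent. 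Hence
\[
\tilde{x}_e \;=\; \Pr[e \text{ is proposed}] \;=\; \Pr[e \text{ is examined}]\cdot p_e,
\]
where the first equality is Claim~\ref{lemma:nierfj}. Rearranging gives $\Pr[e \text{ is examined}] = \tilde{x}_e/p_e$, which is the claim. (Implicit here is that $\tilde{x}_e \le p_e$ for every edge, so the expression is a valid probability; this follows because $\tilde x$ is a feasible LP solution and Constraint~\eqref{cons:krfjerkf} applied to the singleton $F=\{e\}$ gives $\tilde x_e \le p_e$.)

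One subtlety worth spelling out is the independence claim, since whether $e$ is even reached in the loop could a priori be entangled with $e$'s realization. But the order $\pi_v$ is produced by \DIST{} using only the base permutation $\pi_v \sim \mathcal{D}^{\b x}_v$ and fresh uniform coins $c$ — none of which look at realizations — and whether the loop terminates before reaching $e$ depends only on the realizations of the strictly earlier edges in that order. Formally, condition on $\pi$ and on the output permutation $\pi'_v$ of \DIST{}; then ``$e$ is examined'' is a deterministic function of the realizations of the edges preceding $e$ in $\pi'_v$, which are independent of $e$'s realization, so $\Pr[e\text{ is proposed}\mid \pi,\pi'_v] = \Pr[e\text{ is examined}\mid\pi,\pi'_v]\cdot p_e$; averaging over $\pi,\pi'_v$ preserves the identity. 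The one point requiring a touch of care — the main (though minor) obstacle — is making sure the conditioning is set up so that ``$e$ examined'' genuinely excludes $e$'s own coin; once that is done the argument is a one-line division. Everything else is bookkeeping already handled by Claim~\ref{lemma:nierfj}.
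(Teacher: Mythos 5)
Your proof is correct and follows essentially the same route as the paper, which also derives the fact by noting that a proposal of $e$ is exactly the event that $e$ is examined and realized, invoking Claim~\ref{lemma:nierfj} and the independence of $e$'s realization from the examination event, and dividing by $p_e$. Your added care about conditioning on $\pi,\pi'_v$ and the check that $\tilde{x}_e\le p_e$ only make explicit what the paper leaves implicit.
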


To show that an edge remains available with probability $c \cdot (1- \tilde{x}_e/p_e)$, we have to show that in case this edge is not examined by \PAM, then with a large constant probability $c$ both of its endpoints remains unmatched. 
An important property of \PAM is that it ensures every vertex remains unmatched with a large enough probability. This is easy to verify as for a vertex $u \in A \cup B$, Lemma \ref{lem:prm1} implies that every $e \in E_u$ joins the matching with  probability at most $\frac{x_e (1+ e^{-\ubar})}{2}$. Therefore, vertex $u$ gets matched with probability at most $\frac{1+ e^{-\ubar}}{2}$, and it remains unmatched with probability at least $1-\frac{1+ e^{-\ubar}}{2}$. 

Although for every edge $e$ each of its endpoints remains unmatched with a constant probability, we cannot simply argue that $e$ remains available with a large enough probability. The reason is that for an edge $e=(v,u)$, the three events that $e$ is not examined and $v$ and $u$ are unmatched can be potentially correlated. Even though each of these three events happens with a large enough probability, in order to find their joint probability we need a more careful analysis.
\begin{lemma}
\label{lem:r2}
Let $\match_1= $ \PAM{$G, \b{x}, 1$} and $E_2$ be the set of edges in $E$ that are not examined and do not share any endpoints with edges in $\match_1$. Then for any edge $e$ the following holds.
\begin{align*}
\Pr [ e \in E_2 ] \ge  (1- \frac{\tilde{x}_e}{p_e}) \cdot \big(\frac{1-1/e}{2}\big)^2  \,. 
\end{align*}
\end{lemma}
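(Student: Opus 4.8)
The goal is to lower bound $\Pr[e \in E_2]$ for a fixed edge $e = (v,u)$. By Fact~\ref{fact:examine}, $e$ is not examined with probability $1 - \tilde{x}_e/p_e$. So it would suffice to show that conditioned on $e$ not being examined, the probability that both $v$ and $u$ are unmatched is at least $\left(\frac{1-1/e}{2}\right)^2$. The plan is to decouple the two endpoints: first condition on $e$ not being examined, then argue the events "$v$ unmatched" and "$u$ unmatched" are, in the appropriate conditional sense, positively correlated (or at least, that their conditional product can be lower bounded by the product of the individual conditional lower bounds).

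First I would analyze $v$'s side. Since \PAM draws a uniform permutation $\pi$ over $A$ and, for each $v$ in turn, draws $\pi_v \sim \mathcal{D}^{\tilde{\b x}}_v$ via \DIST, the matched status of $v$ depends only on which of $v$'s own edges (if any) is the first realized one in $\pi_v$ and whether that edge's $B$-endpoint is free. The key observation is that whether $e$ is examined is determined by $v$'s internal randomness (the position of $e$ in $\pi_v$, the coin $c$ drawn for $e$ and for edges before it, and the realizations of edges before $e$ in $\pi_v$). Conditioned on $e$ \emph{not} being examined, this can happen either because the loop in \DIST ended before reaching $e$ (so $\pi'_v$ was truncated early — meaning some earlier edge was proposed, hence $v$ got matched or blocked), or because $e$ was dropped from $\pi'_v$ (the $c \le p_e(1-\tilde{x}_e/x_e)$ branch), or because $e$ is in $\pi'_v$ but an earlier edge of $\pi'_v$ was realized. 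I expect that conditioning on "$e$ not examined" only \emph{increases} the chance $v$ is unmatched relative to the unconditional bound, because not examining $e$ tends to coincide with $v$'s proposal attempt stopping. Concretely, I would show $\Pr[v \text{ unmatched} \mid e \text{ not examined}] \ge \Pr[v \text{ unmatched}] \ge 1 - \frac{1+e^{-1}}{2} = \frac{1 - 1/e}{2}$, using Lemma~\ref{lem:prm1} with $\ubar = 1$ for the unconditional bound (summing $\Pr[e' \in \match_1] \le x_{e'}(1+e^{-1})/2$ over $e' \in E_u$ with $\sum x_{e'} = 1$ after dummy edges), and a monotonicity/coupling argument for the conditioning.

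The symmetric statement for $u$ is the harder part. The event "$u$ unmatched" depends on the entire history of proposals $u$ receives, which involves many vertices in $A$ and the global permutation $\pi$ — it is \emph{not} determined by $v$'s local randomness alone. So after conditioning on $e$ not examined, I would want $\Pr[u \text{ unmatched} \mid e \text{ not examined}] \ge \frac{1-1/e}{2}$ as well, and then combine. The cleanest route is: condition first on all of $v$'s randomness (call it the event "$e$ not examined" restricted to a specific realization of $v$'s coins/permutation/edge-realizations-before-$e$); given that, $u$'s unmatched status is governed by the proposals from $A \setminus \{v\}$ plus possibly $v$'s proposal — but if $e$ is not examined, $v$ sends \emph{no} proposal along $e$, and whether $v$ proposes along some \emph{other} edge is irrelevant to $u$. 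So conditioned on $e$ not examined, $u$ receives proposals only from $A \setminus \{v\}$, and a fresh application of Lemma~\ref{lem:prm1} (restricted to $E_u \setminus \{e\}$, whose $\tilde{x}$-masses only went down) gives $\Pr[u \text{ matched} \mid e \text{ not examined}] \le \sum_{e' \in E_u, e' \ne e} \tilde{x}_{e'} \cdot (\text{per-edge bound}) \le \frac{1+e^{-1}}{2}$, hence $\Pr[u \text{ unmatched} \mid e \text{ not examined}] \ge \frac{1-1/e}{2}$. Finally I would argue the two events are conditionally independent, or positively correlated, given "$e$ not examined": once we fix $v$'s randomness making $e$ unexamined, "$v$ unmatched" depends only on $v$'s side (which other edge of $v$, if any, fires and whether its $B$-endpoint is free) while "$u$ unmatched" depends on the $A\setminus\{v\}$ proposals; the only coupling is through shared $B$-vertices, and an FKG-type or direct coupling argument shows the correlation is nonnegative. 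Multiplying the two conditional bounds yields $\Pr[v,u \text{ both unmatched} \mid e \text{ not examined}] \ge \left(\frac{1-1/e}{2}\right)^2$, and multiplying by $\Pr[e \text{ not examined}] = 1 - \tilde{x}_e/p_e$ gives the lemma.

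\textbf{Main obstacle.} The delicate point is handling the correlation between "$v$ unmatched", "$u$ unmatched", and "$e$ not examined" simultaneously — in particular, justifying that conditioning on "$e$ not examined" does not \emph{hurt} either endpoint's chance of being free, and that the two endpoint events do not conspire negatively given that conditioning. I expect the proof to set up an explicit coupling that freezes $v$'s local randomness first, reducing "$e$ not examined" to a statement purely about $v$, after which $u$'s side sees a graph with weakly smaller $\tilde{x}$-masses on $E_u$ and no proposal from $v$ — making both bounds follow from Lemma~\ref{lem:prm1} applied in the reduced instance. Getting this conditioning bookkeeping right, rather than any single inequality, is where the real work lies.
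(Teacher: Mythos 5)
Your overall skeleton matches the paper's: factor $\Pr[e\in E_2]$ by the chain rule into $\Pr[e\text{ not examined}]$ (which is $1-\tilde{x}_e/p_e$ by Fact~\ref{fact:examine}), a conditional bound for the $A$-endpoint $v$, and a conditional bound for the $B$-endpoint $u$, with a positive-correlation step to combine them; your observation that $u$ is matched iff it receives a proposal, and that proposals from $A\setminus\{v\}$ are independent of $v$'s randomness, is essentially the paper's Lemmas~\ref{lem:ucorrelation} and~\ref{lem:uunmatched}. However, there is a genuine gap at the step you flag as "a monotonicity/coupling argument": the claim $\Pr[v\text{ unmatched}\mid e\text{ not examined}]\ \ge\ \Pr[v\text{ unmatched}]$ is false in general. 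The event that $e$ is not examined is \emph{not} only caused by $v$'s process stopping; it is also caused by some earlier edge of $\pi'_v$ being realized, i.e., by $v$ having proposed to a different neighbor $u'$. Conditioning on "$e$ not examined" therefore inflates the probability that $v$ proposed elsewhere (e.g., if $e'$ precedes $e$ in $\pi_v$, then $\Pr[v\curly u'\mid e\text{ not examined}]=\tilde{x}_{e'}/(1-\tilde{x}_e/p_e)>\tilde{x}_{e'}$), and if that proposal is likely to be accepted while a proposal along $e$ would likely have been rejected, the conditional probability that $v$ is matched strictly exceeds the unconditional one. So you cannot reduce the conditional bound to the unconditional bound from Lemma~\ref{lem:prm1}; this is exactly the "most challenging" part the paper isolates as Lemma~\ref{lemma:ijerg}, which it proves by conditioning on the realized permutation $\bar{\pi}_v$, partitioning according to where the examination process terminates, and invoking the quantitative property of the transformation function that $\frac{g(x,1)}{x}\bigl(\tfrac12+\tfrac12(1-g(1-x,1))\bigr)\le\tfrac12+\tfrac1{2e}$, so that even when $v$ does propose under this conditioning, the proposal yields a match with probability at most $\tfrac12+\tfrac1{2e}$.

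A secondary point: your appeal to an "FKG-type" argument for the positive correlation between "$v$ unmatched" and "$u$ unmatched" given "$e$ not examined" is not immediate, since each $v'\in A$ proposes to at most one neighbor and the relevant events are not monotone functions of independent coordinates in any obvious product structure. The paper instead argues this directly (Lemma~\ref{lem:ucorrelation}), partitioning "$v$ unmatched" into the cases "$v$ did not propose" and "$v$ proposed to some $u'\neq u$ that was already matched", and proving for the latter the explicit inequality $\Pr[v'\not\curly u]\le\Pr[v'\not\curly u\mid X_{v,u'}\wedge e\text{ not examined}]$. Your intuition for why the correlation goes the right way (a competing proposal to $u'$ makes $v'$ less likely to have proposed to $u$) is correct, but it needs this explicit case analysis rather than a generic FKG invocation.
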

\begin{proof}
Consider an edge $e = (v,u)$, as we discussed in the course of Fact \ref{fact:examine}, \PAM{$G, \b{x}, 1$} examines this edge with  probability $\tilde{x}_e/p_e$. Thus, $e$ does not get examined with probability $\tilde{x}_e/p_e$. In order to prove the lemma we have to show that conditioning on the event that $e$ is not examined, \PAM leaves both $v$ and $u$ unmatched with a large constant probability.

Throughout this proof we use $v \curly u$ to denote the event that $v$ proposes to the $u$ during the execution of \PAM{$G, \b{x}, 1$}, i.e., $v$ examines the edges $(v,u)$ and this edge gets realized. Similarly, we use $v \not \curly u$ to denote the event that $v$ is not proposed to $u$ in the first round. Also, with a slight abuse of notation, we use $v \in \match_1$ to denote the event that $v$ is one of the endpoints of the edges in $\match_1$. Similarly, $v \notin \match_1$ is used to denote the event that $v$ is not any of the endpoints of the edges in $\match_1$.

Using the chain rule, we can expand the probability that $e \in E_2$ as below.

\begin{align*}
&\Pr [e \in E_2]
\\&= \Pr [ \text{$e$ is not examined} \wedge v \notin \match_1 \wedge u \notin \match_1 ] \\
&= \Pr [ \text{$e$ is not examined}] \cdot \Pr[v \notin \match_1 \ |\ \text{$e$ is not examined}] \\
&\quad \cdot\Pr[u \notin \match_1 \ |\ v \notin \match_1 \wedge \text{$e$ is not examined}] 
\end{align*} 

In order to calculate the probability above we take three major steps. We first carefully analyze the probability that $v$ remains unmatched conditioning on the event that $e$ is not examined. In the second step, we show that the event that $u$ remains unmatched by \PAM is positively correlated with the event that $e$ is not examined and $v$ is also unmatched. Specifically, we show that $\Pr[ u \notin \match_1 \ | \ v \notin \match_1 \wedge \text{$e$ is not examined} \ ] \ge \Pr [ u \not\in \match_1]$. Finally, in the last step, we derive the probability that $u$ remains unmatched by \PAM. In this step we show that in the execution of \PAM{$G, \b{x}, 1$}, every vertex of the graph remains unmatched with probability at least $\frac{1-1/e}{2}$. We first focus on the first step. The proof of the lemma is quite technical and it is differed to the appendix.


\begin{restatable}{lemma}{ijerg}
\label{lemma:ijerg}
$\Pr[ v \notin \match_1\ |\ \text{$e$ is not examined }] \ge \frac{1-1/e}{2}$
\end{restatable}

We now focus on the second step of the proof, and we show that the event that $u$ remains unmatched by \PAM is positively correlated with the event that $e$ is not examined and $v$ is also unmatched.
\begin{lemma}
\label{lem:ucorrelation}
$\Pr[ u \notin \match_1 \ | \ v \notin \match_1 \ \wedge \ \text{$e$ is not examined} \ ] \ge \Pr [ u \not\in \match_1]$.
\end{lemma}

\begin{proof}
consider the event that $v$ is not matched by \PAM. There are 3 different cases that can result in $v$ being unmatched. The first case is when $v$ does not proposes to any of its neighbors. It is clear that $v$ remains unmatched in this case. The second case is when $v$ proposes to a vertex $u' \neq u$, however $u'$ has received another proposal before $v$, let say from vertex $v'$. In this case $u'$ gets matched to $v'$, and $v$ remains unmatched. The last case is when $v$ proposes to $u$, however $u$ has got another proposal before $v$. It is also clear that $v$ remains unmatched in this case.
\begin{definition}
The event that $v \in A$ is not matched by \PAM can be partitioned into following disjoint cases.
\begin{itemize}
\item \textbf{Case 1}: $v$ does not propose to any of its neighbors.
\item \textbf{Case 2}: $v$ proposes to a vertex $u' \neq u$, however $u'$ is matched before this proposal.
\item \textbf{Case 3}: $v$ proposes to $u$, and $u$ is matched before this proposal.
\end{itemize}
\end{definition}

Note that Case 3 never happens if we condition on the event that $e$ is not examined. Thus,
\begin{align}
\label{eq:case3}
\Pr [ \text{Case 3} \  | \ v \notin \match_1 \ \wedge \ \text{$e$ is not examined}] =0 \,.
\end{align}
 
 We show that for the first two cases, the two events that $u$ remains unmatched by \PAM is positively correlated with the event that $e$ is not examined and $v$ is also unmatched. This immediately implies the following and proves the lemma.
 
 \begin{align*}
 & \Pr[ u \notin \match_1 \ | \ v \notin \match_1 \ \wedge \ \text{$e$ is not examined}] \\
 &= \sum_{i=1}^3 \Pr[ u \notin \match_1 \ | \ \text{Case } i \ \wedge \ \text{$e$ is not examined}] \cdot \Pr [ \text{Case } i \  | \ v \notin \match_1 \ \wedge \ \text{$e$ is not examined}] \\
  &= \sum_{i=1}^2 \Pr[ u \notin \match_1 \ | \ \text{Case } i \ \wedge \ \text{$e$ is not examined}] \cdot \Pr [ \text{Case } i \  | \ v \notin \match_1 \ \wedge \ \text{$e$ is not examined}]
& \text{By (\ref{eq:case3})} \\
& \ge \sum_{i=1}^2 \Pr[ u \notin \match_1 ] \cdot \Pr [ \text{Case } i \  | \ v \notin \match_1 \ \wedge \ \text{$e$ is not examined}] \\
& \qquad \text{(Assuming the positive correlation.)} \\ 
& = \Pr[ u \notin \match_1 ] \cdot  \sum_{i=1}^2 \Pr [ \text{Case } i \  | \ v \notin \match_1 \ \wedge \ \text{$e$ is not examined}] \\
& = \Pr[ u \notin \match_1 ]  \,.\\
& \qquad \text{(Since Case 3 never happens)} \\ 
 \end{align*}

Consider the Case 1 when $v$ proposes to none of its neighbors. In this case for every incident edge $e'$ of $v$, either $e'$ is not examined, or it is not realized. In other words, we have $v \not \curly u'$ for every $u' \in B$. In this case we have
\begin{align*}
&\Pr[ u \notin \match_1 \ | \ \text{Case } 1 \ \wedge \ \text{$e$ is not examined}] \\
&=\Pr[ u \notin \match_1 \ |\ \text{$v$ does not proposes to neighbors} \ \wedge \ \text{$e$ is not examined} ] \\
&=\Pr[ u \notin \match_1 \ | \bigwedge_{ u' \in B} v \not \curly u'  \ \wedge \ \text{$e$ is not examined} ] \\
&= \Pr[ \bigwedge_{ v' \in A} v' \not \curly u \ |  \bigwedge_{ u' \in B} v \not \curly u'  \ \wedge \ \text{$e$ is not queried}] \,,
\end{align*}
where the equality above is derived from the fact that $u$ remains unmatched if and only if it gets no proposal from its neighbors. Considering the probability above, the two events that $v' \not \curly u$ and $v \not \curly u'$ are independent if $v' \neq v$. The is because every vertex in $A$ proposes to its neighbors independent of other vertices in $A$. Due to the same reason, an event $v' \not \curly u$ is independent from the event  that $e$ is not examined if $v' \neq v$. Thus,
\begin{align*}
&\Pr[ u \notin \match_1 \ | \bigwedge_{ u' \in B} v \not \curly u'  \ \wedge \ \text{$e$ is not examined}] \\
&= \Pr[ \bigwedge_{ v' \in A\setminus \{v\}} v' \not \curly u] \cdot \Pr[ v \not \curly u \ |  \bigwedge_{ u' \in B} v \not \curly u'  \ \wedge \ \text{$e$ is not examined}] \\
&=\Pr[ \bigwedge_{ v' \in A\setminus \{v\}} v' \not \curly u] \\ & (\text{Since $\Pr[ v \not \curly u \ |  \bigwedge_{ u' \in B} v \not \curly u'  \ \wedge \ \text{$e$ is not examined}]=1$}) \\
&\ge \Pr[ \bigwedge_{ v' \in B} v' \not \curly u] = \Pr [ u \notin \match_1] \,. \\
& (\text{$u$ remains unmatched if it gets no proposal})
\end{align*}
This shows the positive correlation for Case 1.

We now show the positive correlation for Case 2. In this case $v$ remains unmatched in \PAM because $v$ proposes to a vertex $u' \neq u$, however $u'$ is already matched before $v$. For a random permutation $\pi$ over the vertices in $A$, we use $\pi_{<v}$ to denote the set of vertices that are earlier than $v$ in the permutation, i.e., $\pi_{<v}= \{v' \in A : \pi(v') < \pi(v)\}$. Then, we can say that Case 2 happens if $v$ proposes to a vertex $u'$, and at least one other vertex $v' \in \pi_{<v}$ also proposes to $u'$. We use $X_{v,u'}$ to denote the event that $v$ proposes to $u'$, however at least one other vertex $v' \in \pi_{<v}$ has already proposed to $u'$. Thus, the probability that $u$ remains unmatched conditioned that Case 2 happens is as follows.
\begin{align}
&\Pr[ u \notin \match_1 \ | \ \text{Case } 2 \ \wedge \ \text{$e$ is not examined}] \nonumber \\
& = \sum_{u' \neq  u} \Pr[ u \notin \match_1 \ | \ X_{v,u'} \ \wedge \ \text{$e$ is not examined}] \cdot \Pr[ X_{v,u'} |  \ \text{Case } 2 \ \wedge \ \text{$e$ is not examined}] \nonumber \\
& = \sum_{u' \neq  u} \Pr[ \bigwedge_{v' \in A} v' \not \curly u \ | \ X_{v,u'} \ \wedge \ \text{$e$ is not examined}] \cdot \Pr[ X_{v,u'} |  \ \text{Case } 2 \ \wedge \ \text{$e$ is not examined}] \nonumber \\
\label{eq:case2}
& = \sum_{u' \neq  u} \bigg(\Pr[ X_{v,u'} |  \ \text{Case } 2 \ \wedge \ \text{$e$ is not examined}] \cdot \prod_{v' \in A} \Pr[ v' \not \curly u \ | \ X_{v,u'} \ \wedge \ \text{$e$ is not examined}] \bigg) \,.\\
&\text{(Since proposals of vertices in $A$ are independent)}  \nonumber
\end{align}

In the claim below we show that for each $u' \neq u$ and $v' \in A$, we have $\Pr[v' \not \curly u] \le \Pr[ v' \not \curly u \ | \ X_{v,u'} \ \wedge \ \text{$e$ is not examined} \ ]$. The proof of the claim is available in appendix.

\begin{restatable} {claim} {clmvpnotu}
For an edge $e=(v,u)$,  and every $u' \in B$ and $v' \in A$ such that $u \neq u'$, we have $$\Pr[v' \not \curly u] \le \Pr[ v' \not \curly u \ | \ X_{v,u'} \ \wedge \ \text{$e$ is not examined} \ ] \,.$$
\end{restatable}

By the claim above and (\ref{eq:case2}) we then have
\begin{align*}
&\Pr[ u \notin \match_1 \ | \ \text{Case } 2 \ \wedge \ \text{$e$ is not examined}] \\
& = \sum_{u' \neq  u} \bigg(\Pr[ X_{v,u'} |  \ \text{Case } 2 \ \wedge \ \text{$e$ is not examined}] \cdot \prod_{v' \in A} \Pr[ v' \not \curly u \ | \ X_{v,u'} \ \wedge \ \text{$e$ is not examined}] \bigg) \\
&\ge \sum_{u' \neq  u} \bigg(\Pr[ X_{v,u'} |  \ \text{Case } 2 \ \wedge \ \text{$e$ is not examined}] \cdot \prod_{v' \in A} \Pr[ v' \not \curly u ] \bigg) \\
&=  \sum_{u' \neq  u} \bigg(\Pr[ X_{v,u'} |  \ \text{Case } 2 \ \wedge \ \text{$e$ is not examined}] \cdot \Pr[ \bigwedge_{v' \in A} v' \not \curly u ] \bigg) \\
&\text{(Since proposals of vertices in $A$ are independent)}  \\
&= \Pr[ \bigwedge_{v' \in A} v' \not \curly u ] = \Pr[ u \notin \match_1] \,.
\end{align*}
This shows the positive correlation for the second case, and proves the lemma.


\end{proof}

We now focus on the last step of the proof and we show that if we run \PAM{$G, \b{x}, 1$}, every vertex $u \in B$ remains unmatched with probability at least $\frac{1-1/e}{2}$. 

\begin{lemma}
\label{lem:uunmatched}
For every vertex $u \in B$, the following holds.
\begin{align*}
\Pr[u \notin \match_1] \ge \dfrac{1-1/e}{2}
\end{align*}
\end{lemma}
\begin{proof}
Since at most one of the incident edges of $u$ are in $\match_1$, the following holds.
\begin{align*}
\Pr [ u \in \match_1] &= \sum_{e \ni u} \Pr [ e \in \match_1] \\
&\le  \sum_{e \ni u} \frac{x_e (1+1/e)}{2} & \text{By Lemma \ref{lem:prm1} with \ubar=1} \\
&= \frac{1+1/e}{2} \cdot \sum_{e \ni u} x_e \\
& \le \frac{1+1/e}{2} \,.  
\end{align*}
Thus, the probability that $u$ remains unmatched is at least $1-\frac{1+1/e}{2}= \frac{1-1/e}{2}$.
\end{proof}
Now by Lemma \ref{lemma:ijerg}, \ref{lem:ucorrelation} and \ref{lem:uunmatched} we have
\begin{align*}
&\Pr [e \in E_2]
\\&= \Pr [ \text{$e$ is not examined}] \cdot \Pr[v \notin \match_1 \ |\ \text{$e$ is not examined}] \\
&\quad \cdot\Pr[u \notin \match_1 \ |\ v \notin \match_1 \wedge \text{$e$ is not examined}]  \\
&\ge (1- \frac{\tilde{x}_e}{p_e}) \cdot \big(\frac{1-1/e}{2}\big)^2 \,.
\end{align*}

\end{proof}

\section{Analysis of APXMatching}\label{section:approx-ratio}
Consider the LP of the problem, we show that there is a adaptive algorithm that given any feasible solution of the problem achieves a matching with the weight of at least $(1-1/e + \apximprove) \cdot \b{x} \cdot \b{w}$ in expectation. 

According to the lemma \ref{lem:prm1}, during the single run of our algorithm according to the fractional solution $\tilde{\b{x}}$ it is guaranteed that the algorithm matches every edge in the graph with  probability  at least $(1-1/e)$, thus it provides us $(1-1/e)$ approximation. Our idea to beat this approximation factor is to  leverage Lemma \ref{lem:r2} and show that every edge proceeds to the second round with a large probability. Recall that Lemma \ref{lem:r2} shows that if we run  \PAM{$G, \b{x}, 1$} every edge proceeds to the second round with probability at least $(1- \frac{\tilde{x}_e}{p_e}) \cdot \big(\frac{1-1/e}{2}\big)^2$. This however, cannot guarantee that every edge proceeds to the second round with a non-zero constant probability. For example, consider an edge with the $\tilde{x}_e=p_e$. Then our algorithm queries this edge with probability $\frac{\tilde{x_e}}{p_e}=1$. Therefore, this edge never proceeds to the second round. This is basically the same reason that the factor of $(1- \frac{\tilde{x}_e}{p_e})$ appears in the probability bound of Lemma \ref{lem:r2}.

Let $\tau$ be a constant in which we determine later. We call edges with $\frac{\tilde{x}_e}{p_e}\ge\tau$, the  \textit{heavy edges}. During the first round of \PAM, the algorithm examines every edge of the graph with probability $\frac{\tilde{x_e}}{p_e}$. Considering the heavy edges, the value of $\frac{\tilde{x}_e}{p_e}$ is large for them, and these edges have a higher chance to get examined during the first round of algorithm. On the other hand, these edges are less likely to be available for the second round. Due to this asymmetry between the edges of the graph, we divide them in to two groups. For each vertex $u$, we use $H_u$ to denote the set of heavy edges incident to $u$. Specifically, $H_u= \{e = (v,u)\ | \ \frac{\tilde{x}_e}{p_e}\ge\tau\}$. We also use $H$ to denote all heavy edges, i.e., $H= \bigcup_{u \in U} H_u$(Note that $H= E \setminus E^{(\tau)}$). We also use $\wdotx_H= \sum_{e \in H} w_e \cdot x_e$ to denote the weighted contribution of the edges in $H$. Suppose that we run the algorithm for two rounds with the parameter $\sigma=1$, in the following lemma we study the approximation guarantee of our algorithm.

\begin{lemma}
\label{lem:tworounds}
Let $\match_1 = $ \PAM{$G, \b{x}, 1$}, $E_2$ be the set of edges in $E$ that are not examined and do not share any endpoints with edges in $\match_1$. Also, let $\match_2=  $ \PAM{$A, B, E_2, \b{x}, 1$}, and $\match = \match_1 \cup \match_2$.  Then,
\begin{align*}
\frac{\E[\W{\match}]}{\opt} \ge  (1-1/e) + \frac{(1-1/e)^3}{4} \cdot (1- \frac{\wdotx_H}{\optlp}) \cdot  (1-\tau) \,,
\end{align*}
where $\optlp = \b{x} \cdot \b{w}$ is the solution of the LP.
\end{lemma}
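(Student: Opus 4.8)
The plan is to lower-bound $\E[\W{\match}]$ by the sum of the two rounds' contributions. For the first round, Lemma~\ref{lem:prm1} with $\ubar = 1$ guarantees that every edge $e$ joins $\match_1$ with probability at least $(1-e^{-1})x_e$, so $\E[\W{\match_1}] \ge (1-1/e)\optlp$. The whole game is to show the second round adds at least $\frac{(1-1/e)^3}{4}\cdot(1-\wdotx_H/\optlp)\cdot(1-\tau)\cdot\optlp$ on top of this.

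First I would set up the key fractional solution for the second round. The idea is that on the graph $(A,B,E_2)$ we still have a feasible solution of \ref{LP-Match}: for each edge $e$, use the value $x_e\cdot\mathbbm{1}[e\in E_2]$ (or rather, since $E_2$ is random, we want to argue in expectation). Running \PAM{$A,B,E_2,\b{x},1$} and again invoking Lemma~\ref{lem:prm1} with $\ubar = 1$ on this instance, each edge $e$ of $E_2$ joins $\match_2$ with probability at least $(1-1/e)x_e$ conditioned on $e\in E_2$ (one must check the LP constraints \eqref{cons:krfjerkf} still hold on the subgraph — they do, since restricting to a subset of edges only shrinks the left-hand sides while the right-hand side $1-\prod(1-p_e)$ also shrinks appropriately; more carefully, the values $x_e$ restricted to $E_2$ form a feasible solution for the subgraph because every constraint in the subgraph is a constraint of the original graph over a subset $F$). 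Hence
\begin{align*}
\E[\W{\match_2}] \ge (1-1/e)\sum_{e\in E} w_e x_e \Pr[e\in E_2].
\end{align*}

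Next I would plug in Lemma~\ref{lem:r2}, which gives $\Pr[e\in E_2]\ge (1-\tilde{x}_e/p_e)\cdot\big(\frac{1-1/e}{2}\big)^2$. Restricting the sum to the light (non-heavy) edges $e\in E^{(\tau)} = E\setminus H$, where by definition $\tilde{x}_e/p_e < \tau$, we get $1-\tilde{x}_e/p_e > 1-\tau$, so
\begin{align*}
\E[\W{\match_2}] \ge (1-1/e)\cdot\Big(\frac{1-1/e}{2}\Big)^2\cdot(1-\tau)\cdot\sum_{e\in E^{(\tau)}} w_e x_e = \frac{(1-1/e)^3}{4}(1-\tau)(\optlp - \wdotx_H).
\end{align*}
Writing $\optlp - \wdotx_H = (1 - \wdotx_H/\optlp)\optlp$ and adding the two bounds, then dividing by $\opt \le \optlp$, yields exactly the claimed inequality $\frac{\E[\W{\match}]}{\opt} \ge (1-1/e) + \frac{(1-1/e)^3}{4}(1-\wdotx_H/\optlp)(1-\tau)$.

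The main obstacle I anticipate is the step asserting $\E[\W{\match_2}]\ge (1-1/e)\sum_e w_e x_e\Pr[e\in E_2]$: this requires running \PAM on the \emph{random} subgraph $(A,B,E_2)$ with the \emph{original} weights and $x$-values, and then arguing that the per-edge guarantee of Lemma~\ref{lem:prm1} composes correctly with the randomness of which edges landed in $E_2$. One has to be careful that $\match_2$'s analysis is conditionally valid given $E_2$ — i.e., the permutation and coin flips in the second \PAM call are fresh and independent of the first round — so that $\Pr[e\in\match_2\mid e\in E_2]\ge (1-1/e)x_e$ and then take expectations over $E_2$. A minor subtlety is that the $x_e$ values feeding the second \PAM call must satisfy $\sum_{e\ni u, e\in E_2} x_e \le 1$; this holds because it is at most $\sum_{e\ni u} x_e = \ubar_u \le 1$. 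With these checks in place the computation is routine.
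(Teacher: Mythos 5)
Your proposal is correct and follows essentially the same route as the paper: bound the first round by Lemma~\ref{lem:prm1} with $\ubar=1$, bound the second round per edge by combining the same lemma (applied to the instance restricted to $E_2$) with the availability guarantee of Lemma~\ref{lem:r2}, drop the heavy edges to introduce the factor $(1-\tau)$, and divide by $\opt \le \optlp$. The extra care you take in verifying that $\b{x}$ restricted to $E_2$ remains LP-feasible and that the second run's guarantee holds conditionally on $E_2$ is a more explicit justification of a step the paper passes over with ``using the same argument,'' not a different argument.
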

\begin{proof}
Consider the first round \PAM, due to Lemma \ref{lem:prm1}, when $\sigma=1$, the algorithm  matches every edge $e$ with probability at least $ (1-1/e) \cdot x_e$. Thus,
\begin{align}
\E[\W{\match_1}] &= \sum_{e \in E} w_e \cdot \Pr[ e \in \match_1] \nonumber \\
& \ge \sum_{e \in E} w_e \cdot (1-1/e) \cdot x_e & \text{By Lemma \ref{lem:prm1}} \nonumber\\
\label{eq:em1}
& = (1-1/e) \cdot \sum_{e \in E} w_e \cdot x_e = (1-1/e) \cdot \optlp \,.
\end{align}
Therefore, the expected size of the matching in the first round of the algorithm is at least $(1-1/e) \cdot \opt$.

Now consider the second round when \PAM creates the matching $\match_2$. Using the same argument we can say that if an edge $e$ proceeds to the second round, then it gets matched with probability at least $(1-1/e) \cdot x_e$. Thus,
\begin{align}
\label{eq:prm2}
\Pr[ e \in \match_2 ] \ge \Pr [ e \text{ proceeds to the round 2}] \cdot (1-1/e) \cdot x_e \,.
\end{align} 
Therefore,
\begin{align*}
&\E[\W{\match_2}] = \sum_{e \in E} w_e \cdot \Pr[ e \in \match_2]  \\
&\ge  \sum_{e \in E} (1-1/e) \cdot w_e \cdot x_e \cdot  \Pr [ e \text{ proceeds to the round 2}] & \text{By (\ref{eq:prm2})} \\
&\ge  \sum_{e \in E} (1-1/e) \cdot w_e \cdot x_e \cdot (1- \frac{\tilde{x}_e}{p_e}) \cdot \big(\frac{1-1/e}{2}\big)^2  &\text{By Lemma \ref{lem:r2}} \\
&= \sum_{e \in E} \frac{(1-1/e)^3}{4} \cdot w_e \cdot x_e \cdot (1- \frac{\tilde{x}_e}{p_e})\\
&=   \frac{(1-1/e)^3}{4} \cdot \sum_{e \in E} w_e \cdot x_e \cdot (1- \frac{\tilde{x}_e}{p_e}) \,.
\end{align*}

Now consider an edge $e \notin H$. For this edge, we have $\frac{\tilde{x}_e}{p_e} \le \tau$.  Using this and the inequality above we have,
\begin{align}
&\E[\W{\match_2}] \ge \frac{(1-1/e)^3}{4} \cdot \sum_{e \in E \setminus H} w_e \cdot x_e \cdot (1- \frac{\tilde{x}_e}{p_e}) \nonumber \\
& \ge \frac{(1-1/e)^3}{4} \cdot \sum_{e \in E \setminus H} w_e \cdot x_e \cdot (1- \tau) \nonumber \\
& =  (1-\tau) \cdot \frac{(1-1/e)^3}{4} \cdot \sum_{e \in E \setminus H} w_e \cdot x_e \nonumber \\
& =   (1- \tau)  \cdot \frac{(1-1/e)^3}{4} \cdot \big(\sum_{e \in E} w_e \cdot x_e - \sum_{e \in H} w_e \cdot x_e \big) \nonumber \\
\label{eq:em2}
& = (1- \tau)  \cdot \frac{(1-1/e)^3}{4} \cdot  (\optlp - \wdotx_H)
\end{align}
Now consider the matching $\match$ returned by the algorithm after the first two rounds. We have $\match = \match_1 \cup \match_2$. Thus, we have
\begin{align*}
&\E[\W{\match}] = \E[\W{\match_1}]+ \E[\W{\match_2}] & \text{By linearity of expectation} \\
&\ge (1-1/e) \cdot \optlp +  (1- \tau)  \cdot \frac{(1-1/e)^3}{4} \cdot (\optlp - \wdotx_H) & \text{By (\ref{eq:em1}) and (\ref{eq:em2})}
\end{align*}
Using the inequality above and the fact that $\opt \le \optlp$, we have
\begin{align*}
\frac{\E[\W{\match}]}{\opt} \ge \frac{\E[\W{\match}]}{\optlp} \ge  (1-1/e) +  (1- \tau)  \cdot \frac{(1-1/e)^3}{4} \cdot (1 - \frac{\wdotx_H}{\optlp}) \,.
\end{align*}
This proves the lemma.
\end{proof}

Suppose that $\wdotx_H \le (1-\Omega(1)) \optlp$. Then, lemma above implies that the matching returned using two round of \PAM, has an expected weight of at least $\opt \cdot (1-1/e) + \opt  \cdot (1- \tau)  \cdot \frac{(1-1/e)^3}{4} \cdot \Omega(1)$. Thus, the expected weight of the matching is at least $\opt (1-1/e +\Omega(1))$, and the approximation factor of the algorithm is better than $(1-1/e)$ by a constant factor. This implies that the only case that this algorithm cannot beat $(1-1/e)$ approximation ratio is when $\wdotx_H = \optlp (1-o(1))$. We show that in this case we can run Algorithm \PAM for only a single round on edges in $H$, it beats the $(1-1/e)$ approximation ratio.

In this case the main idea is to show that $\sum_{e \in H_u} x_e$ is much smaller than $1$ for all vertices $u \in U$. Therefore, we can run \PAM with the $\sigma <1$ and beat the approximation ratio of $(1-1/e)$ in a single round. 
Let $\tau > 1-1/e$, we first show that for every edge $e \in H$ we have $x_e <1$. Recall that during the first round of the algorithm we set $\tilde{x}_e= g(x_e,1)$. Then, we have
\begin{align*}
\frac{\tilde{x}_e}{p_e} &= \frac{g(x_e,1)}{p_e} \\
& \le \frac{g(x_e,1)}{x_e} \,. & \text{Since $p_e \ge x_e$}
\end{align*}
Furthermore, for edges in $H$ we have $\frac{\tilde{x}_e}{p_e} \ge \tau$. By combining this with the inequality above we can say that for all edges in $H$ we have 
\begin{align}
\label{eq:xeovere}
\frac{g(x_e,1)}{x_e} \ge \tau \,.
\end{align}
We claim that $\frac{g(x_e,1)}{x_e}$ is strictly decreasing with respect to $x_e$. The proof of the claim is differed to the appendix.
\begin{restatable}{claim}{gxderiv}
\label{clm:gdev}
Function $\frac{g(x,\ubar)}{x}$ is strictly decreasing with respect to $x$ for $0 \le x \le \ubar$.
\end{restatable}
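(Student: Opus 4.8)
\textbf{Proof proposal for Claim~\ref{clm:gdev}.} The plan is to reduce the claim to the elementary inequality $e^t > 1+t$ for $t>0$. First I would factor out the part of $g$ that does not depend on $x$: since
\[
\frac{g(x,\ubar)}{x} \;=\; \frac{(e^\ubar-1)(\ubar-x)}{\ubar\,(e^\ubar-e^x)} \;=\; \frac{e^\ubar-1}{\ubar}\cdot h(x), \qquad h(x):=\frac{\ubar-x}{e^\ubar-e^x},
\]
and $\frac{e^\ubar-1}{\ubar}$ is a positive constant, it suffices to show $h$ is strictly decreasing on $[0,\ubar)$ and extends continuously (as a strictly decreasing function) to $x=\ubar$. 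Note that on $[0,\ubar)$ both numerator and denominator of $h$ are strictly positive, so $h$ is well defined and smooth there.

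Next I would differentiate. A direct computation gives
\[
h'(x)\;=\;\frac{-(e^\ubar-e^x)+(\ubar-x)e^x}{(e^\ubar-e^x)^2}\;=\;\frac{N(x)}{(e^\ubar-e^x)^2},\qquad N(x):=e^x\bigl(1+\ubar-x\bigr)-e^\ubar .
\]
The denominator is strictly positive on $[0,\ubar)$, so the sign of $h'(x)$ equals the sign of $N(x)$, and it remains to prove $N(x)<0$ for all $x\in[0,\ubar)$. Substituting $t:=\ubar-x\in(0,\ubar]$ turns this into $e^{\ubar-t}(1+t)<e^\ubar$, i.e.\ $(1+t)e^{-t}<1$, i.e.\ $e^t>1+t$, which holds for every $t>0$. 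Hence $h'(x)<0$ on $[0,\ubar)$, so $h$ — and therefore $g(x,\ubar)/x$ — is strictly decreasing on $[0,\ubar)$.

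Finally I would handle the endpoint $x=\ubar$: by l'Hôpital, $\lim_{x\to\ubar} h(x)=\lim_{x\to\ubar}\frac{-1}{-e^x}=e^{-\ubar}$, which is finite, so defining $h(\ubar)=e^{-\ubar}$ makes $h$ continuous on $[0,\ubar]$; a continuous function that is strictly decreasing on $[0,\ubar)$ is strictly decreasing on the closed interval $[0,\ubar]$, matching the convention $g(\ubar,\ubar)=1-e^{-\ubar}$ used elsewhere. I do not expect any serious obstacle here; the only point requiring a little care is the reduction to the clean form $e^t>1+t$ (choosing the substitution $t=\ubar-x$) and the bookkeeping at the removable singularity $x=\ubar$ where numerator and denominator both vanish.
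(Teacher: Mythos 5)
Your proposal is correct and follows essentially the same route as the paper: differentiate $g(x,\ubar)/x$ and show the derivative is negative on $[0,\ubar)$, where your inequality $N(x)=e^x(1+\ubar-x)-e^\ubar<0$ is exactly the paper's assertion that $e^{\ubar}+e^x(x-1-\ubar)>0$. Your write-up merely makes explicit two points the paper leaves implicit — the reduction to $e^t>1+t$ via $t=\ubar-x$ and the continuous extension at the endpoint $x=\ubar$ — which is fine but not a different argument.
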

Using the claim above we have $\frac{g(x,1)}{x} \le \frac{g(1,1)}{1} = (1-1/e)$. Using this and the fact that $\frac{g(x,1)}{x}$ is decreasing w.r.t. $x$, we can say for any $\tau \ge (1-1/e)$, there exists a threshold denoted by $\gm(\tau)$ such that for any $x_e \ge \gm(\tau)$ we have $\frac{g(x_e,1)}{x_e} \le \tau$. 

It is now easy to show that for all edges in $e \in H$, we have $x_e \le \gm(\tau)$. For the sake of contradiction suppose that $x_e > \gm(\tau)$. We then get the following
\begin{align*}
&\frac{g(x_e,1)}{x_e} < \frac{g(\gm(\tau),1)}{\gm(\tau)} & \text{Since $\frac{g(x_e,1)}{x_e}$ is decreasing and $x_e > \gm(\tau)$} \\
& \le \tau \,. & \text{By definition of $\gm(\tau)$}
\end{align*}
This however contradicts (\ref{eq:xeovere}). Thus, for all edges in $H$ we have $x_e \le \gm(\tau)$.

We are now ready to show that for a proper choice of $\tau$, the summation $\sum_{e \in H_u} x_e$ is much smaller $1$ for all vertices $u \in U$. Consider a vertex $u$, then due to constraints of the LP we have
\begin{align*}
\sum_{e \in H_u} x_e \le \Pr[ \text{an edge in $H_u$ exists}] \,.
\end{align*}
Since each edge in $H_u$ realizes independently with probability of $p_e$, we have
\begin{align}
\sum_{e \in H_u} x_e &\le \Pr[ \text{an edge in $H_u$ exists}] \nonumber \\
& = 1- \prod_{e \in H_u} (1- p_e) \nonumber \\
\label{eq:sumxe}
& \le 1- \prod_{e \in H_u} (1- \frac{x_e}{\tau}) \,. & \text{Since $\frac{x_e}{p_e} \ge \frac{\tilde{x}_e}{p_e} \ge \tau$ for edges in $H_u$}
\end{align}
For each edge $e \in H_u$, let $a_e = \frac{x_e}{\tau}$. By the inequality above we then have, 
$$\sum_{e \in H_u} x_e  \le 1- \prod_{e \in H_u} (1- a_e) \,.$$
 By the definition of $a_e$, we also have that $\sum_{e \in H_u} a_e = \frac{\sum_{e \in \tau} x_e}{\tau}$. We then use the lemma below to bound $\prod_{e \in H_u} (1- a_e)$. The proof of the lemma is differed to the appendix.

\begin{restatable}{lemma}{minsum}
\label{lem:minsum}
Let $0 \le c \le 1$, and $a_1, a_2, \cdots, a_k \in [0,c]^k $ be $k$ non-negative real numbers bounded by $c$. Further assume that $\sum_{i=1}^k a_i =s$ where $s \ge 0$ is a fixed constant. Then, the following always holds.
\begin{align*}
\prod_{i=1}^k (1- a_i) \ge (1-c)^{\frac{s}{c}} \,.
\end{align*} 
\end{restatable}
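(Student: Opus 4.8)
The plan is to reduce the problem to its extremal configuration and then check a one–variable inequality. First I would dispose of degenerate cases: if $c=0$ then every $a_i=0$ and $s=0$, if $c=1$ the right-hand side is $0$ (for $s>0$) or $1$ (for $s=0$) and the bound is trivial, and if $s>kc$ the hypotheses are vacuous; so assume $0<c<1$ and $0\le s\le kc$. Taking logarithms, it then suffices to show that the minimum of $\Phi(a):=\sum_{i=1}^k\log(1-a_i)$ over the polytope $P=\{a\in[0,c]^k:\sum_i a_i=s\}$ is at least $\tfrac{s}{c}\log(1-c)$. Since $t\mapsto\log(1-t)$ is concave on $[0,1)$, $\Phi$ is concave, so its minimum over the compact convex set $P$ is attained at a vertex; equivalently, by a smoothing argument, we may move any $a\in P$ to a configuration with at most one coordinate strictly inside $(0,c)$ without increasing $\prod_i(1-a_i)$.

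Concretely, for the smoothing step: if two coordinates $a_i,a_j$ both lie in $(0,c)$, fix their sum $t=a_i+a_j$ and replace the pair by $\bigl(\min(t,c),\,t-\min(t,c)\bigr)$. Because $(1-a_i)(1-a_j)=1-t+a_ia_j$ is increasing in $a_ia_j$, and $a_ia_j$ for fixed sum is minimized by making the pair as unequal as the box constraints allow, this replacement does not increase $(1-a_i)(1-a_j)$ and hence does not increase $\prod_i(1-a_i)$; it also strictly reduces the number of fractional coordinates, so iterating terminates. The resulting configuration has $m:=\lfloor s/c\rfloor$ coordinates equal to $c$, one coordinate equal to $r:=s-mc\in[0,c)$, and the rest equal to $0$. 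Therefore
\begin{align*}
\prod_{i=1}^k(1-a_i)\ \ge\ (1-c)^m\,(1-r).
\end{align*}

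It remains to prove $(1-c)^m(1-r)\ge(1-c)^{s/c}=(1-c)^{m+r/c}$, i.e.
\begin{align*}
1-r\ \ge\ (1-c)^{r/c}\qquad\text{for }r\in[0,c].
\end{align*}
Here $r\mapsto(1-c)^{r/c}$ is convex on $\mathbb{R}$ (it is an exponential in $r$), while $r\mapsto 1-r$ is affine, and the two functions agree at $r=0$ (both equal $1$) and at $r=c$ (both equal $1-c$). A convex function lying on or below an affine function at both endpoints of an interval lies on or below it throughout, which gives the displayed inequality. Combining the two displays yields $\prod_i(1-a_i)\ge(1-c)^{s/c}$, proving the lemma.

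I expect the main obstacle to be making the extremal-configuration step fully rigorous and clean — in particular arguing that the smoothing really does decrease (or preserve) the product and that it terminates, equivalently identifying the vertices of $P$ correctly, and handling the boundary value $r=c$ in the convexity inequality (where it still holds, with equality) — rather than the final convexity estimate, which is routine.
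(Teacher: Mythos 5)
Your proof is correct and follows essentially the same route as the paper: a sum-preserving smoothing argument pushing mass to the endpoints $\{0,c\}$ until at most one coordinate is fractional, giving the lower bound $(1-c)^{\lfloor s/c\rfloor}(1-r)$, and then the one-variable estimate $1-r\ge(1-c)^{r/c}$. The only cosmetic difference is that you establish this last inequality via convexity of $r\mapsto(1-c)^{r/c}$ and the chord comparison, whereas the paper invokes the equivalent Bernoulli-type inequality $(1-x)^p\le 1-px$ for $p\in[0,1]$.
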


Recall that as we discussed we can assume that $x_e$ is bounded by $\gm(\tau)$, therefore each $a_e$ is bounded by $\frac{\gm(\tau)}{\tau}$. It can be verified that for $\tau \ge 3/4$, we always have $\frac{\gm(\tau)}{\tau} <1$, thus each $a_e$ a positive number  and it is less than $1$.

\begin{restatable}{claim}{gmovertau}
\label{clm:gmovertau}
For any $3/4 \le \tau \le 1$, we have $\frac{\gm(\tau)}{\tau} <1$.
\end{restatable}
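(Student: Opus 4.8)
The plan is to turn the inequality $\gm(\tau)<\tau$ into an explicit one‑variable inequality and then dispatch the latter by elementary calculus. By Claim~\ref{clm:gdev}, the function $f(x):=g(x,1)/x$ is strictly decreasing on $(0,1]$; since $f(x)\to 1$ as $x\to 0^{+}$ and $f(1)=1-1/e$, for every $\tau\in(1-1/e,1)$ the threshold $\gm(\tau)$ is precisely the unique point of $(0,1)$ with $f(\gm(\tau))=\tau$, while $\gm(1)=0$, for which the claim is immediate. Because $f$ is strictly decreasing, $\gm(\tau)<\tau$ holds if and only if $f(\tau)<f(\gm(\tau))=\tau$, i.e.\ if and only if $g(\tau,1)<\tau^{2}$. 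So the whole task reduces to verifying this last inequality for $\tau\in[3/4,1)$, the endpoint $\tau=1$ being already covered by $\gm(1)=0$.

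On the range $\tau\in[3/4,1)$ we have $e-e^{\tau}>0$, so substituting $g(\tau,1)=\frac{(e-1)(1-\tau)\tau}{e-e^{\tau}}$ and clearing denominators, the desired inequality $g(\tau,1)<\tau^{2}$ becomes
\[
h(\tau):=\tau\left(e-e^{\tau}\right)-(e-1)(1-\tau)>0 .
\]
Here $h(1)=0$, $h'(\tau)=(2e-1)-e^{\tau}(1+\tau)$, and $h''(\tau)=-e^{\tau}(2+\tau)<0$, so $h'$ is strictly decreasing. A direct evaluation gives $h'(3/4)>0$ and $h'(1)=-1<0$, hence $h'$ vanishes at a single point $\tau^{\star}\in(3/4,1)$ and $h$ is increasing on $[3/4,\tau^{\star}]$ and strictly decreasing on $[\tau^{\star},1]$. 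Consequently, on $[3/4,\tau^{\star}]$ we have $h(\tau)\ge h(3/4)>0$ (a one‑line numerical check gives $h(3/4)=\tfrac34(e-e^{3/4})-\tfrac14(e-1)\approx 0.021$), while on $[\tau^{\star},1)$ we have $h(\tau)>h(1)=0$ by strict monotonicity together with $\tau<1$. Thus $h>0$ throughout $[3/4,1)$, which yields $g(\tau,1)<\tau^{2}$ and therefore $\gm(\tau)<\tau$, i.e.\ $\gm(\tau)/\tau<1$, for all $\tau\in[3/4,1]$.

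The argument is short, and the only points that need care are: using $e-e^{\tau}>0$ before multiplying through (which is why $\tau=1$ is separated out and handled by $\gm(1)=0$), recalling the correct direction of monotonicity of $g(x,1)/x$ from Claim~\ref{clm:gdev} so the reduction to $g(\tau,1)<\tau^{2}$ goes the right way, and getting the strict inequality near $\tau=1$ where $h(1)=0$ — all of which the unimodality of $h$ on $[3/4,1]$ takes care of. I expect no real obstacle; the threshold $3/4$ is not tight (the only zero of $h$ in $(0,1)$ sits near $0.725$), so there is a comfortable margin.
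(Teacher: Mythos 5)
Your proof is correct, but it takes a genuinely different route from the paper's. The paper's argument is a one-point numerical check: it fixes $y=0.74$, verifies $\frac{g(y,1)}{y}=\frac{(1-1/e)(1-y)}{1-e^{y-1}}\approx 0.718\le 3/4\le\tau$, and then invokes the monotonicity of $g(x,1)/x$ (Claim~\ref{clm:gdev}) to conclude the uniform bound $\gm(\tau)\le 0.74<3/4\le\tau$ for the whole range. You instead characterize $\gm(\tau)$ as the unique solution of $g(x,1)/x=\tau$ (for $\tau\in(1-1/e,1)$, using continuity and strict monotonicity of $g(x,1)/x$, with $\tau=1$ handled separately via $\gm(1)=0$), reduce $\gm(\tau)<\tau$ to the pointwise inequality $g(\tau,1)<\tau^2$, and prove that via the sign analysis of $h(\tau)=\tau(e-e^\tau)-(e-1)(1-\tau)$: $h''<0$, $h'(3/4)>0$, $h'(1)=-1$, so $h$ is unimodal on $[3/4,1]$ with $h(3/4)\approx 0.021>0$ and $h(1)=0$, giving $h>0$ on $[3/4,1)$; your algebra and derivative computations check out, and clearing the denominator $e-e^\tau>0$ is legitimate for $\tau<1$. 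What each approach buys: the paper's proof is shorter and needs only one numerical evaluation, and its uniform bound $\gm(\tau)\le 0.74$ automatically covers $\tau=1$; your proof avoids hard-coding a magic constant, makes the strictness near $\tau=1$ transparent (via $h(1)=0$ and strict decrease), and in fact establishes the slightly stronger statement that the threshold $3/4$ has slack (the inequality persists down to roughly $0.725$), at the cost of a calculus argument that still requires small numerical evaluations of $h(3/4)$ and $h'(3/4)$. Both arguments lean on Claim~\ref{clm:gdev} in the same way, so no new ingredients are needed beyond what the paper already provides.
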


 We can then set the parameter $c= \frac{\gm(\tau)}{\tau}$ in Lemma \ref{lem:minsum} to get the following inequality.
\begin{align*}
\prod_{e \in H_u} (1- a_e) &\ge \bigg(1-\frac{\gm(\tau)}{\tau}\bigg)^{\bigg(\frac{\sum_{e\in H_u}{a_e}}{\frac{\gm(\tau)}{\tau}}\bigg)} \\
&= \bigg(1-\frac{\gm(\tau)}{\tau}\bigg)^{\big(\frac{\sum_{e\in H_u}{x_e}}{\gm(\tau)}\big)} \,. & \text{Since $\sum_{e \in H_u} a_e = \frac{\sum_{e \in \tau} x_e}{\tau}$}  
\end{align*}
Using the inequality above and (\ref{eq:sumxe}) we get the following constraint for the summation $\sum_{e \in H_u} x_e$.
\begin{align}
\label{const:sumh1}
\sum_{e \in H_u} x_e \le 1- \prod_{e \in H_u} (1- a_e) \le 1- \bigg(1-\frac{\gm(\tau)}{\tau}\bigg)^{\big(\frac{\sum_{e\in H_u}{x_e}}{\gm(\tau)}\big)}  \,.
\end{align}

For a $\tau \ge 3/4$, let $\rho(\tau)$ be the largest positive real value $x \in (0,1]$ such that $x \le 1-(1-\frac{\gm(\tau)}{\tau})^{\big(\frac{x}{\gm(\tau)}\big)}$. It can be shown that $\rho(\tau)$ is strictly less than $1$ for $\tau \ge 3/4$.
\begin{restatable}{claim}{rholessone}\label{claim:urffddsbgr}
For any $3/4 \le \tau < 1$, we have $\rho(\tau) <1$.
\end{restatable}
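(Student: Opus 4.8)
The plan is to reduce the claim to the single inequality $\phi(1)<0$, where $\phi(x):=1-\big(1-\gm(\tau)/\tau\big)^{x/\gm(\tau)}-x$ for $x\in(0,1]$. Observe that $\phi$ is twice differentiable with $\phi''<0$, so it is strictly concave, and that $\phi(x)\to 0$ as $x\to 0^+$. Hence, once we check that $\phi$ is increasing just to the right of $0$, the set $\{x\in(0,1]:\phi(x)\ge 0\}$ — whose largest element is exactly $\rho(\tau)$ by definition — is of the form $(0,\rho^\ast]\cap(0,1]$ for a single zero $\rho^\ast>0$ of $\phi$ on $(0,\infty)$, and it will suffice to show $\phi(1)<0$, i.e.\ $\rho^\ast<1$. (Even without concavity, continuity of $\phi$ together with $\phi(1)<0$ already produces a left-neighborhood of $1$ on which $\phi<0$, which directly yields $\rho(\tau)<1$.)

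First I would pin down the range of the quantities inside the power. Since $\tau<1$ and $g(x,1)/x$ is continuous, strictly decreasing by Claim~\ref{clm:gdev}, and tends to $1$ as $x\to 0$, the threshold $\gm(\tau)$ at which $g(x,1)/x$ reaches $\tau$ is strictly positive (and $\gm(\tau)\le 1$). Combining this with Claim~\ref{clm:gmovertau}, which gives $\gm(\tau)/\tau<1$ for $3/4\le\tau\le 1$, the base $b:=1-\gm(\tau)/\tau$ lies strictly in $(0,1)$, while the exponent $1/\gm(\tau)$ is finite and at least $1$. Consequently $b^{1/\gm(\tau)}=\exp\!\big(\ln(b)/\gm(\tau)\big)$ is a genuine number in $(0,1)$; in particular $b^{1/\gm(\tau)}>0$.

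It then remains to evaluate $\phi$ at the endpoints. At $x=1$ the defining relation $x\le 1-(1-\gm(\tau)/\tau)^{x/\gm(\tau)}$ becomes $1\le 1-b^{1/\gm(\tau)}$, which fails since $b^{1/\gm(\tau)}>0$; equivalently $\phi(1)=-b^{1/\gm(\tau)}<0$. For the behaviour near $0$, $\phi'(0^+)=-\ln(b)/\gm(\tau)-1$, and the standard estimate $-\ln(1-y)>y$ for $y\in(0,1)$, applied with $y=\gm(\tau)/\tau$, gives $-\ln(b)>\gm(\tau)/\tau$, hence $\phi'(0^+)>1/\tau-1>0$. So $\phi>0$ immediately to the right of $0$, which both shows $\rho(\tau)$ is well-defined and, together with strict concavity, that $\{x>0:\phi(x)\ge 0\}=(0,\rho^\ast]$. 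Since $\phi(1)<0$ we conclude $\rho^\ast<1$, and therefore $\rho(\tau)=\rho^\ast<1$, proving the claim.

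The only delicate point is the bookkeeping in the second paragraph: making sure both $\gm(\tau)$ and $1-\gm(\tau)/\tau$ lie in the \emph{open} interval $(0,1)$, so that $b^{1/\gm(\tau)}$ is strictly positive. Once that is in hand, the claim is immediate from evaluating $\phi$ at $x=1$; there is no nontrivial estimate left, since the substantive work has already been absorbed into Claims~\ref{clm:gdev} and~\ref{clm:gmovertau}.
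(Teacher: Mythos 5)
Your proof is correct and, at its core, is the same argument as the paper's: both reduce the claim to checking that $x=1$ violates the defining inequality, i.e.\ that $\left(1-\gm(\tau)/\tau\right)^{1/\gm(\tau)}>0$, which follows from Claim~\ref{clm:gmovertau} together with $\gm(\tau)>0$ for $\tau<1$. The additional machinery you develop (strict concavity of $\phi$, the sign of $\phi'(0^+)$, uniqueness of the zero $\rho^\ast$) is sound but not needed, since $\phi(1)<0$ already rules out $\rho(\tau)=1$ and $\rho(\tau)\le 1$ by definition.
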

 We can then re-write the constraint (\ref{const:sumh1}) as below.
\begin{align}
\label{const:sumh2}
\sum_{e \in H_u} x_e \le \rho(\tau) \,.
\end{align}

The constraint above shows that if we only consider the edges in $H$, then the summation of $x_e$ for all edges incident to a vertex $u \in U$ is bounded by $\rho(\tau)$. Therefore, we can run Algorithm \PAM by setting $\sigma=\rho(\tau)$. Let $\match$ be the matching returned by the algorithm, then we have

\begin{align*}
\E[\W{\match}] &= \sum_{e \in H} w_e \cdot \Pr[ e \in \match] \nonumber \\
& \ge \sum_{e \in H} w_e \cdot \frac{1-e^{-\rho(\tau)}}{\rho(\tau)} \cdot x_e & \text{By Lemma \ref{lem:prm1}} \nonumber\\
& = \frac{1-e^{-\rho(\tau)}}{\rho(\tau)} \cdot \sum_{e \in H} w_e \cdot x_e = \frac{1-e^{-\rho(\tau)}}{\rho(\tau)} \cdot  \wdotx_H \,.
\end{align*}

Thus, in this case the approximation factor of the algorithm is at least $\frac{1-e^{-\rho(\tau)}}{\rho(\tau)} \cdot  \frac{\wdotx_H}{\opt} \ge \frac{1-e^{-\rho(\tau)}}{\rho(\tau)} \cdot  \frac{\wdotx_H}{\optlp}$.  This gives us the following lemma.

\begin{lemma}
\label{lem:oneroundheavy}
let $\match=  $ \PAM{$A, B, E \setminus E^{(\tau)}, \b{x}, \rho(\tau)$}, Then,
\begin{align*}
\frac{\E[\W{\match}]}{\opt} \ge  \frac{1-e^{-\rho(\tau)}}{\rho(\tau)} \cdot  \frac{\wdotx_H}{\optlp} \,,
\end{align*}
\end{lemma}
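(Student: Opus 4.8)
The plan is to read this lemma as the final packaging of the chain of inequalities developed just above it: the point is only to certify that $\rho(\tau)$ is an \emph{admissible} last argument for the call \PAM{$A, B, E \setminus E^{(\tau)}, \b{x}, \rho(\tau)$}, and then to invoke the per-edge guarantee of Lemma~\ref{lem:prm1}. So the first step I would carry out is to confirm that $\sum_{e \in H_u} x_e \le \rho(\tau)$ for every $u \in B$ in the pruned instance $(A, B, E \setminus E^{(\tau)})$, where $H_u$ plays the role of $E_u$; this is exactly constraint~\eqref{const:sumh2}, and Claim~\ref{claim:urffddsbgr} supplies the strict inequality $\rho(\tau) < 1$ that makes the improvement possible.

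For completeness I would recapitulate how~\eqref{const:sumh2} arises. Apply the LP constraint~\eqref{cons:krfjerkf} with $F = H_u$ to get $\sum_{e\in H_u} x_e \le 1 - \prod_{e\in H_u}(1-p_e)$; since every heavy edge has $x_e/p_e \ge \tilde{x}_e/p_e \ge \tau$ we may replace $p_e$ by $x_e/\tau$ and obtain $\sum_{e\in H_u} x_e \le 1 - \prod_{e\in H_u}(1 - x_e/\tau)$. Monotonicity of $x \mapsto g(x,1)/x$ (Claim~\ref{clm:gdev}) together with the definition of $\gm(\tau)$ forces $x_e \le \gm(\tau)$ on heavy edges, so each $a_e := x_e/\tau$ lies in $[0, \gm(\tau)/\tau]$ with $\gm(\tau)/\tau < 1$ for $\tau \ge 3/4$ (Claim~\ref{clm:gmovertau}); Lemma~\ref{lem:minsum} with $c = \gm(\tau)/\tau$ then lower bounds the product by $(1 - \gm(\tau)/\tau)^{\sum_{e\in H_u} x_e/\gm(\tau)}$. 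This yields the self-referential inequality $\sum_{e\in H_u} x_e \le 1 - (1 - \gm(\tau)/\tau)^{\sum_{e\in H_u} x_e/\gm(\tau)}$, and because the right-hand side is concave in $\sum_{e\in H_u} x_e$ and vanishes at $0$, the set of values satisfying this inequality is an initial segment $(0, \rho(\tau)]$ by the very definition of $\rho(\tau)$; hence $\sum_{e\in H_u} x_e \le \rho(\tau)$.

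With admissibility in hand, the remainder is immediate. Applying the lower bound of Lemma~\ref{lem:prm1} to the pruned instance with $\ubar = \rho(\tau)$, every edge $e \in H = E \setminus E^{(\tau)}$ is matched by $\match$ with probability at least $\frac{1 - e^{-\rho(\tau)}}{\rho(\tau)} x_e$. By linearity of expectation, $\E[\W{\match}] = \sum_{e\in H} w_e \Pr[e\in\match] \ge \frac{1 - e^{-\rho(\tau)}}{\rho(\tau)} \sum_{e\in H} w_e x_e = \frac{1 - e^{-\rho(\tau)}}{\rho(\tau)} \wdotx_H$, and dividing by $\opt$ while using $\opt \le \optlp$ gives the claimed bound.

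The only genuinely delicate step is the passage from the multiplicative LP inequality to~\eqref{const:sumh2}: one must collapse $1 - \prod_{e\in H_u}(1 - x_e/\tau)$ into a function of $\sum_{e\in H_u} x_e$ alone — this is where Lemma~\ref{lem:minsum} does the work, its worst case concentrating the coordinate mass on few edges — and then argue that the resulting concave self-map on $(0,1]$ has its largest fixed point $\rho(\tau)$ strictly below $1$, which is precisely where the thresholds $\tau \ge 3/4$ and the explicit form of $\gm(\tau)$ enter (Claims~\ref{clm:gmovertau} and~\ref{claim:urffddsbgr}). By comparison, instantiating Lemma~\ref{lem:prm1} at $\ubar = \rho(\tau)$ and summing over heavy edges is entirely routine.
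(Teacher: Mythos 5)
Your proposal is correct and follows essentially the same route as the paper: it derives $\sum_{e\in H_u} x_e \le \rho(\tau)$ from constraint~\eqref{cons:krfjerkf} with $p_e \le x_e/\tau$ on heavy edges, the bound $x_e \le \gm(\tau)$ from Claim~\ref{clm:gdev}, Lemma~\ref{lem:minsum} with $c=\gm(\tau)/\tau$, and the definition of $\rho(\tau)$, and then applies the lower bound of Lemma~\ref{lem:prm1} with $\ubar=\rho(\tau)$ on the pruned instance together with $\opt \le \optlp$. The added concavity remark is harmless but unnecessary, since $\sum_{e\in H_u} x_e \le 1$ and satisfies~\eqref{const:sumh1}, so it is at most $\rho(\tau)$ directly by the definition of $\rho(\tau)$ as the largest such value.
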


Let $\wdotx= \sum_{e \in E^(\tau)} x_e w_e$, then we have $\wdotx_H= \optlp-\wdotx$. Suppose that we run the algorithm mentioned in Lemma  \ref{lem:tworounds} if $\wdotx \ge \lambda \cdot \optlp$ where $\lambda$ is a constant we optimize later. In this case the approximation ratio of the algorithm is at least
\begin{align}
&(1-1/e) + \frac{(1-1/e)^3}{4} \cdot (1- \frac{\wdotx_H}{\optlp}) \cdot  (1-\tau) \nonumber \\
&=(1-1/e) +  \frac{(1-1/e)^3}{4} \cdot (\frac{\wdotx}{\optlp}) \cdot  (1-\tau) \nonumber \\
\label{eq:algapx1}
&\ge (1-1/e) + \frac{(1-1/e)^3}{4} \cdot \lambda \cdot  (1-\tau) 
\end{align} 
Also, suppose that we run the algorithm used to derive Lemma \ref{lem:oneroundheavy} if $\wdotx < \lambda \cdot \optlp$. In this case the approximation ratio is
\begin{align}
\label{eq:algapx2}
\frac{1-e^{-\rho(\tau)}}{\rho(\tau)} \cdot  \frac{\wdotx_H}{\optlp}  = \frac{1-e^{-\rho(\tau)}}{\rho(\tau)} \cdot  (1-\frac{\wdotx}{\optlp}) \ge  \frac{1-e^{-\rho(\tau)}}{\rho(\tau)} \cdot (1-\lambda)
\end{align}
By (\ref{eq:algapx1}) and (\ref{eq:algapx2}) for a fixed parameter $\lambda$ the approximation ratio of \FNM is
\begin{align}
\label{eq:approxlambda}
\min\{(1-1/e) + \frac{(1-1/e)^3}{4} \cdot \lambda \cdot  (1-\tau), \frac{1-e^{-\rho(\tau)}}{\rho(\tau)} \cdot (1-\lambda)\} \,.
\end{align}
In order to optimize the performance of our algorithm we set $\tau = 0.8723$. We then claim that for this choice of the parameter, $\rho(\tau) \le 0.5303$. The proof of the claim is available in the appendix.

\begin{restatable}{claim}{rhofinal}
For $\tau= 0.8723$, we have $\rho(\tau) \le 0.5303$.
\end{restatable}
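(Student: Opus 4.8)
Throughout write $\tau=0.8723$, $\gm:=\gm(\tau)$, $\beta:=1-\gm/\tau$, and $h(x):=1-\beta^{x/\gm}$, so that by definition $\rho(\tau)$ is the largest $x\in(0,1]$ with $x\le h(x)$. Recall $\gm$ is characterized by $g(\gm,1)/\gm=\tau$: by Claim~\ref{clm:gdev} the map $x\mapsto g(x,1)/x=\frac{(e-1)(1-x)}{e-e^{x}}$ is strictly decreasing on $(0,1]$, running from $1$ as $x\to0^{+}$ down to $1-1/e$ at $x=1$, and $1-1/e<\tau<1$, so $\gm$ is well-defined and unique. The plan is to use concavity to collapse the claim into one scalar inequality, and then verify that inequality numerically.

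First I would record the shape of $h$: since $\gm/\tau<1$ we have $0<\beta<1$, hence $x\mapsto\beta^{x/\gm}=e^{(x/\gm)\ln\beta}$ is strictly decreasing and strictly convex, so $h$ is strictly increasing and strictly concave with $h(0)=0$. Consequently $\psi(x):=h(x)-x$ is strictly concave with $\psi(0)=0$. I claim it suffices to show $\psi(0.5303)<0$. Indeed, if $\psi(0.5303)<0$ then for every $x>0.5303$ we may write $0.5303=\lambda\cdot 0+(1-\lambda)x$ with $\lambda=1-0.5303/x\in(0,1)$, and concavity together with $\psi(0)=0$ gives $\psi(0.5303)\ge(1-\lambda)\psi(x)$, which forces $\psi(x)<0$. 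Hence no $x\in(0.5303,1]$ satisfies $x\le h(x)$, and therefore $\rho(\tau)\le0.5303$.

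It remains to verify $\psi(0.5303)<0$, i.e.\ $h(0.5303)<0.5303$; this is the only genuinely computational step, and it is tight. I would first pin $\gm$ to a narrow interval: evaluating $g(x,1)/x=\frac{(e-1)(1-x)}{e-e^{x}}$ at $x=0.3171$ and $x=0.3172$ yields values slightly above and slightly below $\tau=0.8723$, so by the strict monotonicity of Claim~\ref{clm:gdev} we get $\gm\in(0.3171,0.3172)$. Substituting this range into $h(0.5303)=1-(1-\gm/\tau)^{0.5303/\gm}$ and controlling $\ln$ and $\exp$ by elementary monotonicity and remainder estimates over this short interval yields $h(0.5303)<0.5303$ for every admissible $\gm$ — numerically $h(0.5303)\approx0.5303$ from below, and the fixed point $\rho(\tau)$ of $h$ is itself about $0.5302$. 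Combined with the concavity reduction of the previous paragraph this gives $\rho(\tau)\le0.5303$, which is exactly the last ingredient needed to evaluate the bound~\eqref{eq:approxlambda} at $\tau=0.8723$.

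The main obstacle is precisely the tightness of this last step: $\rho(0.8723)$ lies only a hair below the asserted bound, so the argument cannot rest on floating-point point estimates — one must carry rigorous interval bounds through both the implicit equation $g(\gm,1)/\gm=\tau$ defining $\gm$ and the subsequent evaluation of $h$ (certified interval arithmetic, or explicit Taylor remainders for $\exp$ and $\ln$, suffices, and the intervals must be kept genuinely small because a few extra units in the fourth decimal of $\gm$ already erase the slack). Everything else is soft: strict concavity of $h$ — inherited from $0<\beta<1$ — is what turns the ``largest‑solution'' definition of $\rho(\tau)$ into the single check $h(0.5303)\le0.5303$, and the monotonicity used to localize $\gm$ is exactly the already‑established Claim~\ref{clm:gdev}.
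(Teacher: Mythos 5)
Your overall strategy matches the paper's: pin $\gm(\tau)$ inside a short interval using the monotonicity of $g(x,1)/x$ (Claim~\ref{clm:gdev}), verify the defining inequality of $\rho(\tau)$ fails at $x=0.5303$, and then rule out every larger $x$. For that last step you invoke concavity of $h(x)=1-(1-\gm(\tau)/\tau)^{x/\gm(\tau)}$ together with $h(0)=0$, whereas the paper differentiates $x-1+(1-\gm(\tau)/\tau)^{x/\gm(\tau)}$ and checks numerically that the derivative is positive for $x\ge 0.5303$; your convexity argument is correct and a bit cleaner, since it eliminates that second numerical verification.

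The under-justified point is the numerical core, at exactly the spot you yourself identify as tight. You localize $\gm(\tau)\in(0.3171,0.3172)$, but with an interval of that width the straightforward worst-case substitution (smallest base, largest exponent) gives $(1-0.3172/0.8723)^{0.5303/0.3171}\approx 0.46960$, i.e.\ an upper bound on $h(0.5303)$ of about $0.53040>0.5303$, so the claim ``$h(0.5303)<0.5303$ for every admissible $\gm(\tau)$'' does not follow from endpoint bounds over your interval. Two repairs are possible: (i) tighten the bracket to $0.31719\le\gm(\tau)\le 0.3172$, as the paper does (it checks $g(0.31719,1)/0.31719\approx 0.8723003>\tau$ and $g(0.3172,1)/0.3172\approx 0.872296<\tau$), after which the same endpoint substitution yields $\approx 0.4697007$ and the inequality survives with margin only about $6\times 10^{-7}$; or (ii) keep your wider interval but additionally prove that $\gm\mapsto(1-\gm/\tau)^{0.5303/\gm}$ is decreasing (this follows from $-\ln(1-s)\le s/(1-s)$ applied to $s=\gm/\tau$), so that the worst case sits at $\gm=0.3172$, where the value is $\approx 0.469712$ and the margin is a comfortable $\approx 10^{-5}$. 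As written, your appeal to ``monotonicity and remainder estimates'' specifies neither route, and your own warning that a few units in the fourth decimal of $\gm(\tau)$ erase the slack applies precisely to the interval you chose.
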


Note that the function $\frac{1-e^{-x}}{x}$ is a decreasing function w.r.t. $x$ and we have $\rho(\tau) \le 0.5303$. As a result,
\begin{align}
\label{eq:bestalg2}
\frac{1-e^{-\rho(\tau)}}{\rho(\tau)} \ge \frac{1-e^{-0.5303}}{0.5303} > 0.7761
\end{align}
Thus, the approximation ratio of the algorithm is equal to 
\begin{align*}
&\min\{(1-1/e) + \frac{(1-1/e)^3}{4} \cdot \lambda \cdot  (1-\tau), \frac{1-e^{-\rho(\tau)}}{\rho(\tau)} \cdot (1-\lambda)\}  & \text{By (\ref{eq:approxlambda})}\\
&\ge \min\{(1-1/e) + \frac{(1-1/e)^3}{4} \cdot \lambda \cdot  (1-\tau), 0.7761 \cdot (1-\lambda)\} \,,
\end{align*}
This implies that for $\lambda=0.1837$, the approximation ratio of our algorithm is at least $0.63353> 1-1/e+\apximprove$. This proves our main theorem (Theorem \ref{thm:main1}) by the choice of variables $\tau=0.8732$, $\sigma = 0.5303$ and $\lambda= 0.1873$.  

\maint*
\begin{proof}
As proved in this section, \FNM achieves the approximation ratio of $0.63353> 1-1/e+\apximprove$ when $\tau=0.8732$, $\sigma = 0.5303$ and $\lambda= 0.1873$. Also, all the decisions  of the algorithm are performed in polynomial time. 
\end{proof}

\newpage

\bibliographystyle{plain}
\bibliography{refs}

\begin{thebibliography}{10}

\bibitem{DBLP:journals/ipl/Adamczyk11}
Marek Adamczyk.
\newblock Improved analysis of the greedy algorithm for stochastic matching.
\newblock {\em Inf. Process. Lett.}, 111(15):731--737, 2011.

\bibitem{DBLP:journals/corr/abs-2010-08142}
Marek Adamczyk, Brian Brubach, Fabrizio Grandoni, Karthik~Abinav Sankararaman,
  Aravind Srinivasan, and Pan Xu.
\newblock Improved approximation algorithms for stochastic-matching problems.
\newblock {\em CoRR}, abs/2010.08142, 2020.

\bibitem{DBLP:journals/algorithmica/BansalGLMNR12}
Nikhil Bansal, Anupam Gupta, Jian Li, Juli{\'{a}}n Mestre, Viswanath Nagarajan,
  and Atri Rudra.
\newblock When {LP} is the cure for your matching woes: Improved bounds for
  stochastic matchings.
\newblock {\em Algorithmica}, 63(4):733--762, 2012.

\bibitem{DBLP:journals/algorithmica/BavejaCNSX18}
Alok Baveja, Amit Chavan, Andrei Nikiforov, Aravind Srinivasan, and Pan Xu.
\newblock Improved bounds in stochastic matching and optimization.
\newblock {\em Algorithmica}, 80(11):3225--3252, 2018.

\bibitem{DBLP:conf/focs/BehnezhadD20}
Soheil Behnezhad and Mahsa Derakhshan.
\newblock Stochastic weighted matching:(1-\epsilon) approximation.
\newblock In Sandy Irani, editor, {\em 61st {IEEE} Annual Symposium on
  Foundations of Computer Science, {FOCS} 2020, Durham, NC, USA, November
  16-19, 2020}, pages 1392--1403. {IEEE}, 2020.

\bibitem{DBLP:conf/stoc/BehnezhadDH20}
Soheil Behnezhad, Mahsa Derakhshan, and MohammadTaghi Hajiaghayi.
\newblock Stochastic matching with few queries: (1-{\(\epsilon\)})
  approximation.
\newblock In Konstantin Makarychev, Yury Makarychev, Madhur Tulsiani, Gautam
  Kamath, and Julia Chuzhoy, editors, {\em Proccedings of the 52nd Annual {ACM}
  {SIGACT} Symposium on Theory of Computing, {STOC} 2020, Chicago, IL, USA,
  June 22-26, 2020}, pages 1111--1124. {ACM}, 2020.

\bibitem{DBLP:conf/sigecom/BehnezhadR18}
Soheil Behnezhad and Nima Reyhani.
\newblock Almost optimal stochastic weighted matching with few queries.
\newblock In {\'{E}}va Tardos, Edith Elkind, and Rakesh Vohra, editors, {\em
  Proceedings of the 2018 {ACM} Conference on Economics and Computation,
  Ithaca, NY, USA, June 18-22, 2018}, pages 235--249. {ACM}, 2018.

\bibitem{DBLP:journals/ior/BlumDHPSS20}
Avrim Blum, John~P. Dickerson, Nika Haghtalab, Ariel~D. Procaccia, Tuomas
  Sandholm, and Ankit Sharma.
\newblock Ignorance is almost bliss: Near-optimal stochastic matching with few
  queries.
\newblock {\em Oper. Res.}, 68(1):16--34, 2020.

\bibitem{BDM}
Mark Braverman, Mahsa Derakhshan, and Antonio Molina~Lovett.
\newblock Max-weight online stochastic matching: Improved approximations
  against the online benchmark.
\newblock In {\em Proceedings of the 2022 ACM Conference on Economics and
  Computation}, 2022.

\bibitem{DBLP:conf/icalp/ChenIKMR09}
Ning Chen, Nicole Immorlica, Anna~R. Karlin, Mohammad Mahdian, and Atri Rudra.
\newblock Approximating matches made in heaven.
\newblock In Susanne Albers, Alberto Marchetti{-}Spaccamela, Yossi Matias,
  Sotiris~E. Nikoletseas, and Wolfgang Thomas, editors, {\em Automata,
  Languages and Programming, 36th International Colloquium, {ICALP} 2009,
  Rhodes, Greece, July 5-12, 2009, Proceedings, Part {I}}, volume 5555 of {\em
  Lecture Notes in Computer Science}, pages 266--278. Springer, 2009.

\bibitem{DBLP:conf/icalp/CostelloTT12}
Kevin~P. Costello, Prasad Tetali, and Pushkar Tripathi.
\newblock Stochastic matching with commitment.
\newblock In Artur Czumaj, Kurt Mehlhorn, Andrew~M. Pitts, and Roger
  Wattenhofer, editors, {\em Automata, Languages, and Programming - 39th
  International Colloquium, {ICALP} 2012, Warwick, UK, July 9-13, 2012,
  Proceedings, Part {I}}, volume 7391 of {\em Lecture Notes in Computer
  Science}, pages 822--833. Springer, 2012.

\bibitem{DBLP:conf/soda/EhsaniHKS18}
Soheil Ehsani, MohammadTaghi Hajiaghayi, Thomas Kesselheim, and Sahil Singla.
\newblock Prophet secretary for combinatorial auctions and matroids.
\newblock In Artur Czumaj, editor, {\em Proceedings of the Twenty-Ninth Annual
  {ACM-SIAM} Symposium on Discrete Algorithms, {SODA} 2018, New Orleans, LA,
  USA, January 7-10, 2018}, pages 700--714. {SIAM}, 2018.

\bibitem{DBLP:conf/focs/FeldmanMMM09}
Jon Feldman, Aranyak Mehta, Vahab~S. Mirrokni, and S.~Muthukrishnan.
\newblock Online stochastic matching: Beating 1-1/e.
\newblock In {\em 50th Annual {IEEE} Symposium on Foundations of Computer
  Science, {FOCS} 2009, October 25-27, 2009, Atlanta, Georgia, {USA}}, pages
  117--126. {IEEE} Computer Society, 2009.

\bibitem{DBLP:conf/icalp/FuTWWZ21}
Hu~Fu, Zhihao~Gavin Tang, Hongxun Wu, Jinzhao Wu, and Qianfan Zhang.
\newblock Random order vertex arrival contention resolution schemes for
  matching, with applications.
\newblock In Nikhil Bansal, Emanuela Merelli, and James Worrell, editors, {\em
  48th International Colloquium on Automata, Languages, and Programming,
  {ICALP} 2021, July 12-16, 2021, Glasgow, Scotland (Virtual Conference)},
  volume 198 of {\em LIPIcs}, pages 68:1--68:20. Schloss Dagstuhl -
  Leibniz-Zentrum f{\"{u}}r Informatik, 2021.

\bibitem{DBLP:conf/soda/GamlathKS19}
Buddhima Gamlath, Sagar Kale, and Ola Svensson.
\newblock Beating greedy for stochastic bipartite matching.
\newblock In Timothy~M. Chan, editor, {\em Proceedings of the Thirtieth Annual
  {ACM-SIAM} Symposium on Discrete Algorithms, {SODA} 2019, San Diego,
  California, USA, January 6-9, 2019}, pages 2841--2854. {SIAM}, 2019.

\bibitem{DBLP:conf/soda/GoelM08}
Gagan Goel and Aranyak Mehta.
\newblock Online budgeted matching in random input models with applications to
  adwords.
\newblock In Shang{-}Hua Teng, editor, {\em Proceedings of the Nineteenth
  Annual {ACM-SIAM} Symposium on Discrete Algorithms, {SODA} 2008, San
  Francisco, California, USA, January 20-22, 2008}, pages 982--991. {SIAM},
  2008.

\bibitem{DBLP:conf/stoc/KarandeMT11}
Chinmay Karande, Aranyak Mehta, and Pushkar Tripathi.
\newblock Online bipartite matching with unknown distributions.
\newblock In Lance Fortnow and Salil~P. Vadhan, editors, {\em Proceedings of
  the 43rd {ACM} Symposium on Theory of Computing, {STOC} 2011, San Jose, CA,
  USA, 6-8 June 2011}, pages 587--596. {ACM}, 2011.

\bibitem{DBLP:conf/stoc/KarpVV90}
Richard~M. Karp, Umesh~V. Vazirani, and Vijay~V. Vazirani.
\newblock An optimal algorithm for on-line bipartite matching.
\newblock In Harriet Ortiz, editor, {\em Proceedings of the 22nd Annual {ACM}
  Symposium on Theory of Computing, May 13-17, 1990, Baltimore, Maryland,
  {USA}}, pages 352--358. {ACM}, 1990.

\bibitem{DBLP:conf/esa/LeeS18}
Euiwoong Lee and Sahil Singla.
\newblock Optimal online contention resolution schemes via ex-ante prophet
  inequalities.
\newblock In Yossi Azar, Hannah Bast, and Grzegorz Herman, editors, {\em 26th
  Annual European Symposium on Algorithms, {ESA} 2018, August 20-22, 2018,
  Helsinki, Finland}, volume 112 of {\em LIPIcs}, pages 57:1--57:14. Schloss
  Dagstuhl - Leibniz-Zentrum f{\"{u}}r Informatik, 2018.

\bibitem{DBLP:conf/stoc/MahdianY11}
Mohammad Mahdian and Qiqi Yan.
\newblock Online bipartite matching with random arrivals: an approach based on
  strongly factor-revealing lps.
\newblock In Lance Fortnow and Salil~P. Vadhan, editors, {\em Proceedings of
  the 43rd {ACM} Symposium on Theory of Computing, {STOC} 2011, San Jose, CA,
  USA, 6-8 June 2011}, pages 597--606. {ACM}, 2011.

\bibitem{DBLP:journals/corr/abs-1110-0990}
Marco Molinaro and R.~Ravi.
\newblock The query-commit problem.
\newblock {\em CoRR}, abs/1110.0990, 2011.

\bibitem{DBLP:conf/soda/Singla18}
Sahil Singla.
\newblock The price of information in combinatorial optimization.
\newblock In Artur Czumaj, editor, {\em Proceedings of the Twenty-Ninth Annual
  {ACM-SIAM} Symposium on Discrete Algorithms, {SODA} 2018, New Orleans, LA,
  USA, January 7-10, 2018}, pages 2523--2532. {SIAM}, 2018.

\bibitem{DBLP:journals/corr/abs-2002-06037}
Zhihao~Gavin Tang, Xiaowei Wu, and Yuhao Zhang.
\newblock A simple 1-1/e approximation for oblivious bipartite matching.
\newblock {\em CoRR}, abs/2002.06037, 2020.

\bibitem{DBLP:conf/stoc/TangW020}
Zhihao~Gavin Tang, Xiaowei Wu, and Yuhao Zhang.
\newblock Towards a better understanding of randomized greedy matching.
\newblock In Konstantin Makarychev, Yury Makarychev, Madhur Tulsiani, Gautam
  Kamath, and Julia Chuzhoy, editors, {\em Proccedings of the 52nd Annual {ACM}
  {SIGACT} Symposium on Theory of Computing, {STOC} 2020, Chicago, IL, USA,
  June 22-26, 2020}, pages 1097--1110. {ACM}, 2020.

\bibitem{DBLP:conf/soda/YamaguchiM18}
Yutaro Yamaguchi and Takanori Maehara.
\newblock Stochastic packing integer programs with few queries.
\newblock In Artur Czumaj, editor, {\em Proceedings of the Twenty-Ninth Annual
  {ACM-SIAM} Symposium on Discrete Algorithms, {SODA} 2018, New Orleans, LA,
  USA, January 7-10, 2018}, pages 293--310. {SIAM}, 2018.

\end{thebibliography}

\newpage

\appendix

\section{Omitted Proofs}\label{sec:proofs}
\subsection{Omitted Proofs of Section 4}
\xless*
\begin{proof}
We show that $\frac{g(x,\ubar)}{x} \le 1$. We have the following.
\begin{align*}
\frac{g(x,\ubar)}{x} = \frac{(e^\ubar-1)(\ubar  -x)}{\ubar (e^\ubar- e^x)} 
\end{align*}
Since $\frac{g(x,\ubar)}{x}$ is decreasing with respect to $x$ (see Claim \ref{clm:gdev}), it gets maximized when $x$ approaches $0$ in which we have
\begin{align*}
\frac{g(x,\ubar)}{x} \le \lim_{x \rightarrow 0} \frac{g(x,\ubar)}{x} =  \frac{(e^\ubar-1)(\ubar )}{\ubar (e^\ubar- 1)} =1 \,. 
\end{align*}
\end{proof}
\nierfj*
\begin{proof}
Vertex $v$ send a proposal by drawing a random permutation from \DIST{$v, \b{x}, \tilde{\b{x}}$}. It then examines the edges in the same order of the permutation and proposes the first realized edge. Therefore, instead of showing that $v$ sends a proposal to $u$ with the probability $\tilde{x}_e$, we can show that if we draw a random permutation $\pi'_v$ from \DIST, and examine the edges in the same order, then the edge $(v,u)$ would be the first realized edge with the probability of $\tilde{x}_e$.

If we draw a random permutation $\pi_v$ from $\mathcal{D}^{\b x}_v$, then for each $e \in E_v$ the probability that $e$ is the first edge that realizes in $\pi_v$ is $x_e$. Consider any fixed permutation $\bar{\pi}_v$, and let $\bar{\pi}'_v$ be a random permutation that \DIST returns when in Line 2 of \DIST $\pi_v=\bar{\pi}_v$. For any edge $e$ let $q_e$ be the probability that $e$ is the first edge that gets realized in $\bar{\pi}_v$, and $q'_e$ be the probability that $e$ is the first edge that gets realized in  $\bar{\pi}'_v$. We then show that $q'_e = q_e \cdot \frac{\tilde{x}_e}{x_e}$. Assuming that this holds for every permutation $\bar{\pi}_v$, it immediately implies the following and proves the claim.
\begin{align*}
\Pr[ \text{ $e$ is the first edge that is realized in $\pi'_v$}] &= \frac{\tilde{x}_e}{x_e} \cdot \Pr[ \text{ $e$ is the first edge that is realized in $\pi_v$}] \\
&= \frac{\tilde{x}_e}{x_e} \cdot x_e = \tilde{x}_e \,.
\end{align*}

We now have to show that for any fixed permutation $\bar{\pi}_v$, we have $q'_e = q_e \cdot \frac{\tilde{x}_e}{x_e}$. If $e \notin \bar{\pi}_v$, then $q_e=0$. In this case $e$ also does not appear in the permutation $\bar{\pi}'_v$, and we have $q'_e=0$ which clearly satisfies our claim. For an edge $e \in \bar{\pi}_v$, the probability that $e$ is the first realized edge is $q_e= p_e \cdot \prod_{e' \in \bar{\pi}_v : \bar{\pi}_v(e') < \bar{\pi}_v(e)} (1-p_{e'})$. This is because $e$ should get realized, however previous edges in the permutation should not get realized.

Now consider the permutation $\bar{\pi}'_v$. In order for an $e$ to be the first edge that gets realized in $\bar{\pi}'_v$, we should satisfy the following conditions.
\begin{itemize}
\item When \DIST reaches $e$, it should add $e$ to the permutation. This happens with probability $\frac{\tilde{x}_e}{x_e}$.
\item When we query $e$ it should gets realized which happens with probability $p_e$.
\item Consider an edge $e'$ such that  $\bar{\pi}_v(e') < \bar{\pi}_v(e)$. Then when \DIST reaches $e'$, either $e'$ is not added to the permutation and \DIST is not ended in Line 7, or either $e'$ is added to the permutation, but it does not get realized when we query it. Note that the probability that $e'$ gets added to the permutation and does not get realized is $(1-p_{e'}) \cdot \frac{\tilde{x}_{e'}}{x_{e'}}$. Also, the probability that $e'$ does not get added to the permutation, and \DIST does not get ended in Line 7 is $(1-p_{e'}) (1- \frac{\tilde{x}_{e'}}{x_{e'}})$. Thus, we satisfy this property for an edge $e'$ with probability
\begin{align*}
(1-p_{e'}) \cdot \frac{\tilde{x}_{e'}}{x_{e'}} + (1-p_{e'}) (1- \frac{\tilde{x}_{e'}}{x_{e'}}) = (1-p_{e'}) \,.
\end{align*}
\end{itemize} 
According to the conditions that we discussed, $q'_e= p_e \cdot \frac{\tilde{x}_e}{x_e} \cdot \prod_{e' \in \bar{\pi}_v : \bar{\pi}_v(e') < \bar{\pi}_v(e)} (1-p_{e'})$. Thus, $q'_e = q_e \cdot \frac{\tilde{x}_e}{x_e}$, and it proves the claim.
\end{proof}

\subsection{Omitted Proofs of Section 5}
\ijerg*
\begin{proof}
Vertex $v$ first draws a random permutation $\pi'_v$ from \DIST. Recall that in \DIST we first draw a random permutation $\pi_v$ from $\mathcal{D}^{\b x}_v$, and then \DIST constructs another random permutation $\pi'_v$ based on it, and returns $\pi'_v$.  Consider any fixed permutation $\bar{\pi}_v$, and let $\bar{\pi}'_v$ be a random permutation that \DIST returns when in Line 2 of \DIST $\pi_v=\bar{\pi}_v$. Let $q$ be the probability that $v$ remains unmatched by \PAM when $e$ is not examined and also $\pi_v = \bar{\pi}_v$. Specifically,
\begin{align*}
q= \Pr [ v \notin \match_1 \ | \ \text{$e$ is not examined} \wedge \pi_v= \bar{\pi}_v ] \,. 
\end{align*}
We then show that $q \ge \frac{1-1/e}{2}$. Since this holds for any fixed permutation $\bar{\pi}_v$, it immediately implies the claim. We first define the following definition.

\begin{definition}
Let $\bar{\pi}_v$ be a fixed permutation, and let $\bar{\pi}'_v$ be a random permutation that \DIST returns when in Line 2 of \DIST $\pi_v=\bar{\pi}_v$. Suppose that vertex $v$ queries the edges in $\bar{\pi}'_v$ in the same order until we find a realized edge. Then, we say that $\bar{\pi}_v$ ``\textit{terminates}''  when it reaches an edge $e$ if either $e$ is queried and got realized or if \DIST gets ended in Line 7 when it reaches $e$.
\end{definition}

We now partition the event that $e$ is not examined and $\pi_v= \bar{\pi}_v$ into following disjoint events, and we show that for each of them the probability that $v$ remains unmatched is at least $\frac{1-1/e}{2}$ which shows that $q\ge \frac{1-1/e}{2}$ and proves the lemma.
\begin{itemize}
\item The first case that $e$ remains unexamined is when $ e \notin \bar{\pi}_v$. In this case if $\bar{\pi}_v$ does not terminate, then $v$ does not find any realized edge. Thus, it remains unmatched and we are done. Otherwise, we can assume that $\bar{\pi}_v$ terminates when it reaches one of the edges, let's say $e'$. When \DIST reaches $e'$, the probability that $e'$ gets added to the permutation and it gets realized is $p_{e'} \cdot \frac{\tilde{x}_{e'}}{x_{e'}}$. It also terminates with the probability $p_{e'}(1-\frac{\tilde{x}_{e'}}{x_{e'}})$ because \DIST ends in line 7. Thus, if we condition that $\bar{\pi}_v$ got terminated when it reached $e'$, the probability that $e'$ is proposed by $v$ is
\begin{align*}
\frac{p_{e'} \cdot \frac{\tilde{x}_{e'}}{x_{e'}}}{p_{e'} \cdot \frac{\tilde{x}_{e'}}{x_{e'}}+p_{e'}(1-\frac{\tilde{x}_{e'}}{x_{e'}})} = \frac{\tilde{x}_{e'}}{x_{e'}} \,.
\end{align*}
When $e'=(v,u')$ gets proposed, as we discussed in the proof of lemma \ref{lem:prm1}, the probability that $e'$ joins the matching is maximized when $u'$ only has one other edge, let's say $e''$, and $x_{e''} = 1- x_{e'}$. In this case $e'$ joins the matching with the probability of at least $ \big (1/2 + 1/2 (1-g(x_{e''},1) \big)$.  Thus, the probability that $v$ gets matched when $\bar{\pi}_v$ gets terminated when it reaches $e'$ is at most
\begin{align*}
&\frac{\tilde{x}_{e'}}{x_{e'}} \big (1/2 + 1/2 (1-g(x_{e''},1) \big) \\
& = \frac{g(x_{e'},1)}{x_{e'}} \big (1/2 + 1/2 (1-g(1-x_{e'},1) \big)\,.
\end{align*}
It is easy to verify that function $g$ always guarantees that the inequality above is at most $1/2+1/(2e)$. Thus, $v$ remains unmatched with the probability of at least $1/2-1/(2e)$. 
\item The second case is when $e \in \bar{\pi}_v$, however $\bar{\pi}_v$ gets terminated before reaching $e$.  It is clear that in this case $e$ does not get examined. Let $T= \{e' \in \bar{\pi}_v : \bar{\pi}_v(e') \le \bar{\pi}_v(e) \}$ be the set of edges that are before $e$ in the permutation $\bar{\pi}_v$. Then, we can say that $\bar{\pi}_v$ got terminated reaching one of the edges in $T$. Similar to the previous case, for every edge $e' \in T$, the probability that $v$ remains unmatched when it terminates  reaching $e'$ is at least $1/2-1/(2e)$. 
\item The third case is when $\bar{\pi}_v$ gets terminated when it reaches $e$. Since we know that $e$ is not examined, then \DIST must be ended in Line 7 when it reaches $e$. In this case $v$ does not propose to any vertex, and always remains unmatched.
\item The last case is when $\bar{\pi}_v$ does not get terminated before or when it reaches $e$, and also it does not add $e$ to the permutation. In this case all of the edges that are before $e$ in the permutation $\bar{\pi}_v$ should not get realized. Also, $v$ does not examine $e$ as it is not added to the permutation. Let $\bar{\pi}''_v$ be the suffix of $\bar{\pi}_v$ formed by the edges that are after $e$ in the permutation $\bar{\pi}_v$. It is then clear that if we condition on this case, proposals of $v$ have the same distribution as when it runs on $\pi_v = \bar{\pi}''_v$. This reduces this case to the first case, and shows $v$ remains unmatched with the probability of $1/2-1/(2e)$.     
\end{itemize}
We showed that for all cases $v$ remains unmatched with the probability  $\frac{1-1/e}{2}$ which proves the lemma. 
\end{proof}

\clmvpnotu*
\begin{proof}
First if we $v'=v$, then $Pr[v' \not \curly u\ |\ \text{$e$ is not examined}]=1$ and the claim clearly holds. Thus, the remaining case is when $v' \neq v$. In this case since the proposal of vertices in $A$ are independent, the event $v' \not \curly u$ is independent from the event that $e$ is not examined. Therefore,
\begin{align*}
\Pr[ v' \not \curly u \ | \ X_{v,u'} \ \wedge \ \text{$e$ is not examined}] = \Pr[ v' \not \curly u \ | \ X_{v,u'}] \,.
\end{align*}
Recall that $X_{v,u'}$ is the event that at least one of the vertices in $\pi_{<v}$ proposes to $u'$. We similarly, use $\bar{X}_{v,u'}$ to denote the compliment of this event where none of vertices in $\pi_{<v}$ proposes to $u'$.  We show that $\Pr[ v' \not \curly u] \le \Pr[ v' \not \curly u \ | \ X_{v,u'}]$ by showing that $\Pr[ v' \not \curly u] \ge \Pr[ v' \not \curly u \ | \ \bar{X}_{v,u'}]$. Note that if $v' \in \pi_{<v}$, the event $\bar{X}_{v,u'}$ from the perspective of $v'$ is equivalent to say that $v'$ has not proposed to $u'$, i.e., $v' \not \curly u'$. On the other hand, if $v' \notin \pi_{<v}$ then the event $\bar{X}_{v,u'}$ does not tell us anything about the proposal of vertex $v'$. Thus, we can calculate the probability $\Pr[ v' \not \curly u \ | \ \bar{X}_{v,u'} \ ]$ as follows. 
\begin{align}
\Pr[ v' \not \curly u \ | \ \bar{X}_{v,u'}] &= \Pr[ v' \not \curly u\ |\ v' \in \pi_{<v} \wedge \bar{X}_{v,u'}] \Pr[v' \in \pi_{<v} \ | \ \bar{X}_{v,u'}] \nonumber \\
&+ \Pr[ v' \not \curly u\ |\ v' \notin \pi_{<v} \wedge \bar{X}_{v,u'}] \Pr[v' \notin \pi_{<v} \ | \ \bar{X}_{v,u'}] \nonumber \\
&= \Pr[ v' \not \curly u\ |\ v' \not \curly u'] \Pr[v' \in \pi_{<v} \ | \ \bar{X}_{v,u'}] \nonumber\\
\label{eq:vnotubarx}
&+ \Pr[ v' \not \curly u] \Pr[v' \notin \pi_{<v} \ | \ \bar{X}_{v,u'}] \,.
\end{align} 
Since vertex $v'$ proposes to at most one of the vertices in $B$, we have
\begin{align*}
\Pr[ v' \not \curly u\ |\ v' \not \curly u'] &= 1- \Pr[ v' \curly u\ |\ v' \not \curly u'] \\&= 1- \frac{\Pr[v' \curly u]}{\sum_{u'' \in B} \Pr [ v' \curly u'' ]- \Pr[v' \curly u']} \\&\ge  1- \Pr[v' \curly u] \\&= \Pr[v' \not \curly u] 
\end{align*}
Using the inequality above and (\ref{eq:vnotubarx}), we have
\begin{align*}
\Pr[ v' \not \curly u \ | \ \bar{X}_{v,u'}] &= \Pr[ v' \not \curly u\ |\ v' \not \curly u'] \Pr[v' \in \pi_{<v} \ | \ \bar{X}_{v,u'}]\\
&+ \Pr[ v' \not \curly u] \Pr[v' \notin \pi_{<v} \ | \ \bar{X}_{v,u'}]  \\
&\ge \Pr[ v' \not \curly u] \Pr[v' \in \pi_{<v} \ | \ \bar{X}_{v,u'}]\\
&+ \Pr[ v' \not \curly u] \Pr[v' \notin \pi_{<v} \ | \ \bar{X}_{v,u'}] \\
& = \Pr[ v' \not \curly u] \,.
\end{align*}
This show that $\Pr[ v' \not \curly u] \ge \Pr[ v' \not \curly u \ | \ \bar{X}_{v,u'}]$. As a direct result, $\Pr[ v' \not \curly u] \le \Pr[ v' \not \curly u \ | \ X_{v,u'}]$ which proves the claim.
\end{proof}

\subsection{Omitted Proofs of Section 6}
\gxderiv*
\begin{proof}
By taking the derivative of $\frac{g(x,\ubar)}{x}$ w.r.t. $x$ we get the following
\begin{align*}
\frac{\odv}{\odv x} \frac{g(x,\ubar)}{x} = 
\frac{\odv}{\odv x} \frac{(e^\ubar-1)(\ubar  -x)}{\ubar (e^\ubar- e^x)}  = - \frac{(e^{\ubar}-1)(e^{\ubar}+ e^x (x-1-\ubar))}{\ubar (e^{\ubar}-e^{x})^2} \,.
\end{align*}
We show that the derivative above is negative for $0 \le x < \ubar$ which implies that the function is strictly decreasing. For an $0 \le x < \ubar$ we always have $e^{x}(x-1-\ubar) > -e^{\ubar}$. This implies that $(e^{\ubar}+ e^x (x-1-\ubar)) >0$ for $0 \le x < \ubar$, and the derivative taken above is negative $0 \le x < \ubar$.  
\end{proof}

\minsum*
\begin{proof}
We first show the following claim.
\begin{claim}
Let $a,b \in \mathbb{R}_+^2$ be real numbers such that $a+b \le 1$ and $a \ge b$. Also, let $\delta$ a any non-negative real number, i.e., $\delta \ge 0$. Then,
\begin{align*}
(1-(a+\delta)) (1- (b-\delta)) \le (1-a)(1-b) \,.
\end{align*} 
\end{claim}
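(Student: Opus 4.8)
The plan is to prove the claim by a one-line algebraic expansion and a sign check on the resulting quadratic in $\delta$. Introduce the shorthand $u = 1-a$ and $v = 1-b$, so that the left-hand side is exactly $(u-\delta)(v+\delta)$ and the right-hand side is $uv = (1-a)(1-b)$. Expanding,
\begin{align*}
(1-(a+\delta))(1-(b-\delta)) &= (u-\delta)(v+\delta) \\
&= uv + \delta(u-v) - \delta^2 \\
&= (1-a)(1-b) + \delta(b-a) - \delta^2,
\end{align*}
where in the last line I used $u-v = (1-a)-(1-b) = b-a$. Thus the asserted inequality is equivalent to $\delta(b-a) - \delta^2 \le 0$.

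This is immediate: by the hypothesis $a \ge b$ we have $b-a \le 0$, and since $\delta \ge 0$ the term $\delta(b-a)$ is non-positive; the term $-\delta^2$ is non-positive as well. Adding them gives $\delta(b-a)-\delta^2 \le 0$, which is exactly what we needed. I would remark that the hypothesis $a+b \le 1$ plays no role in this particular step (it is there to keep the factors $1-a_i$ non-negative in the surrounding argument for Lemma~\ref{lem:minsum}), and indeed the inequality is a polynomial identity comparison that holds regardless of the signs of the individual factors $1-(a+\delta)$ and $1-(b-\delta)$.

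There is essentially no obstacle here; the only point requiring a moment's care is the bookkeeping of which variable is the larger one, so that after the shift — moving mass $\delta$ from the smaller value $b$ onto the larger value $a$, i.e.\ making the pair more unequal — the coefficient $b-a$ of the linear term comes out non-positive. This is precisely the ``making two coordinates more extreme can only shrink the product'' step that the proof of Lemma~\ref{lem:minsum} then iterates to drive every $a_i$ to the boundary value $c$.
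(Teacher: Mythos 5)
Your proposal is correct and follows essentially the same route as the paper: expand the product and observe that the excess term $\delta(b-a)-\delta^2$ (the paper writes it as $\delta(b-a-\delta)$) is non-positive because $a \ge b$ and $\delta \ge 0$. Your side remark that $a+b \le 1$ is not actually needed for this step is accurate; the paper's proof does not use it either.
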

\begin{proof}
By expanding the left hand side of the desired inequality we get
\begin{align*}
(1-(a+\delta)) (1- (b-\delta)) &= 1- a - b + (a+\delta) (b-\delta) \\
&= 1- a - b + a \cdot b +\delta( b-a - \delta) \,.
\end{align*}
Since $a \ge b$ and $\delta \ge 0$, we have $b-a-\delta \le 0$. Thus,
\begin{align*}
(1-(a+\delta)) (1- (b-\delta)) &= 1- a - b + a \cdot b +\delta( b-a - \delta) \\
&\le  1- a - b + a \cdot b = (1-a)(1-b) \,,
\end{align*}
which proves the claim.
\end{proof}

Let $a_1, a_2, \cdots, a_k \in [0,c]^k$ be $k$ real numbers in $[0,c]^k$. Suppose that there are two numbers $a_i$ and $a_{i'}$ in this sequence such that $0<a_i \le a_{i'} <c$. Then we create  a new sequence $a'_1, a'_2, \cdots, a'_k$ where $a'_j = a_j$ if $j \notin \{i,i'\}$, $a'_i=\max\{0,a_i+ a_{i'}- c\}$ and $a'_{i'}=\min\{c, a_i+a_{i'} \}$. It is clear that $\sum_{i=1}^k a_i= \sum_{i=1}^k a'_i$. Then by the claim above we have  $\prod_{i=1}^k (1- a_i) \ge \prod_{i=1}^k (1- a'_i)$. Therefore the term $\prod_{i=1}^k (1- a_i)$ is minimized when at most one of the $a_i$'s are not equal to $c$ or $0$. In other words,
\begin{align*}
\prod_{i=1}^k (1- a_i) \ge (1-c)^{q} \cdot (1- (s- c \cdot q)) \,,
\end{align*} 
where $q = \lfloor \frac{s}{c} \rfloor$. We then claim that $1-(s- c \cdot q) \ge (1-c)^{\frac{s}{c}-q}$. First we show how this completes the proof the lemma. Assuming that aforementioned inequality holds, we have 
\begin{align*}
\prod_{i=1}^k (1- a_i) &\ge (1-c)^{q} \cdot (1- (s- c \cdot q)) \\
&\ge (1-c)^{q} \cdot (1-c)^{\frac{s}{c}-q} \\
&=(1-c)^{\frac{s}{c}} \,,
\end{align*}
which is the desired bound in the lemma. Therefore, it is remained to show that $1-(s- c \cdot q) \ge (1-c)^{\frac{s}{c}-q}$. To prove this we use the following inequality: For any $ 0 \le x \le 1$ and $0 \le p \le 1$, we have $(1-x)^p \le 1- p \cdot x$.  Note that $\frac{s}{c}-q = \frac{s}{c}- \lfloor \frac{s}{c} \rfloor  < c \le 1$. Therefore, using the mentioned inequality we can say that $(1-c)^{\frac{s}{c}-q} \le 1- (c \cdot (\frac{s}{c}-q)) = 1-(s- c \cdot q)$ which completes the proof.
\end{proof}

\gmovertau*
\begin{proof}
We show that $\gm(\tau) < \tau$ for $3/4 \le \tau \le 1$ which immediately implies the claim. By the definition of $\gm (.)$ function,  $\gm(\tau)$ is the smallest real number $x$ such that $\frac{g(x',1)}{x'} \le \tau$ for any $x' \ge x$. Let $y=0.74$, we show that $\frac{g(y,1)}{y} \le 3/4$. This and the fact that $\frac{g(x,1)}{x}$ is a decreasing function implies that  $\gm(\tau) \le y < \tau$ for $3/4 \le \tau \le 1$ and proves the claim.

It is now remained to show that $\frac{g(y,1)}{y} \le 3/4$. By substituting function $g$ we get
\begin{align*}
\frac{g(y,1)}{y} &= \frac{(1-1/e)(1-y)}{1-e^{y-1}} \approx 0.718 \le 3/4 & \text{$y$=0.74}
\end{align*}
\end{proof}

\rholessone*
\begin{proof}
Recall that  $\rho(\tau)$ is the largest positive real value $x \in (0,1]$ such that $x \le 1-(1-\frac{\gm(\tau)}{\tau})^{\big(\frac{x}{\gm(\tau)}\big)}$. Thus, to show that $\rho(\tau) <1 $ it is enough to show that $1-(1-\frac{\gm(\tau)}{\tau})^{\big(\frac{1}{\gm(\tau)}\big)} < 1$. For $3/4 \le \tau \le 1$, by Claim \ref{clm:gmovertau} we have $\frac{\gm(\tau)}{\tau} <1$. Also, $\gm (\tau) >0 $ for $\tau <1$. Therefore, $(1-\frac{\gm(\tau)}{\tau})^{\big(\frac{1}{\gm(\tau)}\big)} >0 $, and $1-(1-\frac{\gm(\tau)}{\tau})^{\big(\frac{1}{\gm(\tau)}\big)} < 1$ which proves the claim.
\end{proof}

\rhofinal*
\begin{proof}
First we claim that for $0.31719 \le \gm (\tau) \le 0.3172$ for $\tau=0.8723$. This can be easily verified as $\frac{g(0.3172,1)}{0.3172} \approx 0.872296 < \tau$ and $\frac{g(0.31719,1)}{0.31719} \approx 0.8723003 > \tau$. Recall that  $\rho(\tau)$ is the largest positive real value $x \in (0,1]$ such that $x \le 1-(1-\frac{\gm(\tau)}{\tau})^{\big(\frac{x}{\gm(\tau)}\big)}$. In other words, it is the largest $x$ such that $x - 1 +(1-\frac{\gm(\tau)}{\tau})^{\big(\frac{x}{\gm(\tau)}\big)} \le 0$. Using the lower and upper bounds that we derived for $\gm$, and by setting $x=0.5303$ we get
\begin{align*}
&x - 1 +(1-\frac{\gm(\tau)}{\tau})^{\big(\frac{x}{\gm(\tau)}\big)}\\
 &\ge x-1+ (1-\frac{0.3172}{\tau})^{\big(\frac{x}{0.31719}\big)}\\
&\approx 6.3 \cdot 10^{-7} >0  \,. & \text{By setting $x=0.5303$ and $\tau=0.8723$}
\end{align*}
Therefore, for $x=0.5303$ we have $x - 1 +(1-\frac{\gm(\tau)}{\tau})^{\big(\frac{x}{\gm(\tau)}\big)} >0 $. We claim that for any $x'>x$ we also have $x' - 1 +(1-\frac{\gm(\tau)}{\tau})^{\big(\frac{x'}{\gm(\tau)}\big)} >0$ which immediately implies that $\rho(\tau) \le x =0.5303$ and proves the claim. To prove this we take the derivative of $x - 1 +(1-\frac{\gm(\tau)}{\tau})^{\big(\frac{x}{\gm(\tau)}\big)}$ and show it is always positive for $x \ge 0.5303$. By taking the derivative of the discussed function we get
\begin{align*}
&\frac{\odv}{\odv x} \bigg(x - 1 +(1-\frac{\gm(\tau)}{\tau})^{\big(\frac{x}{\gm(\tau)}\big)} \bigg) \\
& = 1+ \frac{(1-\frac{\gm(\tau)}{\tau})^{\big(\frac{x}{\gm(\tau)}\big)} \cdot \ln (1- \frac{\gm(\tau)}{\tau})}{\gm(\tau)} \\
& \ge 1+ \frac{(1-\frac{0.31719}{\tau})^{\big(\frac{x}{0.3172}\big)} \cdot \ln (1- \frac{0.31719}{\tau})}{0.31719} & \text{Since $0.31719 \le \gm (\tau) \le 0.3172$} \\
& \ge 1+ \frac{(1-\frac{0.31719}{\tau})^{\big(\frac{0.5303}{0.3172}\big)} \cdot \ln (1- \frac{0.31719}{\tau})}{0.31719} & \text{The second term is increasing w.r.t $x$ and $x \ge 0.5303$} \\
&\approx 1- 0.6693 >0 \,. 
\end{align*} 
This completes the proof of the claim.
\end{proof}

\section{The Transformation Function $g(x_e, \ubar)$}\label{sec:transform}
In this section, we discuss how the transformation function $g$ has been derived. We show that function $g$ is the only function that satisfies the following two properties.

\begin{enumerate}
\item Function $g$ should not change the $x_e$ when $x_e$ is arbitrary small, i.e., for every $\sigma \in [0,1]$, we should have $\lim_{x_e \rightarrow 0} g(x_e, \ubar) = x_e$.
\item Consider an edge $e=(v,u)$, and let $e'_1, e'_2, \cdots, e'_k$ be other edges incident to $u$. As we discussed we always add dummy edges to the vertices $B$, such that $x_u = \ubar$ for every $u \in B$ where $x_u \leftarrow \sum_{e \ni u} x_e$. Another crucial property that we want to satisfy is that when $k$ approaches infinity and $x_{e'_i}$ is arbitrary small for each $e'_i$, then \PAM algorithm matches $e$ exactly with probability $\frac{(1-e^{-\ubar})x_e}{\ubar}$. Let $\match_1=$\PAM{$G, \b{x}, \sigma$} and $x_{e'_i} = \frac{\ubar - x_e}{k}$, then we want to satisfy the following property.
\begin{align*}
\lim_{k \rightarrow \infty} \Pr [ e \in \match_1 ] = \frac{(1-e^{-\ubar})x_e}{\ubar} \,.
\end{align*}
\end{enumerate}

Now we show that 
$$g(x, \ubar) = \frac{(e^\ubar-1)(\ubar  -x) x}{\ubar (e^\ubar- e^x)}$$
is the unique function that satisfies these two properties. 

Consider second desired property and an edge $e=(v,u)$. Then \PAM matches $e$ whenever it gets proposed and none of the edges $e'_i$ that are before $e$ in $\pi$ gets proposed. Let $\delta= \ubar - x_e$. Then in the second property for every other edge $e'_i$ incident to $u$ we have $x_{e'_i}= \frac{\delta}{k}$.  We then have,
\begin{align*}
\Pr[e \in \match_1] &= \Pr[e \text{ is proposed}]\sum_{i=0}^k\Pr[\text{There are } i \text{ edges before } e \text{ in  } \pi \wedge \text{none of them are proposed}]
\\&= g(x_e)\times \left(\sum_{i=0}^{k}\frac{1}{k+1} \left(1-g\left(\delta/k\right)\right)^i\right) \,.
\end{align*}

The probability of $e \in \match_1$ when $k \rightarrow \infty$ is then equal to
\begin{align*}
\lim_{k \rightarrow \infty} \Pr[e \in \match_1]= g(x_e, \ubar)\times \lim_{k\rightarrow \infty}\left(\sum_{i=0}^{k}\frac{1}{k+1} \left(1-g\left(\delta/k\right)\right)^i\right) \,.
\end{align*}

Note that when $k$ approaches infinity, $\delta/k$ approaches $0$. Therefore, by the first desired property of function $g$ we should have $\lim_{k \rightarrow \infty} g(\delta/k, \ubar) = \delta/k$. Thus,
\begin{align*}
&\lim_{k \rightarrow \infty} \Pr[e \in \match_1]= g(x_e,\ubar)\times \lim_{k\rightarrow \infty}\left(\sum_{i=0}^{k}\frac{1}{k+1} \left(1-\left(\delta/k\right)\right)^i\right)
\\& = \frac{g(x_e, \ubar)}{k}\times \lim_{k \rightarrow \infty}\left(\sum_{i=0}^{k} \left(1-\left( \delta/k \right)\right)^i\right) 
\\&= g(x_e, \ubar)\times \lim_{k \rightarrow \infty} \left( \sum_{i=0}^{k} \frac{e^{-i\delta/k}}{k} \right)
\\ & =  g(x_e, \ubar)\times \frac{1-e^{-\delta}}{\delta} \,. & \text{By (\ref{eq:integrate})}
\end{align*}

Recall that the second desired property of function $g$ is $\lim_{k \rightarrow \infty} \Pr [ e \in \match_1 ] = \frac{(1-e^{-\ubar})x_e}{\ubar}$. According to the probability above, in order to satisfy this property we should have
\begin{align*}
&g(x_e, \ubar)= \frac{\delta (1- e^{-\sigma}) x_e}{\sigma (1-e^{-\delta})} \\
&=  \frac{(\ubar - x_e) (1- e^{-\sigma}) x_e}{\sigma (1-e^{x_e - \ubar})} \\
&=  \frac{(e^\ubar-1)(\ubar  -x_e) x_e}{\ubar (e^\ubar- e^{x_e})} \,.
\end{align*}

\section{The Price of Information (PoI)}
\label{sec:poi}
In this section, we discuss how our techniques can be used to beat $(1-1/e)$ approximation for the matching constraints in the price of information (PoI) model. In this problem, we are given a set of edges $E$ and a feasibility system $\mathcal{F} \subseteq 2^E$. Here we assume that the feasibility set $\mathcal{F}$ is the set of all valid matchings in $E$. For each edge $e \in E$, we are also given a search cost $c_e \in \mathbb{R}^+$ and a distribution $F_e$ for the value of the edge. An algorithm can query an edge to realize its valuation $v_e$ from the distribution $F_e$, the algorithm then has to pay the cost $c_e$.  Let $S$ be the set of edges queried by the algorithm then the utility of an algorithm is equal to  $\max_{\match \in S: \match \in F} \sum_{e \in \match} v_e- \sum_{e \in S} c_e$. The goal is to maximize the expected utility.  

In comparison to the query commit model, the POI problem has search costs, however it does not require commitment to the queried edges. Gamlath et al. ~\cite{DBLP:conf/soda/GamlathKS19} provides a method to reduce the matchings in PoI to the stochastic matching problem with query commitment. Their method finds a threshold $\chi_e$ for each edge, and then changes the weight of every edge to $\kappa_e=\min\{v_e, \chi_e\}$. 

Without loss of generality, assume that the $\kappa_e$ is discrete. Let $K_e$ be the set of all possible values of $\kappa_e$, their approach replaces every edge with an edge-value pair $(e,v)$ where $v \in K_e$, and they derive the LP below.

\setcounter{equation}{0}
\vspace{3mm}
\settowidth{\LPlhbox}{(LP-Match)}%
\noindent%
\hspace*{\fill}%
\begin{minipage}{\linewidth-2cm}
 \begin{align}
\nonumber \max_{(e,v) \in E_{\text{all}}} x_{e,v} \cdot v	\,& \\
\nonumber \textrm{s.t.} \qquad & \sum_{(e,v) \in F} x_{e,v} \le f(F)& \forall u \in A \cup B,  \forall F \subseteq E_u  \\
\nonumber &x_{e,v} \ge 0& \forall (e,v) \in E_{\text{all}}
\end{align}
\end{minipage}
Where $f(F)$ is the probability that $\kappa_e = v$ for at least one edge-pair value $(e,v) \in F$. 
They show that there is a distribution $\mathcal{D}$ for every vertex in $B$ such that it picks an edge $e$ that has the value $v$ with probability $x^*_{e,v}$.

Consider the analogous of our algorithm using their reduction. It can be verified that our approach beats the $(1-1/e)$ approximation by a constant if $|K_e|=O(1)$, i.e., when each discrete distribution takes constant number of possible values. Apply our transformation function to get modified solution $\tilde{x}_{e,v}= g(x_{e,v}, \sigma)$, and let $\tau$ be a constant very close to $1$. Consider running our two-round algorithm for this modified solution. It beats the $(1-1/e)$ approximation if at least a constant fraction of the optimal solution is available at the second round. As discussed the edges that do not proceed to the second round are the edges with almost $\tilde{x}_{e,v}/p_{e,v}=1$. Suppose that almost of the contribution of the optimal solution is among these edges. Now suppose that we run our one-round algorithm for only these edges. The algorithm beats $(1-1/e)$ approximation if for every vertex in $B$ the summation of its incident edges $\sum_{(e,v)} x_{e,v}$ is less than one. For the stochastic matching problem this was true because of the constraint \eqref{cons:437489} of LP. The same fact holds if $|K_e|=O(1)$ beating the $(1-1/e)$ approximation. However, the same claim does not hold when $K_e$ is very large. For example suppose that we have a graph with a single edge $e$, and the $\kappa_{e}$ is uniform over $[0,c]$ where $c$ is a very large constant. Then, in the discretization of the distribution we get a lot of small discrete parts. Then, in the solution of LP $x_{e,v}$ would be small values bounded by $p_{e,v}$. Since $x_{e,v}$ is small for these edges, our transformation function almost does not change them and we have $\tilde{x}_{e,v} \approx x_{e,v}$. So the value of $\tilde{x}_{e,v}/p_{e,v}$ is larger than $\tau$ for these edges. However, the summation $\sum_{(e,v)} x_{e,v}$ can be also $1$, since in the distribution $\kappa_e$ it is guaranteed that one of the edge-value pairs gets realized.  
\vspace{5mm}

\end{document}